\documentclass[11pt]{article}

\usepackage[margin=1.25in]{geometry}
\usepackage{amsmath,appendix}
\usepackage{appendix}
\usepackage{amsthm}
\usepackage{amssymb}
\usepackage{setspace}
\usepackage{color}
\usepackage{float}
\usepackage{dsfont}
\usepackage{enumerate}
\usepackage{hyperref}
\usepackage{graphicx}
\usepackage{sidecap}
\usepackage{subfigure}
\usepackage{algorithm}
\usepackage{algorithmic}
\usepackage{comment}
\newcommand{\commentout}[1]{}
\usepackage[utf8]{inputenc}
\usepackage{tikz}
\usetikzlibrary{plotmarks}
\usepackage{wrapfig}

\newtheorem{proposition}{Proposition}

\newtheorem{theorem}{Theorem}

\newtheorem{lemma}{Lemma}

\newtheorem{assumption}{Assumption}

\numberwithin{equation}{section}
\numberwithin{figure}{section}


\renewcommand{\subset}{\subseteq}
\renewcommand{\hat}{\widehat}

\renewcommand{\epsilon}{\varepsilon}

\def\supp{\text{supp}}
\def\OmegaRF{\text{SRF}}
\def\md{{\rm md}}

\newcommand{\calH}{\mathcal{H}}

\newcommand{\calM}{\mathcal{M}}
\newcommand{\diag}{\text{diag}}

\newcommand{\beq}{\begin{equation}}
\newcommand{\eeq}{\end{equation}}

\newcommand{\vertiii}[1]{{\left\vert\kern-0.25ex\left\vert\kern-0.25ex\left\vert #1 
    \right\vert\kern-0.25ex\right\vert\kern-0.25ex\right\vert}}
    
\def\({\Big(}
\def\){\Big)}
\def\C{\mathbb{C}}

\def\R{\mathbb{R}}
\def\T{\mathbb{T}}
\def\Z{\mathbb{Z}}


\title{Super-resolution limit of the ESPRIT algorithm}

\author{Weilin Li\thanks{Courant Institute of Mathematical Sciences. Email: weilinli@cims.nyu.edu}
\and Wenjing Liao\thanks{Georgia Institute of Technology. Email: wliao60@gatech.edu}
\and Albert Fannjiang\thanks{University of California, Davis. Email: fannjiang@math.ucdavis.edu}}

\begin{document}
	
\maketitle

\begin{abstract}
	The problem of imaging point objects can be formulated as estimation of an unknown atomic measure from its $M+1$ consecutive noisy Fourier coefficients. The standard resolution of this inverse problem is $1/M$ and super-resolution refers to the capability of resolving atoms at a higher resolution. When any two atoms are less than $1/M$ apart, this recovery problem is highly challenging and many existing algorithms either cannot deal with this situation or require restrictive assumptions on the sign of the measure. ESPRIT
	is an efficient method that does not depend on the sign of the measure. This paper provides an explicit error bound on the support matching distance of ESPRIT in terms of the minimum singular value of Vandermonde matrices. 
	When the support consists of multiple well-separated clumps and noise is sufficiently small,  
	 the support error by ESPRIT scales like ${\rm SRF}^{2\lambda+2} \times {\rm Noise}$, where the Super-Resolution Factor (${\rm SRF}$) governs the difficulty of the problem and $\lambda$ is the cardinality of the largest clump. 
	 {If the support contains one clump of closely spaced atoms, the min-max error is ${\rm SRF}^{2\lambda+2} \times {\rm Noise}/M$. Our error bound matches the min-max rate up to a factor of $M$ in the small noise regime. Our results therefore establishes the near-optimality of ESPRIT,} and
	  our theory is validated by numerical experiments.

	\vspace{1em} 
	\noindent {\bf Keywords:} Super-resolution, subspace methods, ESPRIT, stability, uncertainty principle
\end{abstract}

\section{Introduction}

\subsection{Background and motivation}

Many imaging problems involve detection of point objects from Fourier measurements. Such inverse problems arise in many interesting applications in imaging and signal processing, including Direction-Of-Arrival (DOA) estimation \cite{krim1996array,schmidt1986multiple}, inverse source and inverse scattering \cite{fannjiang2010remote,fannjiang2015compressive,fannjiang2010compressive}, and time series analysis \cite{stoica1997introduction}. 
The problem can be formulated as spectral estimation - estimating an unknown discrete measure $\mu$ consisting of a collection of Dirac delta functions, from its noisy low-frequency Fourier coefficients.

The first solution to spectral estimation can be traced back to Prony \cite{prony1795essai}.   Unfortunately, the Prony's method is numerically unstable and numerous modifications have been attempted to improve  its  numerical  behavior. In the signal processing community, a class of subspace methods achieved major breakthroughs for the DOA estimation. Important representative subspace methods are MUSIC (MUltiple SIgnal Classification) \cite{schmidt1986multiple}, ESPRIT (Estimation of Signal Parameters via Rotation Invariance Techniques) \cite{kailath1989esprit}, and the matrix pencil method \cite{hua1990matrixpencil}. MUSIC was one of the first robust methods that gives a high-resolution recovery but its computational cost is high. This drawback motivated the development of more efficient algorithms such as ESPRIT and the matrix pencil method. These methods have been widely used in applications due to their high-resolution recovery -- they are capable of resolving fine details in $\mu$ \cite{krim1996array}.

A central interest on the mathematical theory of super-resolution is to understand how to stably estimate $\mu$ when there are closely spaced atoms in $\mu$. Let $\Delta$ be the minimum separation of $\mu$, which is defined as the distance between the two closest atoms in the support of $\mu$. Suppose $M+1$ consecutive noisy Fourier coefficients of $\mu$ are collected. 
The standard resolution of this inverse problem is $1/M$, which is the threshold predicted by the Heisenberg uncertainty principle. Super-resolution estimation refers to the case where $\Delta$ is significantly smaller than $1/M$. In this situation the recovery is very sensitive to noise.

In recent years, the theory of super-resolution has gained considerable attention partly due to the invention of a new family of convex minimization methods for this problem, see \cite{candes2014towards,candes2013super,tang2013offgrid,azais2015spike,duval2015exact,li2017elementary}. While they are successful when $\Delta\geq C/M$ for a reasonably small $C>1$, they can potentially fail when $\Delta\leq 1/M$, even in the noiseless regime. For these convex methods to succeed when $\Delta\leq 1/M$, one requires that $\mu$ is non-negative \cite{morgenshtern2016super,schiebinger2017superresolution}, or more generally, the sign of its atoms satisfies certain algebraic criteria \cite{benedetto2018super}. Hence, it appears that an entirely different approach is required to deal with the case of closely spaced atoms with arbitrary complex phases that are pertinent to many applications.

Subspace methods like MUSIC and ESPRIT are considerably different from the aforementioned convex approaches. First, they do not involve convex optimization. Second, they provide exact recovery when there is no noise, regardless of the location of the atoms, as long as the number of measurements is at least twice the number of atoms. Third, numerical evidence has demonstrated that they can accurately estimate $\mu$ with arbitrarily complex phases, even when $\Delta$ is significantly smaller than $1/M$, provided that the noise level is sufficiently small. In other words, MUSIC and ESPRIT have super-resolution capabilities, regardless of the sign of $\mu$.

An interesting question is to quantify the resolution limit of MUSIC and ESPRIT -- conditions on $\mu$ and the noise level for which they can recover $\mu$ up to a prescribed error.
The answer is not straightforward, as simple numerical experiments show that the stability of MUSIC and ESPRIT heavily depends on how the support of $\mu$ is arranged. In our earlier works \cite{li2017stable,li2019music}, we introduced a separated clumps model to allow for atoms clustered in far apart sets, and proved accurate estimates on the minimum singular values of the Vandermonde matrices with nodes satisfying this separated clumps model. The super-resolution limit of MUSIC was studied in \cite{li2019music}.

This paper focuses on the robustness of ESPRIT. Although this method was invented over a quarter century ago, an accurate analysis of its super-resolution limit has been elusive. 
One main complication is that, one must estimate the minimum singular value of two structured matrices that appear in ESPRIT. One is a rectangular Vandermonde matrix with nodes on the unit circle (which we will denote by $\Phi_M$) and the other one is constructed as part of the algorithm (which we will denote by $U_0$). The minimum singular value of $\Phi_M$ was carefully studied in \cite{li2017stable,batenkov2018conditioning,kunis2018condition}. The conditioning of $U_0$ was not previously analyzed. In the process of studying these matrices, we discover that ESPRIT implicitly leverages an uncertainty principle for non-harmonic Fourier series. This connection has not been previously discovered, and is the key piece that allows us to provide an accurate analysis for the super-resolution limit of ESPRIT. 

\subsection{Contributions and outline}

In Section \ref{secesprit}, we review the ESPRIT algorithm and introduce the necessary notation. ESPRIT has been empirically observed to be robust to noise and is capable of super-resolving atoms with arbitrary spacing and phases, provided that the noise is sufficiently small. This paper rigorously derives the error bound of ESPRIT, and proves the resolution limit of ESPRIT under a geometric model for the unknown support. 

In Section \ref{secrobust}, we derive the main stability bounds for ESPRIT. We first bound the matching distance between the true and estimated supports, in terms of the noise level and the minimum singular value of the Vandermonde matrices. Our bound in Theorem \ref{thm1} significantly improves upon the existing ones in \cite{aubel2016deterministic,aubel2016performance,fannjiang2016compressive}, especially when $\Delta\leq 1/M$. Theorem \ref{thm1} is deterministic, non-asymptotic and holds for any support, including when $\Delta$ is arbitrarily small. Numerical evidence demonstrates that it provides an accurate dependence on the minimum singular value. 

In Section \ref{secsingular}, we combine Theorem \ref{thm1} with bounds on the minimum singular value of Vandermonde matrices under a separated clumps model \cite{li2017stable, li2019music} to obtain Theorem \ref{thmesprit}. This is the first known rigorous guarantee for ESPRIT in the $\Delta\leq 1/M$ regime. The theorem shows that ESPRIT can accurately recover the support of $\mu$, {\it regardless of how small $\Delta$ is}, provided that the noise is smaller than a quantity specified by our theorem. Define the super-resolution factor ${\rm SRF}:=1/(\Delta M)$, which can be interpreted as the maximum number of atoms located within an interval of length $1/M$. We show that if the noise $\epsilon$ is sufficiently small, the support error by ESPRIT is $O( {\rm SRF}^{-(2\lambda-2)}\, \epsilon )$, where $\lambda$ is the cardinality of the largest clump of the measure. When continuous Fourier measurements are collected, and the support contains only one clump of closely spaced atoms, the min-max rate is $O( {\rm SRF}^{-(2\lambda -2)} \, \epsilon/M)$ \cite{batenkov2019super}.
In the small noise regime, our support estimate for ESPRIT matches the min-max rate up to a factor of $M$, which establishes the near-optimality of ESPRIT. 

In Section \ref{secuncertainty}, we derive several new uncertainty principles for discrete non-harmonic Fourier series. A crucial step in our analysis of ESPRIT is to derive a lower bound, uniformly over all $\mu$, on the minimum singular value of $U_0$. We show that the minimum singular value of $U_0$ is related to an uncertain principle in discrete non-harmonic Fourier series. We establish this uncertainty principle in Theorem \ref{thm:UPcomplex} 
which might be of independent mathematical interest. Section \ref{secuncertainty} can be read independently of the rest of the paper. 

Appendix \ref{seclemmas} contains the proof of the results stated in Sections \ref{secrobust} and \ref{secsingular}.

\subsection{Related work}
\label{secrelated}

MUSIC and ESPRIT were originally invented for DOA estimation where the amplitudes of $\mu$ are assumed to be random and multiple snapshots of measurements are taken. In this ``multiple snapshot" setting, more information about the support of $\mu$ is collected, and statistics about the amplitudes of $\mu$ can be utilized. Sensitivity of MUSIC and ESPRIT for the DOA estimation was studied in \cite{swindlehurst1990sensitivity,swindlehurst1992performance,swindlehurst1993performance,li1992sensitivity}. This paper focuses on the ``single snapshot" setting where the amplitudes of $\mu$ are deterministic and little statistical information can be utilized.

Regarding the stability analysis of subspace methods, there have been works on bounding the error in terms of the minimum singular value of Vandermonde matrices. Such inequalities can be found in \cite{liao2016music,li2017stable} for MUSIC and in \cite{fannjiang2016compressive} for ESPRIT, as well as in \cite{moitra2015matrixpencil} for the matrix pencil method. One major roadblock is that, to provide a comprehensive error estimate, one still needs to accurately bound the smallest singular value of $\Phi_M$ in the $\Delta\leq 1/M$. This difficulty was addressed in \cite{li2017stable}, which provided the first accurate analysis of MUSIC in the $\Delta\leq 1/M$ regime.
As for ESPRIT, the bounds in \cite{fannjiang2016compressive,aubel2016deterministic} do not capture the exact dependence of the error on the minimum singular value, and consequently, are inaccurate when $\Delta\leq 1/M$ (see \eqref{eqesprit1} and \eqref{eqesprit2} and the discussion there). 

The minimum singular value of the Vandermonde matrix crucially depends on the configuration of its nodes. One available bound in the case $\Delta\geq C/M$ was proved in \cite{moitra2015matrixpencil}, which relied on the Beurling-Selberg machinery, see \cite{vaaler1985some}. Recently there are several independent works which provide estimates for $\Delta\leq 1/M$ by incorporating additional geometric information about the support set, see \cite{batenkov2018conditioning,li2017stable,kunis2018condition}. Accurate lower bounds under a separated clumps model can be found in \cite{li2017stable}. 

Super-resolution has also been addressed from a statistical or information theoretic point of view. The authors of \cite{donoho1992superresolution,demanet2015recoverability} considered a sparsity prior, where the $S$ atoms are located on a grid on $\mathbb{R}$ with spacing $1/N$ and the measurements consist of noisy continuous Fourier measurements on $[-M,M]$. It was proved in \cite{demanet2015recoverability} that the optimal algorithm can estimate the measure in $\ell^2$ error of $O(  {
\rm SRF}^{2S-1} \epsilon)$ where ${
\rm SRF} = N/M$ and $\epsilon$ is the noise level. 

More relevant to our work is \cite{batenkov2019super}, where the authors considered a geometric prior with one clump of closely spaced atoms. Let $\lambda$ be the cardinality of the largest clump or cluster. They proved that the optimal algorithm can estimate the support with accuracy of $O( {\rm SRF}^{2\lambda-2} \epsilon/M)$. These results are information theoretic, and they do not provide a tractable algorithm that achieves these rates\footnote{Upon closer inspection of their proofs, it is implicit that $\ell^0$ minimization is optimal, but this is not a computationally feasible method.}. In comparison, ESPRIT is a polynomial-time algorithm and the results in this paper show that ESPRIT is near optimal for the geometric clumps model, since it achieves the support recovery rate of $O( {\rm SRF}^{2\lambda -2} \epsilon)$.


\section{Review of ESPRIT}
\label{secesprit}

We first describe the spectral estimation problem. Let $\calM_S$ be the collection of non-zero and complex-valued discrete measures on the periodic unit interval $\mathbb{T} = [0,1)$ with at most $S$ atoms, and let $\delta_{\omega}$ denote the Dirac measure supported in $\omega$. Any $\mu\in\calM_S$ is of the form,
\begin{equation*}
\label{eq:model1}
\mu(\omega)=\sum_{j=1}^S x_j\delta_{\omega_j}(\omega)
\quad\text{where}\quad  
x:=\{x_j\}_{j=1}^S \in \mathbb{C}^S
\quad\text{and}\quad
\Omega:=\{\omega_j\}_{j=1}^S\subset\T. 
\end{equation*}
The minimum separation of $\mu$ is defined as
\begin{equation*}
\label{eq:minsep}
\Delta
:=\min_{j\not=k} |\omega_j-\omega_k|_\T
:=\min_{j\not=k} \min_{n\in\Z} |\omega_j-\omega_k-n|.
\end{equation*}
Let $y^0=\{y_k^0\}_{k=0}^M\in\C^{M+1}$ denote the first $M+1$ consecutive Fourier coefficients of $\mu$:
\begin{equation*}
\label{eq:model2}
y_k^0
:=\hat\mu(k)
:=\int_{\T} e^{-2\pi ik\omega} \ d\mu(\omega)
=\sum_{j=1}^S x_j e^{-2\pi i  k\omega_j}, \quad \text{for } k=0,1,\ldots,M. 
\end{equation*}
Suppose we are given information about $\mu\in\calM_S$ in the form of $M+1$ consecutive noisy Fourier coefficients, 
\begin{equation*}
\label{eq:model3}
y:=y^0+\eta, 
\end{equation*}
where $\eta\in\C^{M+1}$ represents some unknown noise vector. We let $\|\cdot\|_2$ denote either the Euclidean norm or the spectral norm, and we use $\|\cdot\|_F$ for the Frobenius norm. 

The $(M+1)\times S$ Fourier or Vandermonde matrix whose nodes are specified by $\Omega$ is denoted
\begin{equation}
\label{eq:phi}
\Phi_M:=\Phi_M(\Omega)
:=
\begin{bmatrix}
1 &1 &\cdots &1 \\
e^{-2\pi i\omega_1} &e^{-2\pi i\omega_2} &\cdots &e^{-2\pi i\omega_S} \\
\vdots &\vdots & &\vdots \\
e^{-2\pi i M\omega_1} &e^{-2\pi iM\omega_2} &\cdots &e^{-2\pi iM\omega_S}
\end{bmatrix}.
\end{equation}
If $\mu\in\calM_S$ has amplitudes $x$ and support $\Omega$, then we have the relationship
\begin{equation}
\label{eq:model4}
y = \Phi_Mx+\eta. 
\end{equation} 

The goal of spectral estimation is to stably recover $\mu$, including the support $\Omega$ and the amplitudes $x$, from $y$. A typical two-step strategy is to estimate the support set and then the amplitudes. 
ESPRIT exploits the Vandermonde decomposition of a Hankel matrix in order to reformulate the support estimation step as an eigenvalue problem.  Throughout the exposition, $L$ is an integer parameter for ESPRIT that satisfies
\begin{equation}
\label{eq:L}
S\leq L \leq M+1-S.
\end{equation}
Note that it always possible to find a $L$ that satisfies the above inequalities whenever the number of measurements exceeds the amount of unknowns: $M+1 \ge 2S$. The Hankel matrix of $y$ (with parameter $L$) is defined to be
\begin{equation}
\label{eq:hankely}
\calH(y) 
:= 
\begin{bmatrix}
y_0 & y_1 & \hdots & y_{M-L}
\\
y_1 & y_2 & \hdots & y_{M-L+1}
\\
\vdots & \vdots & \ddots & \vdots
\\
y_{L} & y_{L+1} & \hdots & y_{M}
\end{bmatrix} 
\in \C^{(L+1) \times (M-L+1)}.
\end{equation}

We first describe ESPRIT in the noiseless setting and then outline how it deals with noise. In the case where $\eta=0$, we have access to the Hankel matrix $\calH(y^0)$, and a direct calculation shows that $\calH(y^0)$ processes the following Vandermonde decomposition:
\begin{equation}
\label{eq:Vfact}
\calH(y^0)  
= \Phi_L D_X \Phi_{M-L}^T, 
\end{equation}
where $D_X = \diag(x_1,\dots,x_S) \in \mathbb{C}^{S \times S}$. The conditions in \eqref{eq:L} imply that both $\Phi_L$ and $\Phi_{M-L}$ have full column rank, which in turn implies that $\calH(y^0)$ has rank $S$. More importantly, we have ${\rm Range}(\calH(y^0)) = {\rm Range}(\Phi_L)$, which means ${\rm Range}(\calH(y^0))$ contains full information about the column span of $\Phi_L$. ESPRIT amounts to finding an orthonormal basis of ${\rm Range}(\calH(y^0))$ and using this basis to recover $\Omega$. The procedure of finding an orthonormal basis of ${\rm Range}(\calH(y^0))$ can be realized by Singular Value Decomposition (SVD) or QR decomposition of $\calH(y^0)$.

Let the SVD of $\calH(y^0)$ be
\begin{equation}
\label{eq:Hsvd}
\calH(y^0) = [\underbrace{ U}_{(L+1) \times S} \  \underbrace{ {\ U_\perp} }_{(L+1) \times (L+1-S)}] \underbrace{\Sigma}_{(L+1) \times (M-L+1)}  [\underbrace{ V}_{(M-L+1) \times S} \  \underbrace{ {\ V_\perp} }_{(M-L+1) \times (M-L+1-S)}]^*,
\end{equation} 
where $\Sigma={\rm diag}( \sigma_1 , \ldots ,  \sigma_S, 0 ,\ldots,0)$ contains the singular values of $\calH(y^0)$. In general, we let $\sigma_j(\cdot)$ denote the $j$-th largest singular value of a matrix. Comparing the identities \eqref{eq:Vfact} and \eqref{eq:Hsvd}, we see that the column space of $U$ and $\Phi_L$ are identical to ${\rm Range}(\calH(y^0))$. There exists an invertible matrix $P \in \mathbb{C}^{S \times S}$ 
such that 
\begin{equation}
\label{eq:P}
U = \Phi_L P\in\C^{(L+1)\times S}.
\end{equation}
Let $U_0$ and $U_1$ be two submatrices of $U$ containing the first and the last $L$ rows respectively. Then we have 
\begin{align*}
U_0 &= \Phi_{L-1} P, \\
U_1 &= \Phi_{L-1} D_\Omega P , 
\end{align*}
where $D_\Omega = \diag(e^{-2\pi i \omega_1},\dots e^{-2\pi i\omega_S})$. Setting $L \ge S$ as \eqref{eq:L} guarantees that $U_0$ and $U_1$ have full column rank.

It follows from these definitions that if we define the matrix $\Psi := U_0^{\dagger} U_1$, then 
\begin{equation}
\label{eq:P2}
\Psi = P^{-1} D_\Omega P\in\C^{S\times S}. 
\end{equation}
Hence, the eigenvalues of $\Psi$ are exactly $\{e^{-2\pi i \omega_j}\}_{j=1}^S$. The ESPRIT technique amounts to finding the support set $\Omega$ through the eigenvalues of $\Psi$. 

In the presence of noise, the ESPRIT algorithm forms the noisy Hankel matrix 
\[
\calH(y)
=\calH(y^0)+\calH(\eta). 
\]
If the noise is sufficiently small, then the rank of $\calH(y)$ is at least $S$. ESPRIT computes a matrix $\hat\calH(y)$, defined to be the best rank $S$ approximation of $\calH(y)$ in the spectral norm; this amounts to computing the SVD of $\calH(y)$ and truncating the singular spaces. We write the SVD of $\calH(y)$ in \eqref{svdnoisy}. 

When the size of the noise is sufficiently small, we expect the column space of $\hat U$ to be a small perturbation of that of $U$. The ESPRIT algorithm proposes to find the eigenvalues of 
\[
\hat\Psi
=\hat U_0^\dagger \hat U_1,
\]
where $\hat U_0$ and $\hat U_1$ are the first and last rows of $\hat U$ respectively. Projecting the eigenvalues to the complex unit circle provide us with an estimator $\hat\Omega$ for $\Omega$. Further details can be found in Algorithm \ref{alg:ESPRIT}.

\renewcommand{\algorithmicrequire}{\textbf{Input:}}
\renewcommand{\algorithmicensure}{\textbf{Output:}}
\begin{algorithm}[t!]                      	
	\caption{ESPRIT}          	
	\label{alg:ESPRIT}		
	\begin{algorithmic}[1]                    	
		\REQUIRE  $y \in \C^{M+1}$, sparsity $S$, $L$
		\begin{enumerate}
			\item 
			 Form Hankel matrix $\calH(y) \in \C^{(L+1)\times (M-L+1)}$
			\item Compute the SVD of $\calH(y)$: 
			\begin{equation}
			\label{svdnoisy}
			\hspace{-4em}
			\calH(y) = [\underbrace{\hat U}_{(L+1) \times S} \  \underbrace{{\hat U}_\perp}_{(L+1) \times (L+1-S)}] \underbrace{\hat\Sigma}_{(L+1) \times (M-L+1)}[\underbrace{\hat V}_{(M-L+1) \times S} \  \underbrace{{\hat V}_\perp}_{(M-L+1) \times (M-L+1-S)}]^*  
			\end{equation}
			where $\hat\Sigma=\diag(\hat\sigma_1,\dots,\hat \sigma_S, \hat \sigma_{S+1}, \ldots)$ are the singular values of $\calH(y)$ listed in decreasing order.
		\end{enumerate}
		\begin{enumerate}
			\item[3.] Let $\hat U_0$ and $\hat U_1$ be two submatrices of $\hat U$ containing the first and the last $L$ rows respectively. Compute 
			$$\hat\Psi = \hat U_0^{\dagger} \hat U_1$$
			and its $S$ eigenvalues $\hat \lambda_1,\ldots,\hat \lambda_S$.                
			\item[Output:] $\hat \Omega = \{\hat\omega_j\}_{j=1}^S$ where $\hat\omega_j = -\frac{\angle \hat \lambda_j}{2\pi }$.
		\end{enumerate} 
	\end{algorithmic}
\end{algorithm}

\section{Robustness of ESPRIT}
\label{secrobust}

A central interest about ESPRIT is on it stability analysis. The main goal of this section is to bound the error between $\Omega$ and $\hat\Omega$ in terms of the matrices that appear in the ESPRIT algorithm. We state the main theorems and the necessary lemmas in this section and leave their proofs in Appendix \ref{seclemmas}.

\subsection{Perturbation of the matrix $\Psi$ in spectral norm} 

Before we proceed to the stability analysis, we need to point out a subtle and important feature of ESPRIT. The singular values and singular subspaces of a matrix are unique, but the SVD only provides us with one of infinitely many equivalent orthonormal bases. Importantly, ESPRIT is invariant to the specific choice of orthonormal basis for the column span of $\widehat U$. In other words, the eigenvalues of $\hat\Psi$ remain the same if one uses another orthonormal basis for the column span of $\widehat U$. 
To see why, let $\widetilde U$ be another orthonormal basis for the column span of $\hat U$. Then there exists an invertible matrix $R \in \C^{S \times S}$, such that $\widetilde U = \hat U R$. Let $\widetilde U_0$ and $\widetilde U_1$ be two submatrices of $\widetilde U$ containing the first and the last $L$ rows respectively. Then $\widetilde U_0 = \hat U_0 R$ and $\widetilde U_1 = \hat U_1 R$. It follows that
$\widetilde U_0^\dagger \widetilde U_1 = R^{-1} \hat U_0^\dagger \hat U_1 R$, so the eigenvalues of $\widetilde U_0^\dagger \widetilde U_1$ are identical to those of $\hat U_0^\dagger \hat U_1$.

It follows from the above observation that we can make the following reduction. The output of ESPRIT is independent of the particular choice of basis for the singular spaces, so for the mathematical analysis, we can without loss of generality, select particular matrices $U$ and $\hat U$ that are most suitable for our analysis. It turns out that the most convenient choice is when the columns of $U$ and $\hat U$ consist of the canonical vectors \footnote{\url{https://en.wikipedia.org/wiki/Angles_between_flats}}. 
Our first perturbation bound is on $\|\hat\Psi -\Psi\|_2$ (see Appendix \ref{secapppsi} for the proof).

\begin{lemma}
\label{lemmapsi}
Fix positive integers $L,M,S$ such that \eqref{eq:L} holds. For any $\mu\in\calM_S$ and $\eta\in\C^{M+1}$, if 
$$\|\calH(\eta)\|_2 \le 
\frac{x_{\min}\sigma_{S}(\Phi_L)\sigma_{S}(\Phi_{M-L})\sigma_S(U_0)}{4 \sqrt{2S}},$$
 then
	\[
	\|\widehat\Psi - \Psi\|_2 
	\le 
	\frac{14\sqrt{2S}\|\calH(\eta)\|_2}{x_{\min}\sigma_{S}(\Phi_L)\sigma_{S}(\Phi_{M-L})\sigma_S^2(U_0)}
	.
	\]	
\end{lemma}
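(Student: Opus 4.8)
The plan is to track how the noise propagates through the two main objects that define $\Psi$ and $\widehat\Psi$, namely the orthonormal basis $U$ (resp. $\widehat U$) for the signal subspace and the pseudoinverse $U_0^\dagger$ (resp. $\widehat U_0^\dagger$). First I would quantify the perturbation of the Hankel matrix itself: since $\mathcal{H}(y)=\mathcal{H}(y^0)+\mathcal{H}(\eta)$, and $\mathcal{H}(y^0)=\Phi_L D_X \Phi_{M-L}^T$ has rank exactly $S$ with $\sigma_S(\mathcal{H}(y^0))\ge x_{\min}\,\sigma_S(\Phi_L)\,\sigma_S(\Phi_{M-L})$, the hypothesis on $\|\mathcal{H}(\eta)\|_2$ guarantees (via Weyl) that there is a clean spectral gap, so the best rank-$S$ approximation $\widehat{\mathcal H}(y)$ is well defined and its signal subspace is a controlled perturbation of $\mathrm{Range}(\mathcal{H}(y^0))$. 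Applying a Wedin-type $\sin\Theta$ theorem, the principal angles between $\mathrm{Range}(U)$ and $\mathrm{Range}(\widehat U)$ are bounded by $C\|\mathcal{H}(\eta)\|_2 / (x_{\min}\sigma_S(\Phi_L)\sigma_S(\Phi_{M-L}))$.

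Next I would exploit the basis-invariance reduction already set up in the text: choose $U$ and $\widehat U$ to be the canonical (CS-decomposition) representatives so that $\|\widehat U - U\|_2$ is itself bounded by (a constant times) the sine of the largest principal angle — this avoids an uncontrolled rotation factor and is exactly why the paper insists on that normalization. Passing to the row-submatrices, $\|\widehat U_0-U_0\|_2$ and $\|\widehat U_1-U_1\|_2$ inherit the same bound. Now write $\widehat\Psi-\Psi = \widehat U_0^\dagger \widehat U_1 - U_0^\dagger U_1$ and split it using the standard identity $\widehat U_0^\dagger \widehat U_1 - U_0^\dagger U_1 = \widehat U_0^\dagger(\widehat U_1 - U_1) + (\widehat U_0^\dagger - U_0^\dagger)U_1$. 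The first term is controlled by $\|\widehat U_0^\dagger\|_2 \|\widehat U_1-U_1\|_2$; for the second I would use the perturbation bound for the pseudoinverse of a full-column-rank matrix, $\|\widehat U_0^\dagger - U_0^\dagger\|_2 \le \sqrt{2}\,\|\widehat U_0^\dagger\|_2 \|U_0^\dagger\|_2 \|\widehat U_0 - U_0\|_2$, together with $\|U_1\|_2\le 1$ (rows of an orthonormal matrix). Since $\|U_0^\dagger\|_2 = 1/\sigma_S(U_0)$, this is where the factor $\sigma_S^2(U_0)$ in the denominator appears. I would also need $\|\widehat U_0^\dagger\|_2 = 1/\sigma_S(\widehat U_0)$ bounded in terms of $1/\sigma_S(U_0)$, which follows because the already-established bound on $\|\widehat U_0-U_0\|_2$ is, by the smallness hypothesis, at most $\tfrac12\sigma_S(U_0)$, so $\sigma_S(\widehat U_0)\ge \tfrac12\sigma_S(U_0)$ by Weyl.

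Assembling these pieces gives $\|\widehat\Psi-\Psi\|_2 \lesssim \|\widehat U_0-U_0\|_2 / \sigma_S^2(U_0) \lesssim \|\mathcal{H}(\eta)\|_2 / (x_{\min}\sigma_S(\Phi_L)\sigma_S(\Phi_{M-L})\sigma_S^2(U_0))$, and tracking the numerical constants through the $\sin\Theta$ theorem, the CS-normalization, and the two triangle-inequality splits yields the explicit constant $14\sqrt{2S}$, with the smallness threshold $x_{\min}\sigma_S(\Phi_L)\sigma_S(\Phi_{M-L})\sigma_S(U_0)/(4\sqrt{2S})$ being exactly what is needed for the Weyl gap argument and for $\sigma_S(\widehat U_0)\ge\tfrac12\sigma_S(U_0)$ to hold simultaneously. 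The main obstacle I anticipate is not any single inequality but the bookkeeping of constants so that everything closes with the stated $14\sqrt{2S}$: in particular, getting the sharp relationship between $\|\widehat U-U\|_2$ and $\|\mathcal{H}(\eta)\|_2$ requires care about whether one uses the spectral norm or Frobenius norm in Wedin's theorem and about the $\sqrt{2S}$ that enters when converting a $\sin\Theta$ (per-subspace) bound into a bound on the difference of the chosen canonical bases. A secondary subtlety is justifying that $\sigma_S(U_0)>0$ at all and that it is the right quantity — this is guaranteed by $L\ge S$ and the full-column-rank property of $\Phi_{L-1}$ noted after \eqref{eq:P}, but the quantitative lower bound on $\sigma_S(U_0)$ itself is deferred to the uncertainty-principle results of Section~\ref{secuncertainty} and is not needed here.
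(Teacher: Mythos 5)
Your proposal is correct and follows essentially the same route as the paper: Wedin's $\sin\Theta$ theorem applied to $\mathcal{H}(y^0)$ with $\sigma_S(\mathcal{H}(y^0))\ge x_{\min}\sigma_S(\Phi_L)\sigma_S(\Phi_{M-L})$, the canonical-basis normalization to convert the angle bound into $\|\widehat U-U\|_2\le\sqrt{2S}\sin\theta_1$, and then the triangle-inequality splitting of $\widehat U_0^\dagger\widehat U_1-U_0^\dagger U_1$ together with a pseudoinverse perturbation bound and $\|U_1\|_2\le 1$. The only cosmetic differences are which factor carries the hat in the splitting and which pseudoinverse perturbation lemma is invoked (the paper uses a bound of Hansen giving the factor $6/\sigma_S^2(U_0)$, yielding the constant $7$ and hence $14\sqrt{2S}$), neither of which affects the validity of the argument.
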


The estimate in Lemma \ref{lemmapsi} above indicates that the smallest singular values of $\Phi_L$ and $U_0$ play an important role in our analysis. An accurate lower bound of $\sigma_S(\Phi_L)$ has been proved in \cite{li2017stable}  under a separated clumps model of nodes. We will derive a lower bound of $\sigma_S(U_0)$ in Subsection \ref{secu0}.

\subsection{ESPRIT error in the support matching distance}

We next relate $\|\hat \Psi -\Psi\|$ to the matching distance between the eigenvalues of $\hat\Psi$ and $\Psi$. The matrix $\Psi$ is diagonalizable and has eigenvalues $\{e^{-2\pi i \omega_j}\}_{j=1}^S$. We let $\{\hat\lambda_j\}_{j=1}^S$ be the eigenvalues of $\hat\Psi$ listed according to multiplicity. It will follow from the results below that the eigenvalues of $\hat\Psi$ are distinct provided that the noise is sufficiently small. The \textit{matching distance} between the eigenvalues of $\Psi$ and $\hat\Psi$ is defined to be
$${\rm md}(\Psi,\hat\Psi) := \min_{\psi} \max_{j} |\hat\lambda_{\psi(j)}-e^{-2\pi i \omega_j}|,$$
where $\psi$ is taken over all permutations of $\{1,\ldots,S\}$.

ESPRIT projects each eigenvalue $\hat\lambda_j$ of $\hat\Psi$ to the complex unit circle to obtain $\hat\omega_j = -{\angle \hat \lambda_j}/(2\pi )\in\T$. Let $\hat\Omega=\{\hat\omega_j\}_{j=1}^S$, which is the output of ESPRIT. Notice that $\hat\Omega$ may contain repeated entries since $\hat\Psi$ may have eigenvalues with multiplicity greater than one. If the noise is sufficiently small, then $\hat\Omega$ necessarily consists of $S$ distinct values. The matching distance between $\Omega$ and $\hat\Omega$ is,
\[
{\rm md}(\Omega,\hat\Omega) 
:= \min_{\psi} \max_{j} |\hat\omega_{\psi(j)}-\omega_j|_{\mathbb{T}}.
\]
The two matching distances ${\rm md}(\Omega,\hat\Omega)$ and ${\rm md}(\Psi,\hat\Psi)$ satisfy the following relation (proved in Appendix \ref{appeq}):
\begin{equation}
\label{lemmamatching3}
	{\rm md}(\Omega,\hat\Omega)
	\leq \frac{1}{2} {\rm md}(\Psi,\hat\Psi).
\end{equation}

We provide two bounds on the matching distance. The first bound in Lemma \ref{lemmamatching} (a) below is based on the Bauer-Fike theorem. It holds for various noise levels, but leads to a weaker conclusion. The second bound in Lemma \ref{lemmamatching} (b) below is derived from the Gershgorin circle theorem. In comparison to the first bound, the second one requires the noise level to be smaller but the conclusion is stronger. Their proofs are in Appendix \ref{appmd}.

\begin{lemma}
\label{lemmamatching}
Fix positive integers $L,M,S$ such that \eqref{eq:L} holds. For any $\mu\in\calM_S$ and $\eta\in\C^{M+1}$, the following hold.
	\begin{enumerate}[(a)]
	
	\item 
	(Moderate noise regime) We have
	\begin{equation}
	\label{eqmatching1}
	{\rm md}(\Omega,\hat\Omega) \leq \frac{S^{3/2}\sqrt{L+1}}{\sigma_{S}(\Phi_L)}\|\hat\Psi-\Psi\|_2. 
	\end{equation}
			
	\item 
	(Small noise regime) If additionally,
	\begin{equation}
	\label{eq:smallnoise}
	\|\hat\Psi-\Psi\|_2
	\leq \frac{\sigma_{S}^2(\Phi_L) \Delta}{S^2 (L+1)},
	\end{equation}
	then we obtain
	\begin{equation}
	{\rm md}(\Omega,\hat\Omega)
	\leq 
	\|\Psi-\hat\Psi\|_2. 
	\label{eqmatching2}
	\end{equation}		

	\end{enumerate}	

\end{lemma}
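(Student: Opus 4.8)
The plan is to reduce both parts to a single eigenvalue–perturbation problem for the $S\times S$ matrix $\Psi$ and then invoke \eqref{lemmamatching3}, which already turns any bound on ${\rm md}(\Psi,\hat\Psi)$ into one on ${\rm md}(\Omega,\hat\Omega)$ up to a factor $\frac{1}{2}$; so the real task is to control ${\rm md}(\Psi,\hat\Psi)$ by $\|\hat\Psi-\Psi\|_2$. By \eqref{eq:P} and \eqref{eq:P2}, $\Psi=P^{-1}D_\Omega P$, so $P^{-1}$ is an eigenvector matrix of $\Psi$ and the eigenvalues $e^{-2\pi i\omega_j}$ lie on the unit circle. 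The key preliminary estimate is on the condition number $\kappa_2(P)=\|P\|_2\|P^{-1}\|_2$: since $U=\Phi_L P$ has orthonormal columns, $\|\Phi_L w\|_2=\|P^{-1}w\|_2$ for every $w\in\C^S$, which yields $\|P\|_2\le 1/\sigma_S(\Phi_L)$ and $\|P^{-1}\|_2=\sigma_1(\Phi_L)\le\sqrt{(L+1)S}$ (the last step because every entry of $\Phi_L$ has modulus one), hence $\kappa_2(P)\le\sqrt{(L+1)S}/\sigma_S(\Phi_L)$.

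For part (a) I would apply the Bauer--Fike theorem: every eigenvalue of $\hat\Psi$ lies in the union of the closed disks of radius $r:=\kappa_2(P)\|\hat\Psi-\Psi\|_2$ centered at the points $e^{-2\pi i\omega_j}$. Since this is only a one-sided (spectral variation) statement, I would upgrade it to a bijective matching by a homotopy argument: along $\Psi+t(\hat\Psi-\Psi)$, $t\in[0,1]$, all eigenvalues remain inside that same fixed union of disks, so the number of eigenvalues inside each connected component of the union is constant in $t$; at $t=1$ every component then contains equally many eigenvalues of $\Psi$ and of $\hat\Psi$, and pairing them arbitrarily within each component -- which consists of at most $S$ disks and hence has diameter at most $2Sr$ -- gives ${\rm md}(\Psi,\hat\Psi)\le 2S\,\kappa_2(P)\,\|\hat\Psi-\Psi\|_2$. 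Combining this with \eqref{lemmamatching3} and the bound on $\kappa_2(P)$ produces \eqref{eqmatching1}, since $\frac{1}{2}\cdot 2S\cdot\sqrt{(L+1)S}=S^{3/2}\sqrt{L+1}$.

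For part (b) I would instead conjugate by $P$, so that the eigenvalues of $\hat\Psi$ coincide with those of $D_\Omega+F$ with $F:=P(\hat\Psi-\Psi)P^{-1}$, and apply the Gershgorin circle theorem to $D_\Omega+F$: its $j$-th disk is centered at $e^{-2\pi i\omega_j}+F_{jj}$ with radius $\sum_{k\ne j}|F_{jk}|$. Because the centers obey $|e^{-2\pi i\omega_j}-e^{-2\pi i\omega_k}|\ge 4|\omega_j-\omega_k|_\T\ge 4\Delta$, the stronger hypothesis $\|\hat\Psi-\Psi\|_2\le\sigma_S^2(\Phi_L)\Delta/(S^2(L+1))$ is exactly calibrated to make $\|F\|_2$ small enough that these disks are pairwise disjoint; the same homotopy argument then shows each disk holds exactly one eigenvalue of $\hat\Psi$, which furnishes the matching $\psi$. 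With the disks separated, I would sharpen the estimate on the matched eigenvalue by a Schur-complement localization -- eliminating all indices but $j$ from $\det(D_\Omega+F-\hat\lambda_{\psi(j)}I)=0$ yields $|\hat\lambda_{\psi(j)}-e^{-2\pi i\omega_j}|\le|F_{jj}|+O(\|F\|_2^2/\Delta)$ -- and then feed in the hypothesis (which forces the quadratic correction below a small fraction of the leading term), the factor $\frac{1}{2}$ from \eqref{lemmamatching3}, and $\sigma_S(\Phi_L)\le\sqrt{L+1}$, so that the resulting inequality should reduce to \eqref{eqmatching2}.

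I expect the main obstacle to be quantitative and to sit in part (b). Both parts share the technical point that Bauer--Fike and Gershgorin only \emph{localize} the perturbed spectrum while the matching distance requires a genuine permutation, which is handled by the continuity/connected-component bookkeeping above. The subtler issue is that the naïve bound $\|F\|_2\le\kappa_2(P)\|\hat\Psi-\Psi\|_2$, with $\kappa_2(P)=\kappa_2(\Phi_L)$ possibly enormous in the super-resolution regime, loses a dimension- and conditioning-dependent factor; obtaining the bare constant $1$ in \eqref{eqmatching2} therefore demands a more careful accounting -- exploiting the precise form of the noise threshold and the structure of the entries of $F$ -- and verifying that this accounting really does yield the constant $1$ is the step I would be least confident about at the outset.
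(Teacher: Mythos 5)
Part (a) of your proposal is correct and is essentially the paper's argument: Bauer--Fike applied to $\Psi=P^{-1}D_\Omega P$, the bound $\kappa_2(P)\le\kappa_2(\Phi_L)\le\sqrt{(L+1)S}/\sigma_S(\Phi_L)$ (the paper obtains it from $P=\Phi_L^\dagger U$, you from the isometry $\|\Phi_L Pw\|_2=\|w\|_2$ --- same content), and the factor $\tfrac12$ from \eqref{lemmamatching3}; the paper invokes the matching form of Bauer--Fike with constant $2S-1$ where your homotopy bookkeeping gives $2S$, and both land on $S^{3/2}\sqrt{L+1}$.

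For part (b) you are structurally on the paper's route --- conjugate by $P$, localize the spectrum of $D_\Omega+F$ with $F=P(\hat\Psi-\Psi)P^{-1}$, and extract a second-order correction $O(\|F\|_2^2/\Delta)$ (the paper does this via the diagonal similarity $\diag(1,\dots,\alpha,\dots,1)$ with $\alpha=\epsilon/\Delta$ rather than your Schur complement; the two devices are interchangeable). But the step you flag as uncertain is a genuine gap, and it is the entire content of \eqref{eqmatching2}. Your final estimate is $|\hat\lambda_{\psi(j)}-e^{-2\pi i\omega_j}|\le |F_{jj}|+O(\|F\|_2^2/\Delta)$. The quadratic term is fine: hypothesis \eqref{eq:smallnoise} is calibrated exactly so that $(S-1)\epsilon^2/\Delta\le\|\hat\Psi-\Psi\|_2$ with $\epsilon=\sqrt{(L+1)S}\,\|\hat\Psi-\Psi\|_2/\sigma_S(\Phi_L)$. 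The problem is the linear term $|F_{jj}|$. The only bound your argument supplies is $|F_{jj}|\le\|F\|_2\le\kappa_2(P)\|\hat\Psi-\Psi\|_2$, and $\kappa_2(P)$ is precisely the large factor that \eqref{eqmatching2} must not contain; with only that bound you recover nothing stronger than \eqref{eqmatching1}. Moreover, the obvious attempt fails in the wrong direction: writing $F_{jj}=r_j^*(\hat\Psi-\Psi)c_j$ with $r_j^*$ the $j$-th row of $P$ and $c_j$ the $j$-th column of $P^{-1}$, one has $r_j^*c_j=1$, hence $\|r_j\|_2\|c_j\|_2\ge 1$ by Cauchy--Schwarz, and for an ill-conditioned $P$ a diagonal entry of $PEP^{-1}$ can in general exceed $\|E\|_2$ by a factor on the order of $\kappa_2(P)$. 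The paper closes this exact point by using the bound $|Q_{kk}|\le\|\hat\Psi-\Psi\|_2$ on the diagonal entries of the conjugated perturbation $Q=PEP^{-1}$, which is what turns the first-order term into $\|\hat\Psi-\Psi\|_2$ and, after the factor $\tfrac12$ from \eqref{lemmamatching3}, yields the bare constant $1$. Your proposal needs this ingredient (or a substitute exploiting the specific structure of $P=\Phi_L^\dagger U$ and of the perturbation) and does not supply it; until $|F_{jj}|\lesssim\|\hat\Psi-\Psi\|_2$ is actually established, the argument only delivers ${\rm md}(\Omega,\hat\Omega)\lesssim\kappa_2(P)\|\hat\Psi-\Psi\|_2$, not \eqref{eqmatching2}.
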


Part (a) holds without any assumptions on $\|\hat\Psi-\Psi\|_2$, while Part (b) requires $\|\hat\Psi-\Psi\|_2$ which is in turn a constraint on the size of the noise. However, the conclusion given by \eqref{eqmatching2} is stronger than \eqref{eqmatching1} in the super-resolution regime, because when $\Delta$ is small, $\sigma_{S}(\Phi_L) $ is an extremely small quantity that depends exponentially on ${\rm SRF}^{-1}$.

\subsection{Smallest singular value of $U_0$}
\label{secu0}

Lemma \ref{lemmapsi} indicates that $\sigma_S(U_0)$ plays an important role in our stability analysis for ESPRIT. The matrix $U_0$ is obtained from $U$ with the last row removed, where $U \in \mathbb{C}^{(L+1) \times S}$ contains orthonormal columns. One can easily show that $\sigma_1(U_0) = \ldots = \sigma_{S-1}(U_0)$, but it is not clear what the value of $\sigma_S(U_0)$ is.
In general, deleting a row from a matrix with orthonormal columns may result in linearly dependent columns. For instance, the matrix 
$ \left[ 0 \ 0 ; 1 \ 0 ; 0 \ 1 \right] $
has orthonormal columns, but if we delete its last row, the resulting matrix does not have full column rank any more. 

However, $U$ is not an arbitrary unitary matrix because the columns of $U$ form an orthonormal basis for the column space of $\Phi_L$. While $U$ can be explicitly realized by the Gram-Schmidt orthogonalization process applied to $\Phi_L$, it is hard to leverage this relationship in a theoretical form. 
Instead we establish an uncertainty principle to relate $U$ and $\Phi_L$ in order to prove the following lower bound of $\sigma_S(U_0)$. 

\begin{lemma}
	\label{lemmauncertainty}
	Fix positive integers $L,M,S$ such that \eqref{eq:L} holds. For any $\mu\in\calM_S$, we have
	\[
	\min\big(\sigma_{S}^2(U_0),\sigma_{S}^2(U_1)\big)
	\geq \max\(1-\frac{S}{\sigma_S^2(\Phi_L)},\ 4^{-S} \). 
	\]	
\end{lemma}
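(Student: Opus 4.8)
The plan is to understand $\sigma_S(U_0)$ and $\sigma_S(U_1)$ through an orthogonality/uncertainty argument, exploiting the fact that the columns of $U$ are orthonormal and span $\mathrm{Range}(\Phi_L)$. Fix a unit vector $c\in\C^S$ and set $v=Uc\in\C^{L+1}$; then $\|v\|_2=1$. Writing $v=(v_0,v_1,\dots,v_L)^T$, we have $\|U_0 c\|_2^2 = \|v\|_2^2 - |v_L|^2 = 1-|v_L|^2$ and similarly $\|U_1 c\|_2^2 = 1-|v_0|^2$. Hence
\[
\sigma_S^2(U_0) = 1 - \max_{\|c\|_2=1} |(Uc)_L|^2, \qquad \sigma_S^2(U_1) = 1 - \max_{\|c\|_2=1} |(Uc)_0|^2,
\]
so the task reduces to bounding, uniformly over unit vectors $c$, a single coordinate of $Uc$. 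Since $Uc$ ranges over the unit sphere of $\mathrm{Range}(\Phi_L)$, and a generic vector there is $v=\Phi_L z$ for some $z\in\C^S$, what we must show is: for any $v\in\mathrm{Range}(\Phi_L)$ with $\|v\|_2=1$, the last entry $v_L$ (resp.\ the first entry $v_0$) satisfies $|v_L|^2 \le \min\!\big(S/\sigma_S^2(\Phi_L),\,1-4^{-S}\big)$. This is precisely an uncertainty-principle statement: a nonzero exponential polynomial of degree $S$ cannot concentrate all its $\ell^2$ mass on one of its $L+1$ sample points.

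For the first bound $1 - S/\sigma_S^2(\Phi_L)$, I would write $v=\Phi_L z$ and note $v_L = e_L^T \Phi_L z$ where $e_L^T\Phi_L$ is the last row of $\Phi_L$, a vector of $S$ unimodular entries, so $|v_L| \le \|e_L^T\Phi_L\|_2 \,\|z\|_2 = \sqrt{S}\,\|z\|_2$. On the other hand $1 = \|v\|_2 = \|\Phi_L z\|_2 \ge \sigma_S(\Phi_L)\|z\|_2$, giving $\|z\|_2 \le 1/\sigma_S(\Phi_L)$ and therefore $|v_L|^2 \le S/\sigma_S^2(\Phi_L)$. The same argument with the first row handles $U_1$. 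This part is routine.

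The genuinely new piece is the dimension-free bound $1-4^{-S}$ (useful exactly when $\sigma_S(\Phi_L)$ is tiny, i.e.\ the super-resolution regime). Here I would argue directly that no unit vector in $\mathrm{Range}(\Phi_L)$ can have a single coordinate of modulus too close to $1$. The cleanest route: the columns of $\Phi_L$ are Vandermonde vectors $\phi(\omega_j) = (1, e^{-2\pi i\omega_j}, \dots, e^{-2\pi i L\omega_j})^T$, and a vector $v\in\mathrm{Range}(\Phi_L)$ corresponds to the sequence of evaluations $k\mapsto q(e^{-2\pi i\omega})$ of... more precisely, $v_k = \sum_j z_j e^{-2\pi i k\omega_j}$, a nonharmonic trigonometric polynomial sampled at $k=0,\dots,L$. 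I expect the argument to be an induction on $S$: project $v$ onto the orthogonal complement of one column $\phi(\omega_S)$ inside the $(L+1)$-dimensional space (or, better, pass to the deleted-row space and compare the Gram structure), reducing the number of nodes by one and the concentration bound by a factor $4$. The inductive step should come from the observation that subtracting a rank-one piece $\langle v,\phi(\omega_S)\rangle \phi(\omega_S)/\|\phi(\omega_S)\|^2$ changes the coordinate $v_L$ by at most the full mass of that component, while the normalized residual lives in the span of $S-1$ Vandermonde vectors; controlling the cross term $\langle \phi(\omega_S), \phi(\omega_k)\rangle$ — which need \emph{not} be small when nodes are clustered — is where the factor $4$ (rather than something node-dependent) must be extracted, presumably via a crude $|\langle\cdot,\cdot\rangle|\le L+1$ type bound combined with the normalization, or via an explicit two-node base case. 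The base case $S=1$ asks that a single Vandermonde vector, normalized, have last coordinate of modulus $1/\sqrt{L+1} \le 1$, trivially giving $\sigma_1^2(U_0)=\sigma_1^2(U_1)=L/(L+1)\ge 1/4$ (as $L\ge S=1$). The main obstacle is organizing the induction so that clustered nodes — where Gram cross-terms are large and $\sigma_S(\Phi_L)$ degenerates — still yield the clean geometric loss $4^{-S}$; I would look to mimic (or invoke, if stated later as Theorem~\ref{thm:UPcomplex}) the discrete nonharmonic uncertainty principle of Section~\ref{secuncertainty}, which is exactly designed to furnish such a node-independent constant, and then the lemma follows by taking the max of the two lower bounds.
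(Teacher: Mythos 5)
Your reduction is exactly the paper's: since $U$ has orthonormal columns, $U_0^*U_0=I_S-w_0w_0^*$ where $w_0^*$ is the deleted row, so $\sigma_S^2(U_0)=1-\|w_0\|_2^2$, and $\|w_0\|_2$ is the length of the orthogonal projection of $e_L$ onto $Z={\rm Range}(\Phi_L)$; identifying $v=\Phi_Lz$ with the Fourier coefficients of $\mu=\sum_j z_j\delta_{\omega_j}$ turns this into the concentration quantity $|\hat\mu(0)|^2/\sum_{k=0}^L|\hat\mu(k)|^2$. Your Cauchy--Schwarz bound $|v_L|^2\le S\|z\|_2^2\le S/\sigma_S^2(\Phi_L)$ is also the paper's first bound. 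One small point you gloss over: for the coordinate $v_L$ (as opposed to $v_0$) you need the reflection $\nu=\sum_j z_je^{2\pi iL\omega_j}\delta_{-\omega_j}$, which satisfies $\hat\nu(k)=\hat\mu(L-k)$, to reduce concentration at the last coefficient to concentration at the zeroth; this is routine and the paper does it explicitly.

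For the $4^{-S}$ bound, your fallback---invoking Theorem \ref{thm:UPcomplex}---is precisely what the paper does, and with that citation your argument is complete. Your primary suggestion, an induction on $S$ that peels off one Vandermonde column at a time, has a genuine obstruction and I do not believe it closes: if you subtract the \emph{oblique} component along $\phi(\omega_S)$ so that the residual lies in ${\rm span}\{\phi(\omega_1),\dots,\phi(\omega_{S-1})\}$, the residual is not normalized and its norm blows up as nodes cluster (exactly the regime of interest), so no uniform factor per step can be extracted; if instead you project \emph{orthogonally} onto $\phi(\omega_S)^\perp$, the residual no longer lies in the range of a Vandermonde matrix on $S-1$ nodes, so the induction hypothesis does not apply. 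The paper sidesteps this entirely: Theorem \ref{thm:UPcomplex} is proved by duality, testing $\mu$ against $f_\Omega=1-\|p_\Omega\|_{L^\infty(\T)}^{-2}\,p_\Omega$ where $p_\Omega(\omega)=(-1)^S\prod_{j=1}^S(e^{2\pi i(\omega-\omega_j)}-1)$ vanishes on $\Omega$, equals $1$ in its zeroth Fourier coefficient, and has Fourier support in $\{0,\dots,S\}\subset\{0,\dots,L\}$; the node-independent constant comes from the crude bound $\|p_\Omega\|_{L^\infty(\T)}\le 2^S$, which is where your guessed ``factor of $4$ per node'' actually lives. So: correct modulo citing Theorem \ref{thm:UPcomplex}, but the inductive route you lead with would fail.
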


Lemma \ref{lemmauncertainty} is proved in Appendix \ref{appu0}, which shows that controlling $\sigma_{S}(U_0)$ is equivalent to establishing an uncertainty principle. Notice that Lemma \ref{lemmauncertainty} gives a deterministic (hence worst case) bound. In principle it is possible to improve it by incorporating further assumptions about $\Omega$.
	
\subsection{Stability of ESPRIT in terms of Vandermonde matrices}

Combining Lemma \ref{lemmapsi}, Lemma \ref{lemmamatching} and Lemma \ref{lemmauncertainty} gives rise to the following deterministic bounds for the support error of ESPRIT (proved in Appendix \ref{seclemmas}), one for moderate noise and one for small noise. 

\begin{theorem}
	\label{thm1}
	Fix positive integers $L,M,S$ such that \eqref{eq:L} holds, and fix $\mu\in\calM_S$ and $\eta\in\C^{M+1}$. 
	\begin{enumerate}[(a)]
		\item 
		(Moderate noise regime) If the noise level is moderately small such that 
		\begin{equation}
		\|\calH(\eta)\|_2  \le \frac{x_{\min}\sigma_S(U_0) \sigma_{S}(\Phi_L) \sigma_{S}(\Phi_{M-L})}{4\sqrt {2 S} }, 
		\label{thm1con}
		\end{equation}
		then the output of ESPRIT satisfies 
		\begin{equation}
		\label{thm1md1}
		\md(\Omega,\hat \Omega)
		\le \frac{20\, S^2 \sqrt{L+1}\,\|\calH(\eta)\|_2}{x_{\min}\sigma_S^2(U_0) \sigma_S^2(\Phi_L) \sigma_S(\Phi_{M-L})}.
		\end{equation}
		\item 
		(Small noise regime) If the noise level is sufficiently small such that
		\begin{equation}
		\label{thm1con2}
		\|\calH(\eta)\|_2  \le \frac{ x_{\min} \Delta \sigma_S^2(U_0) \sigma_{S}^3(\Phi_L) \sigma_{S}(\Phi_{M-L})}{20 S^{5/2} (L+1)},
		\end{equation}
		then the output of ESPRIT satisfies  
		\begin{align}
		\label{thm1md2}
		\hspace{-2em}
		{\rm md}(\Omega,\hat \Omega)
		&\le \frac{20\sqrt{S} \, \|\calH(\eta)\|_2}{x_{\min}\sigma_S^2(U_0) \sigma_S(\Phi_L) \sigma_S(\Phi_{M-L}) }.
		\end{align}
			\end{enumerate}
\end{theorem}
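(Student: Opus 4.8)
The plan is to obtain both parts of Theorem~\ref{thm1} by chaining Lemma~\ref{lemmapsi}, Lemma~\ref{lemmamatching}, and Lemma~\ref{lemmauncertainty}. Lemma~\ref{lemmauncertainty} guarantees $\sigma_S(U_0)>0$, so $U_0$ has full column rank and every quantity appearing in \eqref{thm1con}--\eqref{thm1md2} is finite and well defined; in particular the noise conditions \eqref{thm1con} and \eqref{thm1con2} are nonvacuous. Throughout I will use the elementary inequalities $\Delta\le 1$, $\sigma_S(U_0)\le 1$ (deleting a row from a matrix with orthonormal columns cannot increase any singular value), and $\sigma_S^2(\Phi_L)\le\|\Phi_L\|_F^2=(L+1)S$ (each column of $\Phi_L$ has squared Euclidean norm $L+1$).

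For part (a), the hypothesis \eqref{thm1con} is \emph{exactly} the hypothesis of Lemma~\ref{lemmapsi}, so that lemma yields $\|\hat\Psi-\Psi\|_2\le \frac{14\sqrt{2S}\,\|\calH(\eta)\|_2}{x_{\min}\sigma_S(\Phi_L)\sigma_S(\Phi_{M-L})\sigma_S^2(U_0)}$. Substituting this into the Bauer--Fike bound \eqref{eqmatching1} of Lemma~\ref{lemmamatching}(a) and using $14\sqrt2\,S^{3/2}\sqrt S=14\sqrt2\,S^2\le 20\,S^2$ produces \eqref{thm1md1} immediately.

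For part (b), two reductions are needed before the lemmas apply. First, I will check that \eqref{thm1con2} implies \eqref{thm1con}: after cancelling the common positive factor $x_{\min}\sigma_S(U_0)\sigma_S(\Phi_L)\sigma_S(\Phi_{M-L})$, this reduces to $\Delta\,\sigma_S(U_0)\,\sigma_S^2(\Phi_L)\le \frac{5 S^2(L+1)}{\sqrt2}$, which holds because the left side is at most $(L+1)S$ by the elementary bounds above and $(L+1)S\le \frac{5S^2(L+1)}{\sqrt2}$ for every $S\ge1$. Hence Lemma~\ref{lemmapsi} again gives the same bound on $\|\hat\Psi-\Psi\|_2$. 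Second, substituting \eqref{thm1con2} into that bound shows $\|\hat\Psi-\Psi\|_2\le \frac{14\sqrt2}{20}\cdot\frac{\Delta\,\sigma_S^2(\Phi_L)}{S^2(L+1)}\le \frac{\Delta\,\sigma_S^2(\Phi_L)}{S^2(L+1)}$, which is precisely the smallness requirement \eqref{eq:smallnoise} of Lemma~\ref{lemmamatching}(b). That lemma then gives $\md(\Omega,\hat\Omega)\le\|\hat\Psi-\Psi\|_2$, and inserting the Lemma~\ref{lemmapsi} bound once more, together with $14\sqrt2\le 20$, yields \eqref{thm1md2}.

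The argument is essentially bookkeeping: the substantive estimates are all contained in Lemmas~\ref{lemmapsi}--\ref{lemmauncertainty}. The only points requiring care are the two hypothesis verifications in part (b) --- confirming that the ``sufficiently small noise'' condition \eqref{thm1con2} is strong enough to trigger both Lemma~\ref{lemmapsi} and the smallness hypothesis \eqref{eq:smallnoise} of Lemma~\ref{lemmamatching}(b) --- together with tracking the numerical constants so that everything collapses to the clean factor $20$; I expect no genuinely hard step.
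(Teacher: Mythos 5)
Your proposal is correct and follows essentially the same route as the paper: both chain Lemma~\ref{lemmapsi} (itself the combination of Lemmas~\ref{lemauper} and~\ref{lemauper2}) with the two parts of Lemma~\ref{lemmamatching}, and the constant $20$ arises in both from $14\sqrt{2}\le 20$. The only difference is that you spell out the verifications the paper compresses into ``we readily check'' --- namely that \eqref{thm1con2} implies \eqref{thm1con} and that it triggers the smallness hypothesis \eqref{eq:smallnoise} --- and your elementary bounds $\Delta\le 1$, $\sigma_S(U_0)\le 1$, $\sigma_S^2(\Phi_L)\le (L+1)S$ do carry those checks through correctly.
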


In practice one should choose $L = \lfloor  M/2 \rfloor $ to balance $\sigma_S(\Phi_L)$ and $\sigma_S(\Phi_{M-L})$. If $M$ is even, we set $L = M/2$ so that $\sigma_S(\Phi_L) = \sigma_S(\Phi_{M-L}) = \sigma_{S}(\Phi_{M/2})$.  
Thanks to Lemma \ref{lemmauncertainty}, $\sigma_{S}(U_0) \ge 2^{-S}$ which is independent of $\Omega$. The key to understand the super-resolution of ESPRIT is to obtain a sharp dependence on $\sigma_{S}(\Phi_{M/2})$. Suppressing all terms that depend on $S$ and the amplitudes of $\mu$, but independent of $M$ and $\Omega$, the above Theorem \ref{thm1} is summarized in the following table. 

\begin{center}

	\begin{tabular}{|l|l|}
		\hline 
		\text{Noise assumption} &\text{ESPRIT Error} \\ \hline
		$\|\calH(\eta)\|_2 \lesssim \sigma_S^2(\Phi_{M/2})$ &$
		{\rm md}(\Omega,\hat \Omega)
		\lesssim  \sqrt{M} \, \sigma_S^{-3}(\Phi_{M/2}) \|\calH(\eta)\|_2 $ \\ \hline
		$\|\calH(\eta)\|_2 \lesssim  \,  \Delta \sigma_S^4(\Phi_{M/2})/M$ &${\rm md}(\Omega,\hat \Omega) 
		\lesssim  \sigma_S^{-2}(\Phi_{M/2}) \|\calH(\eta)\|_2$ \\ \hline
	\end{tabular}
	
\end{center}

In terms of the dependence on $ \sigma_S(\Phi_{M/2})$, Theorem \ref{thm1} greatly improves upon earlier stability bounds\footnote{The paper \cite{fannjiang2016compressive} analyzed a variation of the classical ESPRIT algorithm.} of ESPRIT:
\begin{align}
{\rm md}(\Omega,\hat \Omega)
&\lesssim \frac{ \|\eta\|_2} { \sigma_S^5(\Phi_{M/2})}\  \quad \cite[\text{Theorem 1}]{aubel2016deterministic}, \text{proof in } \cite{aubel2016performance}
\label{eqesprit1}\\
\|\hat\Psi -\Psi\|_2 &\lesssim \frac{ \|\calH(\eta)\|_2} { \sigma_S^4(\Phi_{M/2})}\  \quad \cite[\text{Theorem 4}]{fannjiang2016compressive}.
\label{eqesprit2}
\end{align}
We will later see that the exponent of $\sigma_S(\Phi_{M/2})$ is crucial since it determines the exponent of ${\rm SRF}$ in the super-resolution analysis for ESPRIT.

\commentout{
\subsection{Amplitude error}

To complete this section, we discuss the amplitude error by least squares. Let $\hat\Phi_M=\Phi_M(\hat\Omega)$, where $\hat\Omega$ has been sorted to best match $\Omega$, and let $\hat x=\hat\Phi_M^\dagger y$. The amplitude error $\|x-\hat x\|_2$ satisfies the following bound, which is approved in Appendix \ref{seclemmas}. 

\begin{proposition}
	\label{thmamp}
	Fix integers $L,M,S$ such that \eqref{eq:L} holds. For any $\mu\in\calM_S$ and $\eta\in\C^{M+1}$ such that $\md(\Omega,\hat\Omega)<\Delta/2$, we have
	\[
	\|x-\hat x\|_2
	\leq \frac{2\pi M^{3/2} \sqrt{S} \|x\|_2 \ {\rm md}(\Omega,\hat\Omega) +\|\eta\|_2}{\sigma_{S}(\hat\Phi_M)}.  
	\]
\end{proposition}


When $\rm{md}(\Omega,\hat\Omega)$ is sufficiently small, $\sigma_{S}(\hat\Phi_M)$ is comparable to $\sigma_{S}(\Phi_M)$, which in turn, is comparable to $\sigma_{S}(\Phi_L)$ if $L \approx M/2$. These arguments can be made rigorous, but we omit the details for the sake of the explanation. In comparison with the support error, the amplitude error has an extra factor of $1/\sigma_S(\hat\Phi_M)$. Combining Theorem \ref{thm1} (a), Lemma \ref{lemmauncertainty}, and Proposition \ref{thmamp} gives rise to amplitude error approximately of the form 
\[
\frac{C_S M^2\|x\|_2 \|\calH(\eta)\|_2}{x_{\min}\sigma_{S}^4(\Phi_M)} + \frac{\|\eta\|_2}{\sigma_{S}(\Phi_M)}.
\]
The $M^2$ factor in the first term is natural because the columns of $\Phi_M$ are not normalized to have unit length and so $\sigma_{S}(\Phi_M)$ has a $\sqrt{M}$ scaling factor. 
}


\section{Super-resolution limit of ESPRIT}
\label{secsingular}

This section is devoted to the analysis of the resolution limit of ESPRIT under a separated clumps assumption on $\Omega$, where this model was proposed in \cite{li2017stable,li2019music}.

\subsection{The minimum singular value of $\Phi_M$}
We first define the separated clumps model (see \cite{li2017stable,li2019music} for more details).

\begin{assumption}[Separated clumps model]
	\label{def:clumps}
	Let $M$ and $A$ be a positive integers and $\Omega\subset\T$ have cardinality $S$. We say that $\Omega$ consists of $A$ {\it separated clumps} with parameters $(M,S,\alpha,\beta)$ if the following hold.
	\begin{enumerate}
		\item 
		$\Omega$ can be written as the union of $A$ disjoint sets $\{\Lambda_a\}_{a=1}^A$, where each {\it clump} $\Lambda_a$ is contained in an interval of length $1/M$. 
		\item 
		$\Delta\geq\alpha/M$ with $\max_{1\leq a\leq A} (\lambda_a-1) < {1}/{\alpha}$ where $\lambda_a$ is the cardinality of $\Lambda_a$.
		\item 
		If $A>1$, then the distance between any two clumps is at least $\beta/M$.
	\end{enumerate}
\end{assumption}

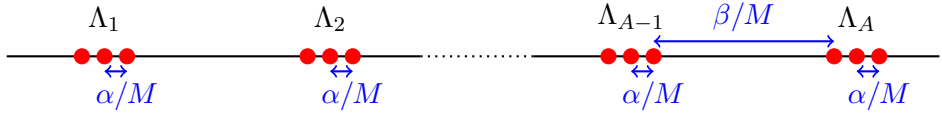
\begin{figure}[h]
	\centering
	\begin{tikzpicture}[xscale = 1,yscale = 1]
	\draw[thick] (-6,0) -- (-0.5,0);
	\filldraw[red] (-5,0) circle (0.1cm);		
	\filldraw[red] (-4.7,0) circle (0.1cm);		
	\filldraw[red] (-4.4,0) circle (0.1cm);		
	\draw[blue,thick,<->] (-4.7,-0.2) -- (-4.4,-0.2);
	\node[blue,below] at (-4.4,-0.2) {$\alpha/M$};
	
	
	\node[above] at (-4.7,0.2) {$\Lambda_1$};
	\filldraw[red] (-2,0) circle (0.1cm);		
	\filldraw[red] (-1.7,0) circle (0.1cm);		
	\filldraw[red] (-1.4,0) circle (0.1cm);		
	\draw[blue,thick,<->] (-1.7,-0.2) -- (-1.4,-0.2);
	\node[blue,below] at (-1.4,-0.2) {$\alpha/M$};
	\node[above] at (-1.7,0.2) {$\Lambda_2$};
	\draw[dotted,thick] (-0.5,0) -- (1,0);
	\draw[thick] (1,0) -- (6.4,0);
	\filldraw[red] (2,0) circle (0.1cm);		
	\filldraw[red] (2.3,0) circle (0.1cm);		
	\filldraw[red] (2.6,0) circle (0.1cm);		
	\draw[blue,thick,<->] (2.3,-0.2) -- (2.6,-0.2);
	\node[blue,below] at (2.6,-0.2) {$\alpha/M$};
	\node[above] at (2.3,0.2) {$\Lambda_{A-1}$};
	\filldraw[red] (5,0) circle (0.1cm);		
	\filldraw[red] (5.3,0) circle (0.1cm);		
	\filldraw[red] (5.6,0) circle (0.1cm);		
	\draw[blue,thick,<->] (5.3,-0.2) -- (5.6,-0.2);
	\node[blue,below] at (5.6,-0.2) {$\alpha/M$};
	\node[above] at (5.3,0.2) {$\Lambda_A$};
	\draw[blue,thick,<->] (2.6,0.2) -- (5,0.2);
	\node[blue,above] at (3.8,0.2) {$\beta/M$};
	\end{tikzpicture}
	\caption{$\Omega = \cup_a \Lambda_a$ where each $\Lambda_a$ contains 3 equally spaced atoms with spacing $\alpha/M$. The clumps are separated at least by $\beta/M$. 
	}
	\label{FigDemoClumps1}
\end{figure}

An example of separated clumps is shown in Figure \ref{FigDemoClumps1}. In applications there are many types of discrete sets that consist of separated clumps. One extreme example is when $\Omega$ is a single clump containing all $S$ points. This is considered to be the worst case configuration for $\Omega$ in the sense that super-resolution will be highly sensitive to noise. Another extreme instance is when all $S$ points in $\Omega$ are separated by $1/M$, so we can think of $\Omega$ as having $S$ clumps each containing single point. This is widely considered to be the best case scenario, in which super-resolution is least sensitive to noise. While our assumption applies to both extremes, the in-between case where $\Omega$ consists of several clumps each of modest size is the most interesting, and developing a theory of super-resolution for this case is most challenging.

We proved in \cite{li2017stable,li2019music} that, under this separated clumps model, $\sigma_{\min}(\Phi_M)$ is an $\ell^2$ aggregate of $A$ terms, where each term only depends on the ``geometry" of each clump. 

\begin{theorem}
	\label{thmsin}
	Fix positive integers $M$ and $S$ such that $M\geq S^2$. Assume $\Omega$ satisfies Assumption \ref{def:clumps} with parameters $(M,S,\alpha,\beta)$ for some $\alpha>0$ and 
	\begin{equation}
	\label{eq:sep2}
	\beta\geq \max_{1\leq a\leq A} \frac{20S^{1/2}\lambda_a^{5/2}}{\alpha^{1/2}}.
	\end{equation}
	Then there exist explicit constants $C_a := C_a(\lambda_a,M)>0$ such that
	\begin{equation}
	\hspace{-.5em} \sigma_{\min}(\Phi_M)
	\geq \sqrt{M}\(\sum_{a=1}^A \big( C_a \alpha^{-\lambda_a+1} \big)^2 \)^{-\frac{1}{2}}. 
	\end{equation}
\end{theorem}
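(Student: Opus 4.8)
The plan is to prove the bound via a duality (biorthogonal system) argument, building the dual objects \emph{clump by clump} so that the sum-of-squares structure of the bound emerges automatically. Recall that $\sigma_{\min}(\Phi_M)=\min_{\|c\|_2=1}\|\Phi_M c\|_2$, and that the $k$-th entry of $\Phi_M c$ equals $p_c(k):=\sum_{\ell=1}^S c_\ell e^{-2\pi i k\omega_\ell}$. Suppose that for each node $\omega_j$ we produce a vector $q^{(j)}\in\C^{M+1}$ whose associated trigonometric polynomial $f_j(\omega):=\sum_{k=0}^M \overline{q^{(j)}_k}\,e^{-2\pi i k\omega}$ satisfies the interpolation conditions $f_j(\omega_\ell)=\delta_{j\ell}$ for all $\ell$. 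Then $c_j=\langle q^{(j)},\Phi_M c\rangle$, so $|c_j|\le\|q^{(j)}\|_2\,\|\Phi_M c\|_2$ by Cauchy--Schwarz, and summing over $j$ gives $\|c\|_2^2\le\big(\sum_{j}\|q^{(j)}\|_2^2\big)\|\Phi_M c\|_2^2$, hence
\[
\sigma_{\min}(\Phi_M)\ \ge\ \Big(\sum_{j=1}^S\|q^{(j)}\|_2^2\Big)^{-1/2}.
\]
Since $\|q^{(j)}\|_2=\|f_j\|_{L^2(\T)}$ by Parseval, the whole problem reduces to constructing, for each $j$, a trigonometric polynomial of degree $\le M$ that equals $1$ at $\omega_j$, vanishes at all other nodes, and has $L^2$-norm controlled by the local geometry: for $j\in\Lambda_a$ we want $\|f_j\|_{L^2(\T)}^2\le (C_a\,\alpha^{-\lambda_a+1})^2/M$ (up to absorbing a $\sqrt{\lambda_a}$ into $C_a$). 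Summing this over $j\in\Lambda_a$ and then over $a$ produces exactly $M^{-1}\sum_a (C_a\alpha^{-\lambda_a+1})^2$, and the displayed inequality then yields the theorem.

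\textbf{Construction of $f_j$.}
I would factor $f_j=g_j\cdot h_j$ into a \emph{local} factor handling the clump $\Lambda_a$ containing $\omega_j$ and a \emph{global separation} factor handling the remaining nodes, each a nonnegative-type polynomial (a product of squared Dirichlet kernels) of degree a bounded fraction of $M$; this is one place the hypothesis $M\ge S^2$ is used, to guarantee $\deg g_j+\deg h_j\le M$. The factor $h_j$ must satisfy $h_j(\omega_j)=1$, $h_j\equiv 0$ at every node outside $\Lambda_a$, and $\|h_j\|_{L^\infty(I_a)}=O(1)$ on the arc $I_a\supset\Lambda_a$. One builds it from Fejér-type kernels centered away from $\Lambda_a$; because such a kernel of degree $\asymp M/S$ decays like the inverse square of (degree $\times$ distance), summing the tails over the at most $S$ distant nodes at distance $\ge\beta/M$ contributes a factor $\lesssim S\cdot(S/\beta)^2$, and requiring this to stay bounded is precisely what forces the threshold $\beta\ge\max_a 20 S^{1/2}\lambda_a^{5/2}\alpha^{-1/2}$. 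The factor $g_j$ must satisfy $g_j(\omega_j)=1$ and $g_j(\omega_\ell)=0$ for the other $\lambda_a-1$ nodes of $\Lambda_a$; since these lie in an interval of length $1/M$ with pairwise gaps $\ge\alpha/M$, a Lagrange-type interpolation among them forces $\|g_j\|_{L^2(\T)}\lesssim M^{-1/2}\prod_{\ell\in\Lambda_a,\ \ell\ne j}\frac{1/M}{|\omega_j-\omega_\ell|}\lesssim M^{-1/2}\alpha^{-(\lambda_a-1)}$, with the implied constant depending on $\lambda_a$ (and mildly on $M$). Combining via $\|f_j\|_{L^2}\le\|g_j\|_{L^2}\,\|h_j\|_{L^\infty(I_a)}$ gives the desired per-node norm bound, with $C_a$ collecting the $\lambda_a$-dependent constants from both factors.

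\textbf{The crux.}
The main obstacle is the sharp bound on $\|g_j\|_{L^2}$, i.e.\ the \emph{single-clump} case: a lower bound $\sigma_{\min}\asymp \sqrt M\,\alpha^{\lambda-1}$ for a Vandermonde matrix whose $\lambda$ nodes are crammed into an arc of length $1/M$. The approach is to linearize the node map $\omega\mapsto e^{-2\pi i\omega}$, which is bi-Lipschitz on the short arc, reducing to a real rectangular Vandermonde matrix $[t_\ell^{\,k}]_{0\le k\le M,\ 1\le\ell\le\lambda}$ with nodes $t_\ell$ in a tiny interval; one then uses the classical determinant identity $|\det|=\prod_{i<\ell}|t_i-t_\ell|$ for a cleverly chosen $\lambda\times\lambda$ sub-block (or a divided-difference / extremal-polynomial argument) to convert the determinant bound into a minimum-singular-value bound, while tracking the $\sqrt M$ normalization and verifying that discarding $M+1-\lambda$ rows costs only a $\lambda$-dependent constant. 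One must also check that the linearization error and the interplay of $S$-many factors are negligible, which is again where $M\ge S^2$ and the $\beta$-threshold enter. I expect the genuinely delicate points to be (i) quantitatively controlling $\|h_j\|_{L^\infty(I_a)}=O(1)$ under the stated separation, and (ii) the single-clump lower bound with the exact exponent $\lambda-1$ of $\alpha$; the remaining steps --- summing the per-node bounds over clumps and invoking the duality inequality --- are bookkeeping.
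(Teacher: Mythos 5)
First, note that the paper does not actually prove Theorem \ref{thmsin} in-house: it is imported from \cite{li2017stable,li2019music}. Your overall architecture --- the duality inequality $\sigma_{\min}(\Phi_M)\ge\big(\sum_j\|q^{(j)}\|_2^2\big)^{-1/2}$, Parseval to convert to $\|f_j\|_{L^2(\T)}$, and a clump-localized construction $f_j=g_j h_j$ whose per-clump contributions aggregate in $\ell^2$ --- is exactly the mechanism of the cited proof, and it is the right way to see why the bound has the form $\sqrt{M}\big(\sum_a(C_a\alpha^{-\lambda_a+1})^2\big)^{-1/2}$. The reduction steps, the target per-node bound $\|f_j\|_{L^2}^2\lesssim(C_a\alpha^{-\lambda_a+1})^2/M$, and the bookkeeping are all correct.

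The genuine gap is in your ``crux'' paragraph. You propose to establish the single-clump estimate by taking a $\lambda\times\lambda$ sub-block of rows, using $|\det V|=\prod_{i<\ell}|t_i-t_\ell|$, and ``verifying that discarding $M+1-\lambda$ rows costs only a $\lambda$-dependent constant.'' That last claim is false, and it is precisely where the theorem lives. Any $\lambda$ consecutive rows form a square Vandermonde whose nodes are at mutual distance $\gtrsim\Delta=\alpha/M$, so the best lower bound obtainable from such a block (via the determinant, or even via the sharper Lagrange/inverse-norm route) is of order $c(\lambda)\,\Delta^{\lambda-1}=c(\lambda)(\alpha/M)^{\lambda-1}$; stacking $\sim M/\lambda$ disjoint blocks only recovers an extra $\sqrt{M}$. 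The target is $\sqrt{M}\,\alpha^{\lambda-1}=\sqrt{M}\,(M\Delta)^{\lambda-1}$, so the block route undershoots by a factor of $M^{\lambda-1}$ --- the entire super-resolution gain from observing the clump across the full bandwidth. The correct route is the one already implicit in your construction paragraph: build $g_j$ as a product of $\lambda_a-1$ normalized differences of kernels of degree $\asymp M/\lambda_a$ (divided differences at scale $1/M$), so each denominator is $\gtrsim\alpha/\lambda_a$ rather than $\alpha/M$, times one normalized kernel with $\|\cdot\|_{L^2}\asymp M^{-1/2}$; this is what produces $\|g_j\|_{L^2}\lesssim M^{-1/2}(C\lambda_a/\alpha)^{\lambda_a-1}$ with no determinant argument at all. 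Two smaller points: the estimate $\|g_jh_j\|_{L^2(\T)}\le\|g_j\|_{L^2}\|h_j\|_{L^\infty(I_a)}$ is not valid as written, since $g_j$ has mass outside $I_a$ where you have not bounded $h_j$ --- you need quantitative decay of the kernels off $I_a$, which is where the tail sums and the threshold \eqref{eq:sep2} actually enter; and your heuristic $S\cdot(S/\beta)^2=O(1)$ does not reproduce the $\alpha^{-1/2}$ dependence in \eqref{eq:sep2}, because the tail of $h_j$ must beat the $\alpha^{-(\lambda_a-1)}$ growth of the local factor, not merely stay bounded.
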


The main feature of this theorem are the exponents on $\OmegaRF=1/\alpha$, which depend on the cardinality of each clumps as opposed to the total number of points. Let $\lambda$ be the cardinality of the largest clump: $\lambda = \max_{1\leq a\leq A} \lambda_a$.

Theorem \ref{thmsin} implies the following bound (which is looser, but easier to digest)
\begin{equation}
\label{eqlowert1}
\sigma_{\min}(\Phi_M) \ge
C \sqrt{M}\ \OmegaRF^{-\lambda+1}.
\end{equation}
Previous results \cite{donoho1992superresolution,demanet2015recoverability} strongly suggest\footnote{We avoid using the word ``imply" because those papers studied a similar inverse problem but with continuous Fourier measurements and bounded measures on $\R$, rather than discrete ones and bounded measures on $\T$, like the ones considered here.} that
\begin{equation}
\label{eqlowert2}
\sigma_{\min}(\Phi_M)\geq C\sqrt{M}\ \OmegaRF^{-S+1}. 
\end{equation}
By comparing the inequalities \eqref{eqlowert1} and \eqref{eqlowert2}, we see the former is dramatically better when all of the point sources are not located within a single clump. These results are also consistent with our intuition that $\sigma_{\min}(\Phi_M)$ is smallest when $\Omega$ consists of $S$ closely spaced points; more details about this can be found in \cite{li2017stable}.

\subsection{Super-resolution limit of ESPRIT}
\label{secSResprit}

Theorem \ref{thm1} provides an error bound of ESPRIT in terms of the singular values of Vandermonde matrices. 
Thanks to Theorem \ref{thmsin}, we can obtain the following explicit error bound for ESPRIT under the separated clumps model (see Appendix \ref{secproofthmesprit} for the proof).

\begin{theorem}
	\label{thmesprit}
	Fix positive integers $M$ and $S$ such that $M\geq S^2$. Suppose $M$ is even and $L=M/2$, and $\Omega$ satisfies Assumption \ref{def:clumps} with parameters $(M/2,S,\alpha,\beta)$  for some $\alpha>0$ and $\beta$ satisfying \eqref{eq:sep2}. Then there exist explicit constants $c_a:=C_a(\lambda_a,M/2)$ such that the following hold:
	\begin{enumerate}[(a)]
		\item 
		(Moderate noise regime) if the noise level is moderately small such that 
		\begin{equation}
		\|\calH(\eta)\|_2  \le \frac{ x_{\min} M  
		}{16 \sqrt { S}   2^{S}} \(\sum_{a=1}^A \big( c_a \alpha^{-\lambda_a+1} \big)^2 \)^{-1}, 
		\end{equation}
		then the output of ESPRIT satisfies 
		\begin{equation}
		\md(\Omega,\hat \Omega)
		\le \frac{80\, S^2 4^S }{x_{\min}M }  \(\sum_{a=1}^A \big( c_a \alpha^{-\lambda_a+1} \big)^2 \)^{\frac 3 2}  \|\calH(\eta)\|_2;
		\end{equation}
		\item 
		(Small noise regime) If the noise level is sufficiently small such that
		\begin{equation}
		\|\calH(\eta)\|_2  \le \frac{\alpha x_{\min}  }{80 S^{5/2} 4^S} \(\sum_{a=1}^A \big( c_a \alpha^{-\lambda_a+1} \big)^2 \)^{-2},
		\end{equation}
		then the output $\hat\Omega$ of ESPRIT satisfies  
		\begin{align}
		\hspace{-2em}
		{\rm md}(\Omega,\hat \Omega)
		&\le \frac{40\sqrt{S} 4^S}{x_{\min}  M}
		\(\sum_{a=1}^A \big( c_a \alpha^{-\lambda_a+1} \big)^2 \)
		 \|\calH(\eta)\|_2.
		\end{align}
			\end{enumerate}
\end{theorem}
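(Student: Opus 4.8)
The plan is to specialize the deterministic bound of Theorem~\ref{thm1} to the separated clumps model by substituting the two available lower bounds for the singular values appearing there. Since $M$ is even and $L=M/2$, the two rectangular Vandermonde matrices coincide, $\Phi_L=\Phi_{M-L}=\Phi_{M/2}$, so $\sigma_S(\Phi_L)=\sigma_S(\Phi_{M-L})=\sigma_{\min}(\Phi_{M/2})$; also \eqref{eq:L} holds for this $L$ (since $M\ge S^2$ and $M$ even force $M\ge 2S$). Write $\rho:=\bigl(\sum_{a=1}^A(c_a\alpha^{-\lambda_a+1})^2\bigr)^{1/2}$ throughout. First I would collect the two ingredients. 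Lemma~\ref{lemmauncertainty} furnishes the universal bound $\sigma_S(U_0)\ge 2^{-S}$, i.e.\ $\sigma_S^2(U_0)\ge 4^{-S}$, with no dependence on $\Omega$. Applying Theorem~\ref{thmsin} to $\Phi_{M/2}$ — whose nodes satisfy Assumption~\ref{def:clumps} with parameters $(M/2,S,\alpha,\beta)$ and whose clump separation $\beta$ obeys \eqref{eq:sep2} — gives $\sigma_{\min}(\Phi_{M/2})\ge\sqrt{M/2}\,\rho^{-1}$ with constants $c_a=C_a(\lambda_a,M/2)$.

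For part~(a) I would first verify the hypothesis of Theorem~\ref{thm1}(a). Using $\sigma_S(U_0)\ge 2^{-S}$ and $\sigma_S(\Phi_L)\sigma_S(\Phi_{M-L})=\sigma_{\min}^2(\Phi_{M/2})\ge(M/2)\rho^{-2}$, the right-hand side of \eqref{thm1con} is at least $x_{\min}M2^{-S}/(8\sqrt{2S}\,\rho^2)$, which dominates $x_{\min}M/(16\sqrt S\,2^S\rho^2)$ since $16>8\sqrt2$; hence the part~(a) noise assumption of the present theorem implies \eqref{thm1con}. For the conclusion, substitute $\sqrt{L+1}=\sqrt{M/2+1}\le\sqrt M$, $\sigma_S^{-2}(U_0)\le 4^S$, and $\sigma_S^2(\Phi_L)\sigma_S(\Phi_{M-L})=\sigma_{\min}^3(\Phi_{M/2})\ge(M/2)^{3/2}\rho^{-3}$ into \eqref{thm1md1}; the $M$-dependence collapses via $\sqrt M/(M/2)^{3/2}=2\sqrt2/M$, giving $\md(\Omega,\hat\Omega)\le 40\sqrt2\,S^2 4^S\rho^3\|\calH(\eta)\|_2/(x_{\min}M)$, and $40\sqrt2<80$ yields the stated bound.

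Part~(b) has the same structure, invoking Theorem~\ref{thm1}(b). Here one additionally uses $\Delta\ge\alpha/(M/2)=2\alpha/M$ and $L+1=M/2+1\le M$ to check that the part~(b) noise assumption implies \eqref{thm1con2} (the factor $80$ against an effective $40$ provides the needed slack), and then substitutes $\sigma_S^{-2}(U_0)\le 4^S$ and $\sigma_S(\Phi_L)\sigma_S(\Phi_{M-L})=\sigma_{\min}^2(\Phi_{M/2})\ge(M/2)\rho^{-2}$ into \eqref{thm1md2}. The $M$-dependence now collapses via $1/(M/2)=2/M$ and one lands exactly on $\md(\Omega,\hat\Omega)\le 40\sqrt S\,4^S\rho^2\|\calH(\eta)\|_2/(x_{\min}M)$.

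I do not anticipate a genuine difficulty, as the theorem is essentially the composition of Theorem~\ref{thm1}, Theorem~\ref{thmsin}, and Lemma~\ref{lemmauncertainty}. The two places needing care are (i) invoking Theorem~\ref{thmsin} with its size parameter set to $M/2$ rather than $M$, so that the condition ``$M\ge S^2$'' there is effectively being used as $M/2\ge S^2$ — one should either strengthen the hypothesis to $M\ge 2S^2$ or track the constants accordingly — and (ii) the honest bookkeeping of the absolute constants, so that the clean values $16,80,80,40$ displayed in the statement are genuinely attained rather than an $O(\cdot)$ surrogate.
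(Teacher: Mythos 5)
Your proposal is correct and follows essentially the same route as the paper, which likewise obtains the result by feeding the bound $\sigma_{\min}(\Phi_{M/2})\ge\sqrt{M/2}\,\bigl(\sum_a(c_a\alpha^{-\lambda_a+1})^2\bigr)^{-1/2}$ from Theorem \ref{thmsin} and $\sigma_S(U_0)\ge 2^{-S}$ from Lemma \ref{lemmauncertainty} into Theorem \ref{thm1}; your constant bookkeeping checks out. Your remark (i) is a fair catch — the paper's proof also applies Theorem \ref{thmsin} at scale $M/2$ without adjusting the hypothesis $M\ge S^2$ — but this is a cosmetic issue with the stated hypothesis, not a flaw in your argument.
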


To simplify the results, let $\lambda$ be the cardinality of the largest clump. According to \cite[Theorem 2]{li2017stable}, the constant $c_a$ can be expressed as $c_a = C(\lambda_a) (\frac{M}{2\lambda_a})^{\lambda_a-1} \lfloor \frac{M}{2\lambda_a}\rfloor^{-(\lambda_a-1)} $  which weakly depends on $M$ and is bounded above independently of $M$ when $M\geq 4S$.  Notice that ${\rm SRF} = 1/\alpha$. Theorem \ref{thmesprit} can then be simplified to
\begin{center}
	\begin{tabular}{|l|l|}
		\hline 
		\text{Noise assumption} &\text{ESPRIT Error} \\ \hline
		$\|\calH(\eta)\|_2 \lesssim M \, {\rm SRF}^{-(2\lambda-2)}$ &$
		{\rm md}(\Omega,\hat \Omega)
		\lesssim  {\rm SRF}^{3\lambda-3} \|\calH(\eta)\|_2 /M $ \\ \hline
		$\|\calH(\eta)\|_2 \lesssim  \,  {\rm SRF}^{-(4\lambda-3)}$ &${\rm md}(\Omega,\hat \Omega) 
		\lesssim {\rm SRF}^{2\lambda-2}  \|\calH(\eta)\|_2/ M$ \\ \hline
	\end{tabular}
\end{center}
Here, the implicit constants are independent of $M$ and $\Omega$, but depend on $S$, $\{\lambda_a\}_{a=1}^A$ and the amplitudes $x$.

\subsection{In relation to min-max bounds}
\label{secminmax}

In order to compare our results with the min-max error \cite{batenkov2019super}, we next express the noise level in terms of $\|\eta\|_\infty$ using the inequality $\|\calH(\eta)\|_2\leq \|\calH(\eta)\|_F\leq M \|\eta\|_\infty$: 
\begin{center}
	\begin{tabular}{|l|l|}
		\hline 
		\text{Noise assumption} &\text{ESPRIT Error} \\ \hline
		$\|\eta\|_\infty \lesssim {\rm SRF}^{-(2\lambda-2)}$ &$
		{\rm md}(\Omega,\hat \Omega)
		\lesssim  {\rm SRF}^{3\lambda-3} \|\eta\|_\infty $ \\ \hline
		$\|\eta\|_\infty \lesssim {\rm SRF}^{-(4\lambda-3)}/M$ &${\rm md}(\Omega,\hat \Omega) 
		\lesssim {\rm SRF}^{2\lambda-2}  \|\eta\|_\infty$ \\ \hline
	\end{tabular}
\end{center}

\commentout{

\begin{align*}
\text{(a) Moderate noise: } 
\|\eta\|_\infty \lesssim  {\rm SRF}^{-(2\lambda-2)} \quad 
\text{ESPRIT error: } 
{\rm md}(\Omega,\hat \Omega)
&\lesssim  {\rm SRF}^{3\lambda-3} \|\eta\|_{\infty} , \\
\text{(b) Small noise: }
\|\eta\|_\infty \lesssim   {\rm SRF}^{-(4\lambda-3)}/ M 
\quad 
\text{ESPRIT error: } 
{\rm md}(\Omega,\hat \Omega) 
&\lesssim {\rm SRF}^{2\lambda-2}  \|\eta\|_{\infty}.
\end{align*}
}
It was shown in \cite{batenkov2019super} that, for a slightly different model with continuous Fourier measurements, the min-max error for estimating the support is in the order of ${\rm SRF}^{-(2\lambda -2)}\|\eta\|_\infty/M$. There is evidence \cite{donoho1992superresolution,demanet2015recoverability,li2017stable,batenkov2018conditioning,batenkov2019super} showing that the min-max rates with continuous or discrete Fourier measurements are of the same order, so it is reasonable to assert that ${\rm SRF}^{-(2\lambda -2)}\|\eta\|_\infty/M$ is the min-max rate for our model. Our main result shows that ESPRIT achieves the min-max rate up to a factor of $M$ when noise is sufficiently small, i.e. $\|\eta\|_\infty \lesssim   {\rm SRF}^{-(4\lambda-3)}/ M$. 

If $\eta$ is independent Gaussian noise, i.e., $\eta \sim \mathcal{N}(0,\sigma^2 I)$, $\|\calH(\eta)\|_2$ satisfies the following concentration inequality \cite[Theorem 4]{liao2015multi}:
\begin{proposition}
\label{propnoise}
	If $\eta \sim \mathcal{N}(0,\sigma^2 I)$, then for any $t>0$,
	\label{lemmanoise}
	\begin{align*}
	\mathbb{E} \|\calH(\eta)\|_2 
	&\le \sigma \sqrt{2 \max(L+1,M-L+1)\log(M+2)},
	\\
	\mathbb{P}\left\{ 
	\|\calH(\eta)\|_2 \ge t
	\right\}
	&\le (M+2) \exp\left(-\frac{t^2}{2\sigma^2 \max(L+1,M-L+1)}\right). 
	\end{align*}  
\end{proposition}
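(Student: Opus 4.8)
The plan is to recognize $\calH(\eta)$ as a Gaussian series of fixed matrices and apply the standard matrix concentration inequality for such series (Tropp's matrix Gaussian tail bound), which is the route of \cite{liao2015multi}. Write $\eta=\sigma\gamma$ with $\gamma=(\gamma_0,\dots,\gamma_M)$ a vector of i.i.d.\ standard normals, so that
$$\calH(\eta)=\sigma\sum_{k=0}^{M}\gamma_k A_k,\qquad A_k:=\calH(e_k)\in\C^{(L+1)\times(M-L+1)},$$
where $e_k$ is the $k$-th coordinate vector of $\C^{M+1}$, i.e.\ $A_k$ is the $0/1$ matrix with ones exactly on the $k$-th anti-diagonal $\{(i,j):i+j=k\}$. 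Since each $A_k$ is supported on a single anti-diagonal, both $A_kA_k^*$ and $A_k^*A_k$ are diagonal, and a direct count of the indices gives
$$\sum_{k=0}^{M}A_kA_k^*=(M-L+1)\,I_{L+1},\qquad \sum_{k=0}^{M}A_k^*A_k=(L+1)\,I_{M-L+1}.$$
Hence the matrix-variance parameter is $\nu:=\max\big\{\|\sum_k A_kA_k^*\|_2,\|\sum_k A_k^*A_k\|_2\big\}=\max(L+1,M-L+1)$, and the ambient dimension is $(L+1)+(M-L+1)=M+2$.

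Next I would invoke the matrix Gaussian series inequality: for $Z=\sum_k\gamma_k A_k$ one has $\mathbb{E}\|Z\|_2\le\sqrt{2\nu\log(M+2)}$ and $\mathbb{P}\{\|Z\|_2\ge u\}\le(M+2)\exp(-u^2/(2\nu))$ for every $u>0$. Substituting $u=t/\sigma$, and multiplying the expectation bound by $\sigma$, produces both displayed estimates verbatim.

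The computation is essentially routine, so I do not expect a genuine obstacle; the only step needing a little care is the combinatorial identity for $\sum_k A_kA_k^*$ and $\sum_k A_k^*A_k$, whose content is that the anti-diagonal indicator matrices have Gram matrices that are scalar multiples of the identity. If one wishes to avoid citing the matrix Chernoff/Gaussian machinery, an equivalent argument proceeds by an $\varepsilon$-net over the product of the two unit spheres: for fixed unit vectors $u,v$ the scalar $u^*\calH(\eta)v$ is Gaussian with variance $\sigma^2\sum_k|u^*A_kv|^2\le\sigma^2\min(L+1,M-L+1)$ (Cauchy--Schwarz together with the variance identities above), so a one-dimensional Gaussian tail bound, a union bound over nets of size at most $9^{L+1}\cdot 9^{M-L+1}$, and the usual ``operator norm via an $\varepsilon$-net'' estimate reproduce the same bound up to universal constants. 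Finally, if the noise is complex, i.e.\ $\eta\in\C^{M+1}$ has i.i.d.\ complex Gaussian entries, one splits into real and imaginary parts and only the constants change.
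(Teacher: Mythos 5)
Your argument is correct: the decomposition $\calH(\eta)=\sigma\sum_k\gamma_k A_k$ into anti-diagonal indicator matrices, the variance identities $\sum_k A_kA_k^*=(M-L+1)I_{L+1}$ and $\sum_k A_k^*A_k=(L+1)I_{M-L+1}$, and Tropp's matrix Gaussian series bound with ambient dimension $M+2$ reproduce both stated inequalities exactly. This is the same route as the reference the paper cites for this proposition, so there is nothing to add.
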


In the case of Gaussian noise, $\mathbb{E} \|\calH(\eta)\|_2 $ scales like $\sigma \sqrt{M \log M}$, and $\|\eta\|_\infty$ scales like $\sqrt{\log M}\sigma$.
In the small noise regime where $\sigma \sqrt{M \log M} \propto {\rm SRF}^{-(4\lambda -3)}$, the ESPRIT error becomes 
\begin{equation}
\label{espritres2}
{\rm md}(\Omega,\hat \Omega) 
\propto \sqrt{\frac{\log M}{ M}}  {\rm SRF}^{2\lambda-2}   \frac{\sigma}{x_{\min}}
\quad \text{ or } \quad 
{\rm md}(\Omega,\hat \Omega) 
\propto \frac{{\rm SRF}^{2\lambda-2}}{x_{\min} \sqrt M }  \|\eta\|_\infty.
\end{equation}
In the case of Gaussian noise, our result matches 
the min-max rate up to a factor of $\sqrt{M}$ when noise is sufficiently small.

\subsection{ESPRIT in the well-separated case}

In this section, we derive stability bounds for ESPRIT when $\Delta\geq C/M$ for a reasonable constant $C>1$, we refer to as the {\it well-separated} case. The stability of ESPRIT in this regime is an easy consequence of the machinery we have developed so far. The key result that we employ is the inequality from \cite{moitra2015matrixpencil},
\begin{equation}
\label{eq:moitra}
\frac{C-1}{C}\, M \leq \sigma_{S}^2(\Phi_M(\Omega)),
\end{equation}
which holds under the assumption that $\Delta(\Omega)\geq C/M$ for some $C>1$. This was derived by using properties of the Beurling-Selberg majorant function, see \cite{vaaler1985some}. Combining this inequality with our machinery gives us the following result, and the proof is in Appendix \ref{seclemmas}. 

\begin{theorem}
	\label{thm:sep}
	Fix positive integers $M$ and $S$ such that $M\geq 4S$ is even and set $L= M/2$. For any $\mu\in\calM_S$, any minimum separation $\Delta=\Delta(\Omega)\geq 2C/M=C/L$ for some $C>2$, and any noise $\eta\in\C^M$ such that 
	\begin{equation}
	\label{eq:noise}
	\|\calH(\eta)\|_2
	\leq \frac{x_{\min} L}{4\sqrt{2S}} \frac{C-1}{C} \sqrt{1-\frac{C}{C-1}\frac{S}{L}} .
	\end{equation}
	then we have
	\[
	\md(\Omega,\hat\Omega)
	\leq \frac{20 S^2}{x_{\min}}\frac{\sqrt{L+1}}{L^{3/2}} \(\frac{C}{C-1}\)^{3/2} \(1-\frac{C}{C-1}\frac{S}{L}\)^{-1} \|\calH(\eta)\|_2.
	\] 
\end{theorem}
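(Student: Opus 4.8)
Looking at Theorem \ref{thm:sep}, this is essentially a corollary that specializes the general machinery (Theorem \ref{thm1}(a) via Lemma \ref{lemmapsi}, Lemma \ref{lemmamatching}, Lemma \ref{lemmauncertainty}) to the well-separated regime using the Moitra bound \eqref{eq:moitra}. The strategy should be to trace through what each ingredient gives and substitute the well-separation estimate.

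\textbf{Proof proposal.}

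The plan is to invoke Theorem \ref{thm1}(a) with $L = M/2$ and then replace the Vandermonde singular values by the explicit lower bound coming from \eqref{eq:moitra}. First, since $L = M/2$ we have $\sigma_S(\Phi_L) = \sigma_S(\Phi_{M-L}) = \sigma_S(\Phi_{M/2})$, and the hypothesis $\Delta \geq 2C/M = C/L$ together with \eqref{eq:moitra} (applied with $M$ replaced by $L$) gives $\sigma_S^2(\Phi_L) \geq \frac{C-1}{C} L$. Second, I would feed this into Lemma \ref{lemmauncertainty} to get a concrete lower bound on $\sigma_S(U_0)$: the lemma yields $\sigma_S^2(U_0) \geq 1 - S/\sigma_S^2(\Phi_L) \geq 1 - \frac{C}{C-1}\frac{S}{L}$, which is positive precisely under the stated assumptions (this is why the factor $(1 - \frac{C}{C-1}\frac{S}{L})$ appears, and why we need $C > 2$ and $M \geq 4S$ to ensure $\sigma_S^2(\Phi_L) > S$ so that the bound is non-vacuous).

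Next I would check that the noise assumption \eqref{eq:noise} is exactly the instantiation of the moderate-noise condition \eqref{thm1con} after substituting the lower bounds $\sigma_S(\Phi_L) = \sigma_S(\Phi_{M-L}) \geq \sqrt{\frac{C-1}{C} L}$ and $\sigma_S(U_0) \geq \sqrt{1 - \frac{C}{C-1}\frac{S}{L}}$: indeed \eqref{thm1con} reads $\|\calH(\eta)\|_2 \leq \frac{x_{\min}\sigma_S(U_0)\sigma_S(\Phi_L)\sigma_S(\Phi_{M-L})}{4\sqrt{2S}}$, and plugging in gives $\frac{x_{\min}}{4\sqrt{2S}}\sqrt{1 - \frac{C}{C-1}\frac{S}{L}}\cdot \frac{C-1}{C} L = \frac{x_{\min} L}{4\sqrt{2S}}\frac{C-1}{C}\sqrt{1-\frac{C}{C-1}\frac{S}{L}}$, which matches \eqref{eq:noise} exactly. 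So once \eqref{eq:noise} holds, Theorem \ref{thm1}(a) applies.

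Finally I would substitute the singular-value bounds into the conclusion \eqref{thm1md1}: with $\sigma_S^2(U_0) \geq 1 - \frac{C}{C-1}\frac{S}{L}$, $\sigma_S^2(\Phi_L) \geq \frac{C-1}{C}L$, and $\sigma_S(\Phi_{M-L}) \geq \sqrt{\frac{C-1}{C}L} = \left(\frac{C-1}{C}\right)^{1/2}L^{1/2}$, the denominator $x_{\min}\sigma_S^2(U_0)\sigma_S^2(\Phi_L)\sigma_S(\Phi_{M-L})$ is at least $x_{\min}\left(1 - \frac{C}{C-1}\frac{S}{L}\right)\left(\frac{C-1}{C}\right)^{3/2} L^{3/2}$, so \eqref{thm1md1} becomes
\[
\md(\Omega,\hat\Omega) \leq \frac{20 S^2 \sqrt{L+1}\,\|\calH(\eta)\|_2}{x_{\min}\left(1 - \frac{C}{C-1}\frac{S}{L}\right)\left(\frac{C-1}{C}\right)^{3/2} L^{3/2}} = \frac{20 S^2}{x_{\min}}\frac{\sqrt{L+1}}{L^{3/2}}\left(\frac{C}{C-1}\right)^{3/2}\left(1 - \frac{C}{C-1}\frac{S}{L}\right)^{-1}\|\calH(\eta)\|_2,
\]
which is the claimed bound. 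I expect no real obstacle here — the only mild care is checking that $M \geq 4S$ and $C > 2$ together guarantee $1 - \frac{C}{C-1}\frac{S}{L} > 0$ (since $L = M/2 \geq 2S$ gives $\frac{S}{L} \leq \frac12$, and $\frac{C}{C-1} < 2$ for $C > 2$, so the product is $< 1$), so that all the reciprocals are well-defined and the square roots in \eqref{eq:noise} are real. The bookkeeping of which $\sigma_S$ carries which exponent in \eqref{thm1md1} is the one place to be careful, but it is purely mechanical.
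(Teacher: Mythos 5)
Your proposal is correct and follows essentially the same route as the paper's own proof: apply the Moitra bound \eqref{eq:moitra} to get $\sigma_S^2(\Phi_L)=\sigma_S^2(\Phi_{M-L})\geq \frac{C-1}{C}L$, feed this into Lemma \ref{lemmauncertainty} to bound $\sigma_S^2(U_0)\geq 1-\frac{C}{C-1}\frac{S}{L}$, verify that \eqref{eq:noise} implies \eqref{thm1con}, and substitute into \eqref{thm1md1}. The bookkeeping, including the positivity check from $M\geq 4S$ and $C>2$, matches the paper's argument.
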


Let us explain what the inequality in the Theorem means in terms of the number of Fourier samples $M$. We suppress all of the terms that are independent of $M=2L$ and the noise term, the above inequality is of the form
\[
\md(\Omega,\hat\Omega)
\lesssim \frac{\|\calH(\eta)\|_2}{M}. 
\]

\section{Numerical simulations}
\label{secnum}

We next perform numerical simulations to verify Theorem \ref{thm1} and the scaling law in \eqref{espritres2} that was predicted by Theorem \ref{thmesprit}. 
In our simulations, the true support $\Omega$ contains $1,2,3$ or $4$ clumps ($A=1,2,3,4$) of $\lambda$ equally spaced objects consecutively spaced by $\Delta$, while the clusters are separated at least by $\beta/M$ with $\beta \ge 10$ (see Figure \ref{Fig_TransitionImage} (a) for an example). The coefficients $\{x_j\}_{j=1}^S$ have unit magnitudes and random phases.
We set $M=100, L = M/2=50$ and let $\Delta$ vary so that $\OmegaRF$ varies. 
Noise is gaussian: $\eta \sim \mathcal{N}(0,\sigma^2 I)$. 

The support error is measured by the matching distance ${\rm md}(\Omega,\hat\Omega) $. For each parameter setting, we randomly choose the phases of $x$, and run the experiments $100$ times with random noises and the fixed amplitudes $x$. The average support error for this $x$ is taken as the average of support matching distance within these $100$ experiments. In order to test ESPRIT's capability of dealing with arbitrary complex phases, we then take the worst average support error over $10$ random phases.

\subsection{Matching distance versus $\sigma_{\min}(\Phi_{L})$ and SRF}

Our first set of experiments is to verify Theorem \ref{thm1} (b), which proves the scaling law
\begin{equation}
{\rm md}(\Omega,\widehat\Omega) \propto \frac{\rm noise}{\sigma^2_S(\Phi_{M/2})}.
\label{thm1simulation}
\end{equation}
By employing Theorem \ref{thmsin}, the above can be rewritten as 
\begin{equation}
\label{thmsinsimulation2}
{\rm md}(\Omega,\widehat\Omega) \propto {\rm noise}\cdot {\rm SRF}^{2(\lambda-1)}.
\end{equation}
The noise noise level $\sigma$ is fixed in this experiment. For each fixed $\sigma$, we let ${\rm SRF}$ vary and record the average ${\rm md}(\Omega,\widehat\Omega)$ over $100$ experiments of random noises, for the worst random phases of $x$. Figure \ref{Fig_Thm1} displays the log-log plot of the average ${\rm md}(\Omega,\widehat\Omega)$ versus $\sigma_{\min}(\Phi_{M/2})$ and SRF for (a) $A=1$ and $\lambda = 2$ and (b) $A=2$ and $\lambda = 3$. The curves appear to be straight lines and the slopes of these curves verifies the theoretical prediction given by the scaling laws \eqref{thm1simulation} and \eqref{thmsinsimulation2}.

\begin{figure}[h]
	\centering
	\subfigure[$\lambda = 2$]{
		\includegraphics[width=7cm]{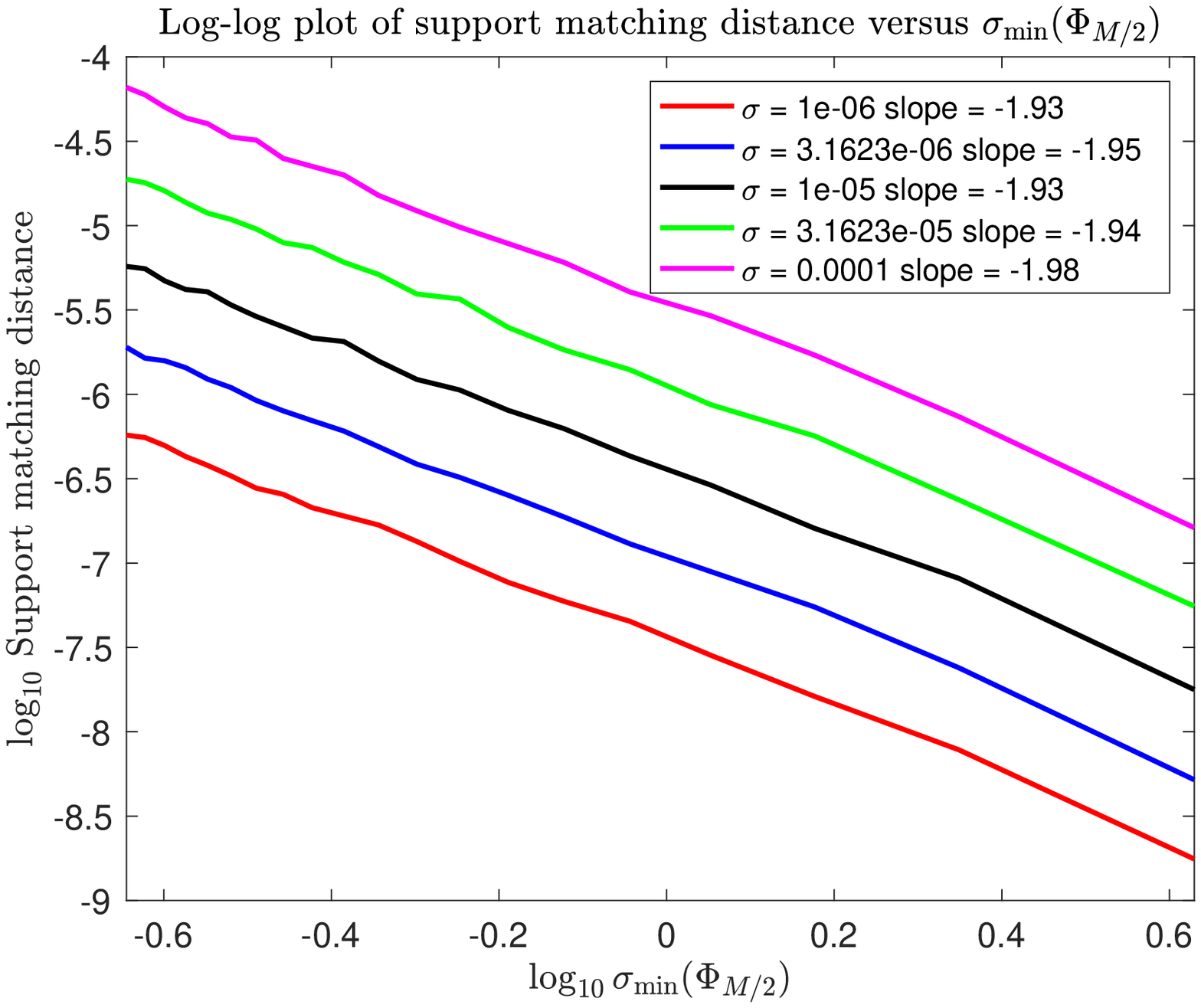}
		\includegraphics[width=7cm]{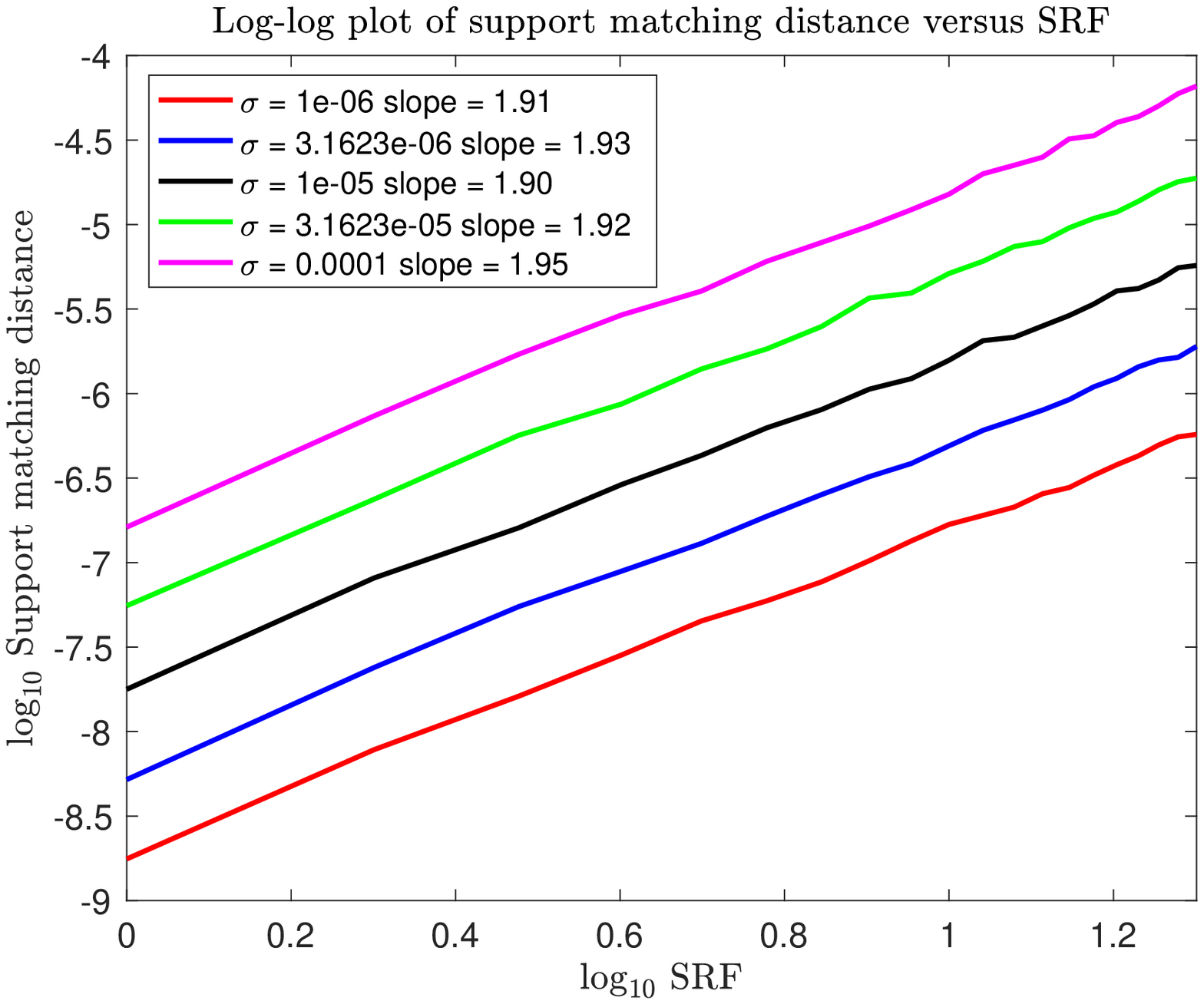}
	}

	\subfigure[$\lambda = 3$]{
		\includegraphics[width=7cm]{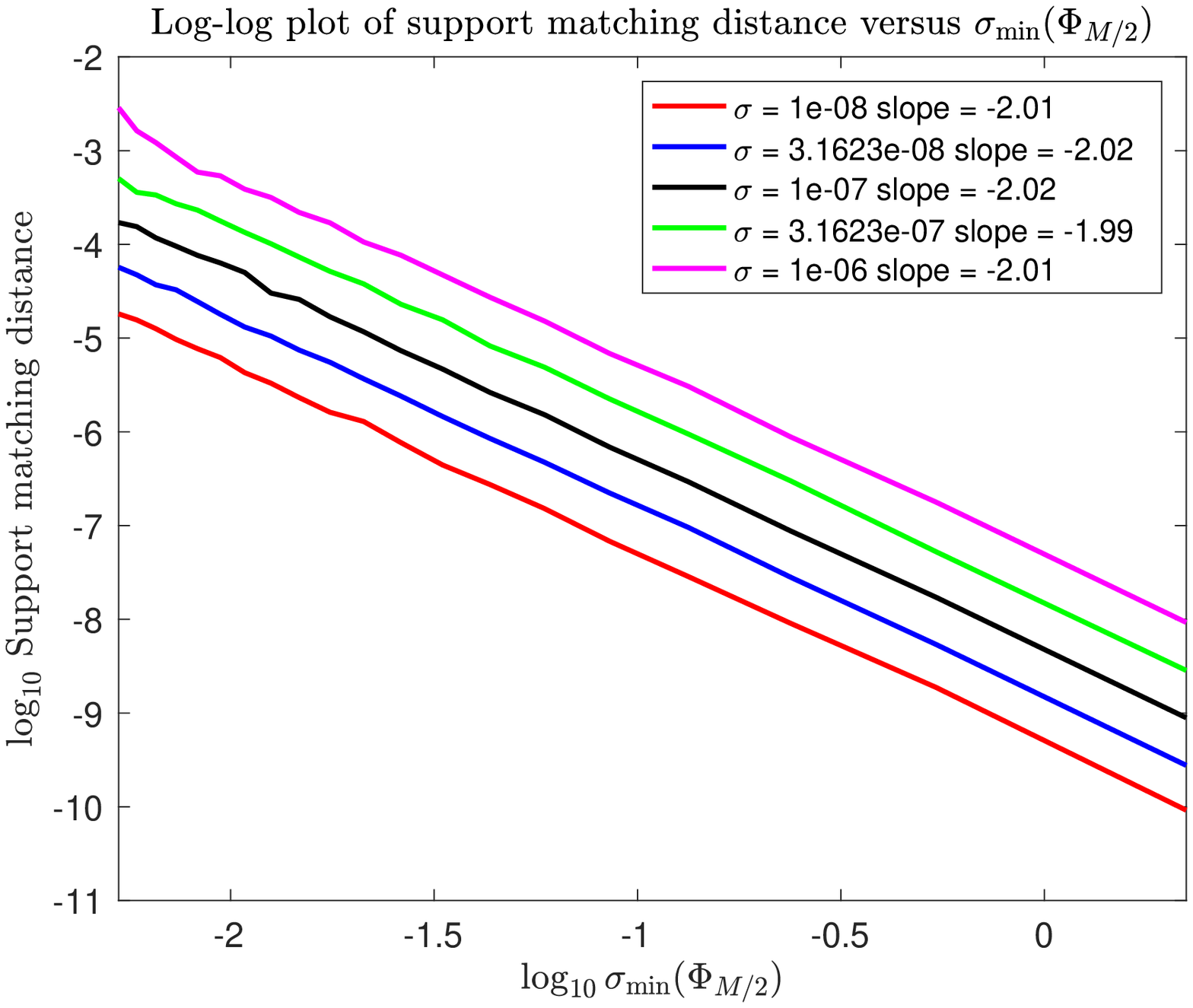}
		\includegraphics[width=7cm]{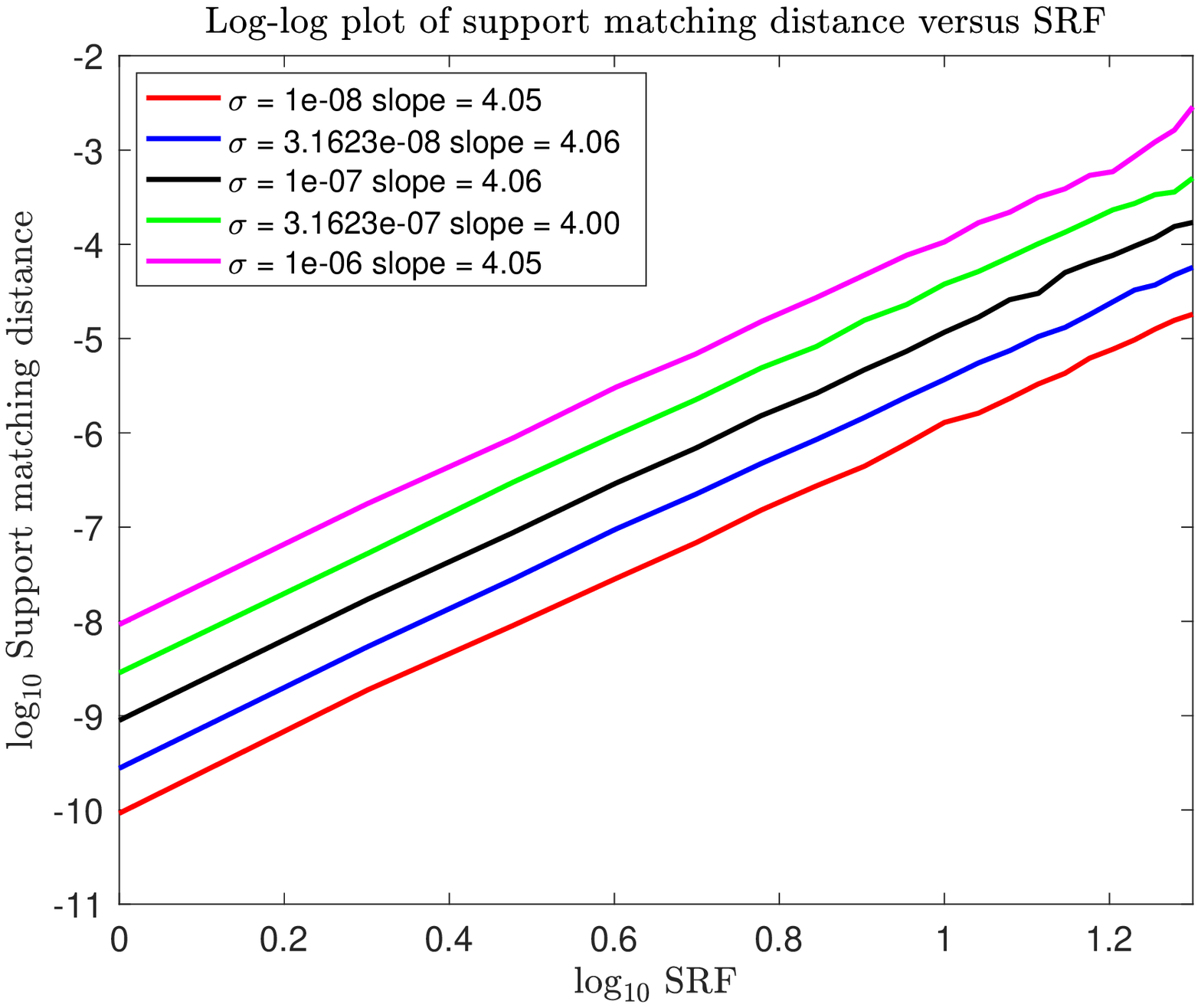}
	}
	\caption{These figures display the log-log plot of the average ${\rm md}(\Omega,\widehat\Omega)$ versus $\sigma_{\min}(\Phi_{M/2})$ (the left column) and SRF (the right column) for $\lambda = 2$ in (a) and $\lambda = 3$ in (b).}
	\label{Fig_Thm1}
\end{figure}

\subsection{Phase transition}

Our second set of experiments is more comprehensive than our first set, because we allow both ${\rm SRF}$ and $\sigma$ to vary. We perform $100$ trials for each SRF and $\sigma$, and we include the performance of MUSIC to serve as a comparison.

\begin{figure}[h]
	\centering
	\hspace{-0.5cm}
	\subfigure[$2$ clumps of $2$ objects 
	]
	{
		\begin{tikzpicture}[xscale = 0.8,yscale = 1]
		\draw[thick] (-5.5,0) -- (-0.1,0);
		\draw[red,thick,->] (-4.7,0) -- (-4.7,2);
		\draw[red,thick,->] (-4.4,0) -- (-4.4,2);
		\draw[blue,thick,<->] (-4.7,-0.2) -- (-4.4,-0.2);
		\node[blue,below] at (-4.1,-0.2) {$ \alpha/M$};
		\draw[blue,thick,<->] (-4.4,1) -- (-1,1);
		\node[blue,above] at (-2.8,1) {at least $\beta/M$};
		\draw[red,thick,->] (-1,0) -- (-1,2);
		\draw[red,thick,->] (-0.7,0) -- (-0.7,2);
		\draw[blue,thick,<->] (-1,-0.2) -- (-0.7,-0.2);
		\node[blue,below] at (-0.7,-0.2) {$ \alpha/M$};
		\node[above] at (-4.6,2.2) {$\Lambda_1$};
		\node[above] at (-0.75,2.2) {$\Lambda_2$};
		\node[below] at (-2.5,-1) {$\lambda = \max(|\Lambda_1|,|\Lambda_2|) = 2$};
		\end{tikzpicture}
	}
	\subfigure[MUSIC]{
		\includegraphics[width=5.7cm]{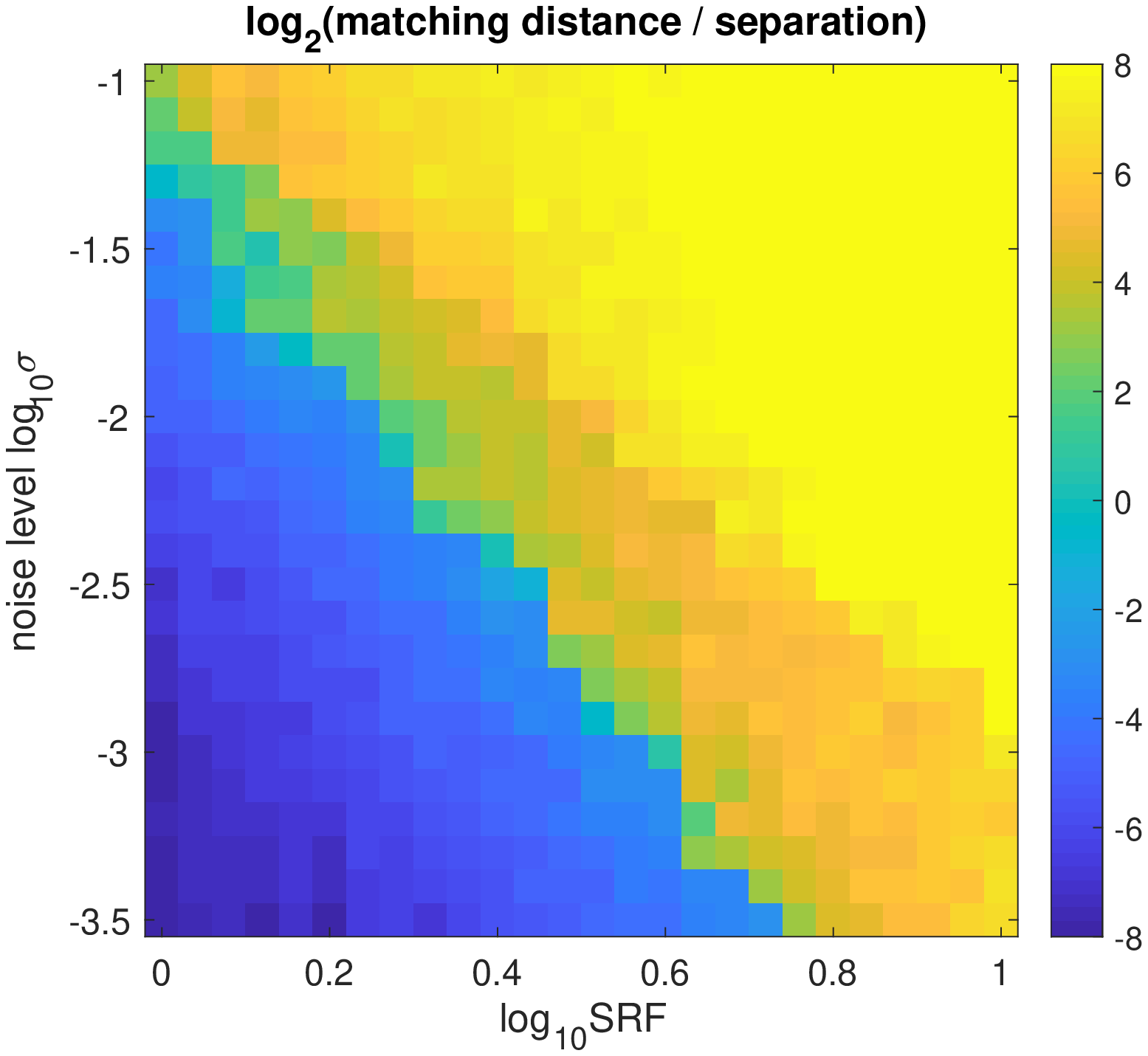}
	}
	\hspace{-1.4cm}
	\subfigure[ESPRIT]{
		\includegraphics[width=5.7cm]{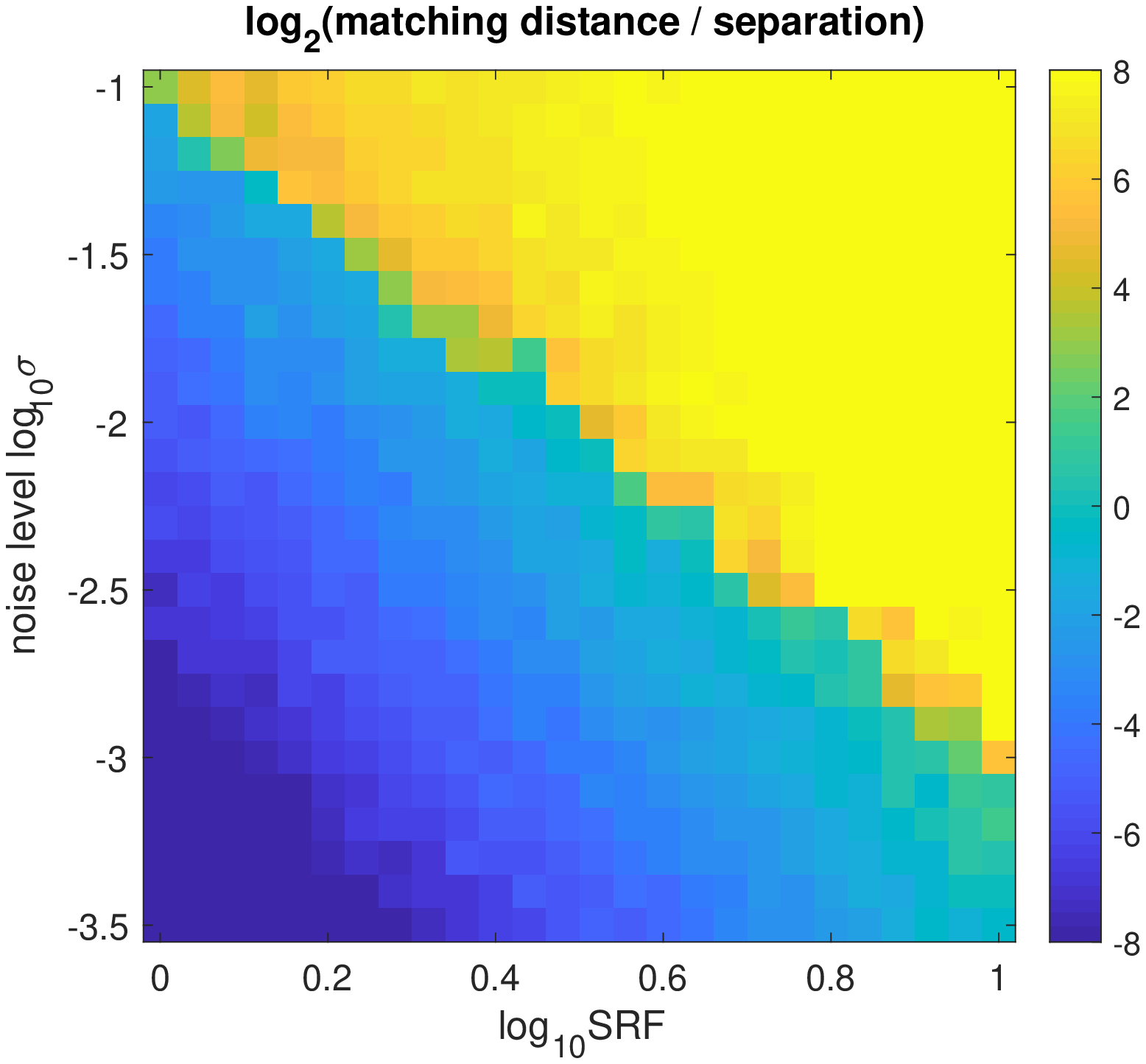}}
	\caption{(b) and (c) displays the average $\log_2 [{{\rm md}(\Omega,\hat\Omega)}/{\Delta}]$ over $100$ trials with respect to $\log_{10}\OmegaRF$ (x-axis) and $\log_{10}\sigma$ (y-axis) when $\Omega$ contains $2$ clusters of $2$ consecutively spaced objects: $A = 2$ and $\lambda = 2$.}
	\label{Fig_TransitionImage}
\end{figure}

Figure \ref{Fig_TransitionImage} (b) and (c) display the average value of $\log_2 [{\rm md}(\Omega,\hat\Omega)/{\Delta}]$ over $100$ trials with respect to $\log_{10}\OmegaRF$ (x-axis) and $\log_{10}\sigma$ (y-axis) when $\Omega$ contains $2$ clumps of $3$ consecutively spaced atoms: $A=2$ and $\lambda =2$. A clear phase transition demonstrates that MUSIC and ESPRIT are capable of resolving closely spaced \textit{complex-valued} objects as long as $\sigma$ is below certain threshold depending on $\OmegaRF$. ESPRIT outperforms MUSIC as it can tolerate a larger amount of noise.

In Figure \ref{Fig_PhaseTransition}, we display the phase transition curves. We say the output $\hat\Omega$ of either MUSIC or ESPRIT is successful if ${\rm md}(\Omega,\hat\Omega) \le {\Delta}/2$. The phase transition curves are extracted such that the success probability within $100$ simulations is above $95\%$.
Figure \ref{Fig_PhaseTransition} displays the phase transition curves of MUSIC and ESPRIT with respect to $\log_{10}\OmegaRF$ (x-axis) and $\log_{10}\sigma$ (y-axis). It appears that all phase transition curves are almost straight lines, manifesting that the noise level $\sigma$ that MUSIC and ESPRIT can tolerate satisfies 
\begin{equation}
\sigma \sim \OmegaRF^{-q(\Omega)}.
\label{eqmusicreslimit}
\end{equation}
The results of this experiment are consistent with the theoretical scaling laws  \eqref{espritres2}, which predicts that $q(\Omega) = 2\lambda -2$ independent of the clump number $A$.

We perform a least squares fitting of the curves by straight lines to obtain an empirical value of the exponent $q(\Omega)$, which is summarized in Table \ref{Table}. The numerical exponents of ESPRIT more or less match our theoretical estimation in \eqref{espritres2}. Our findings also indicate that ESPRIT is more robust to random noise than MUSIC. 

\begin{figure}[h!]
	\centering
	\subfigure[1 clump: $A=1$]{
		\includegraphics[width=7.8cm]{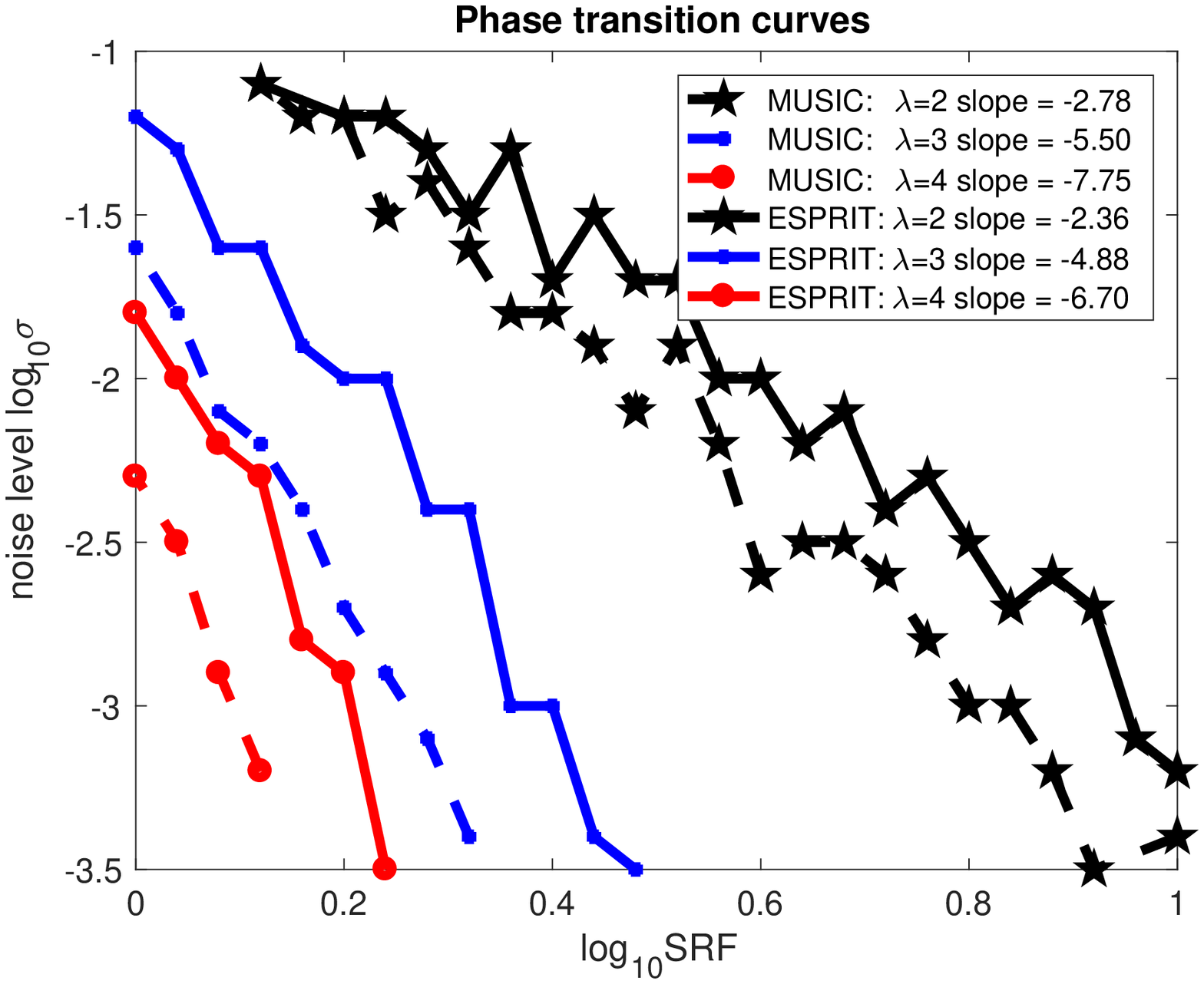}
	}
	\hspace{-1.2cm}
	\subfigure[2 clumps: $A=2$]{
		\includegraphics[width=7.8cm]{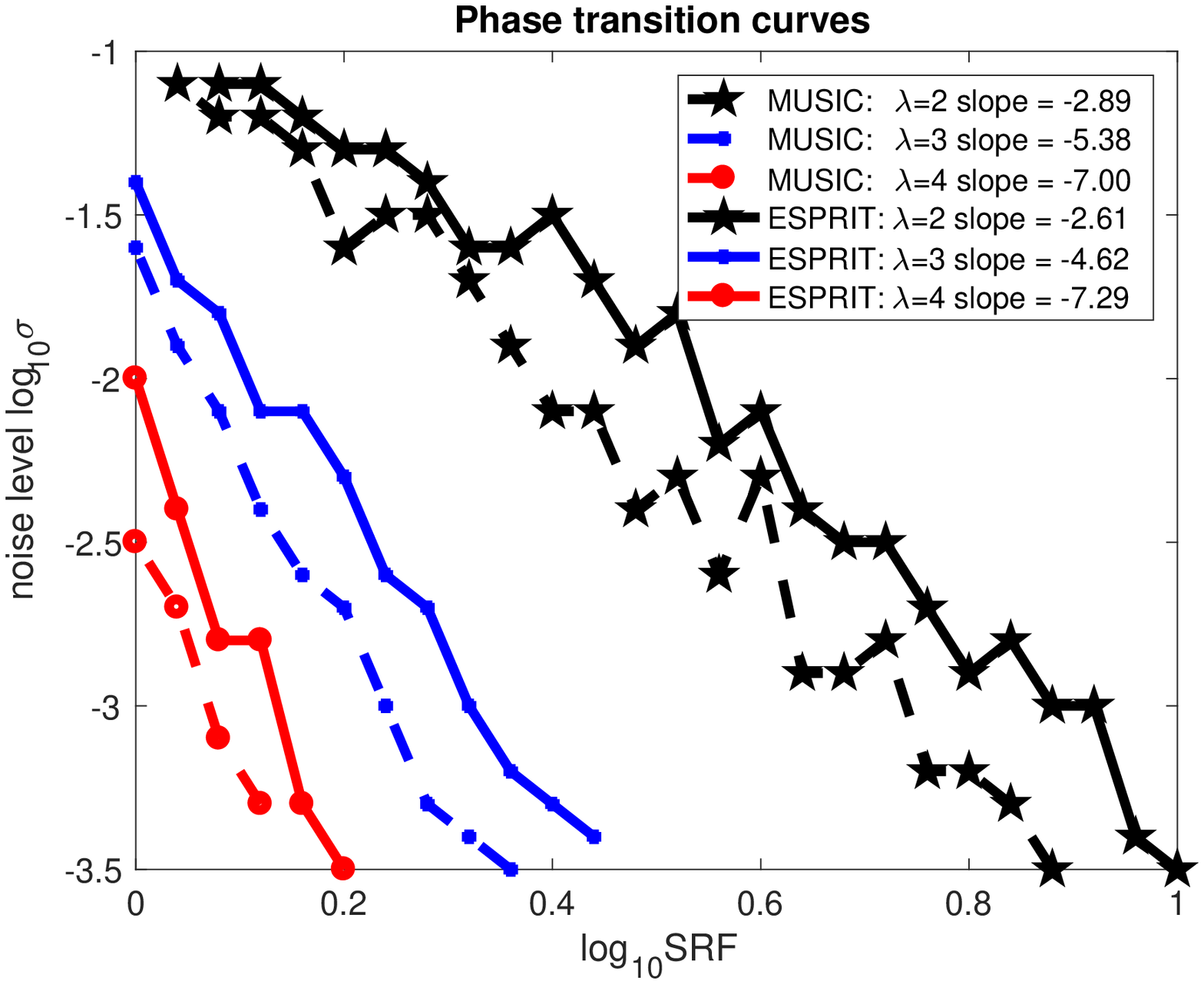}
	}
	\subfigure[3 clumps: $A=3$]{
		\includegraphics[width=7.8cm]{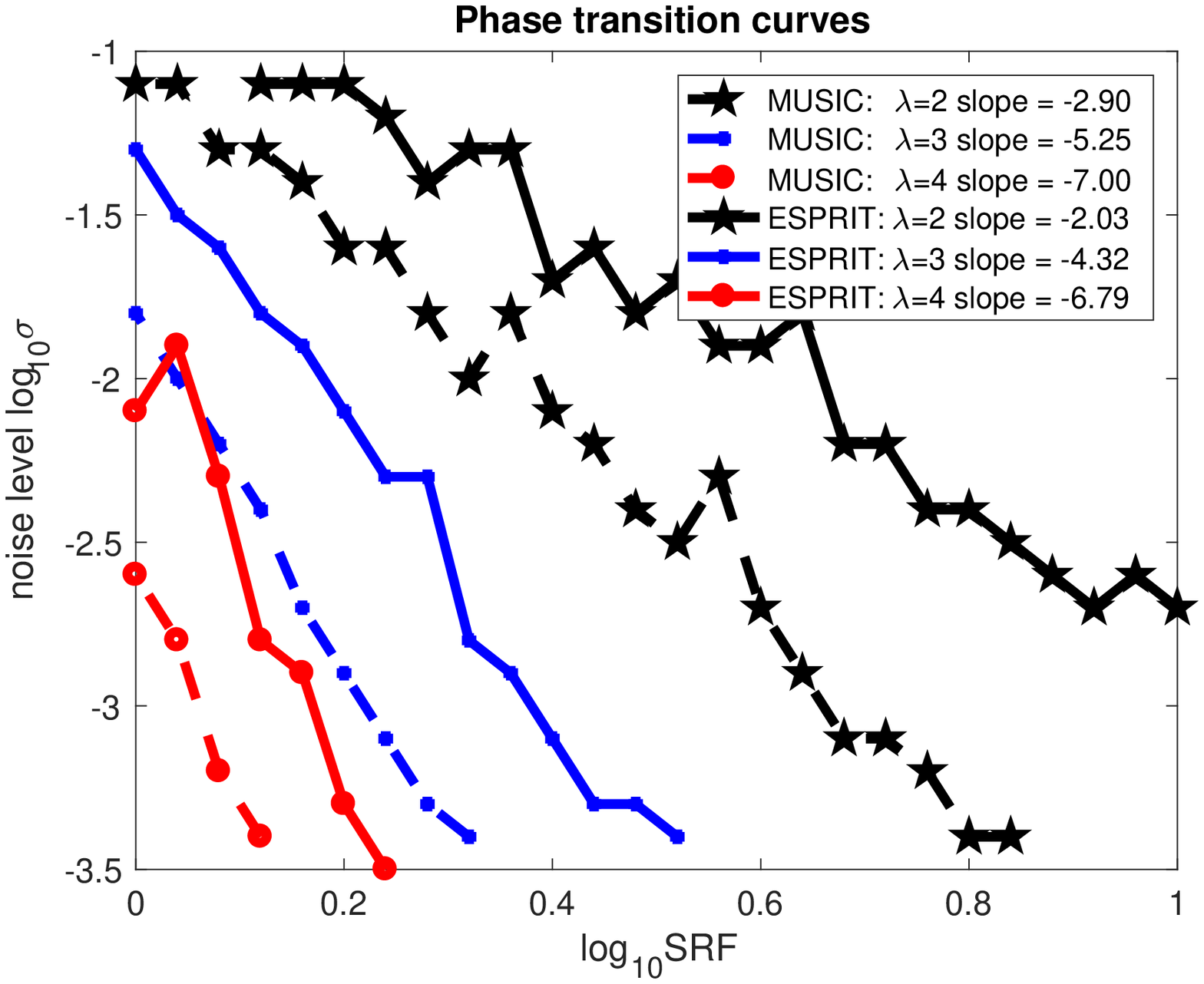}
	}
	\hspace{-1.2cm}
	\subfigure[4 clumps: $A=4$]{
		\includegraphics[width=7.8cm]{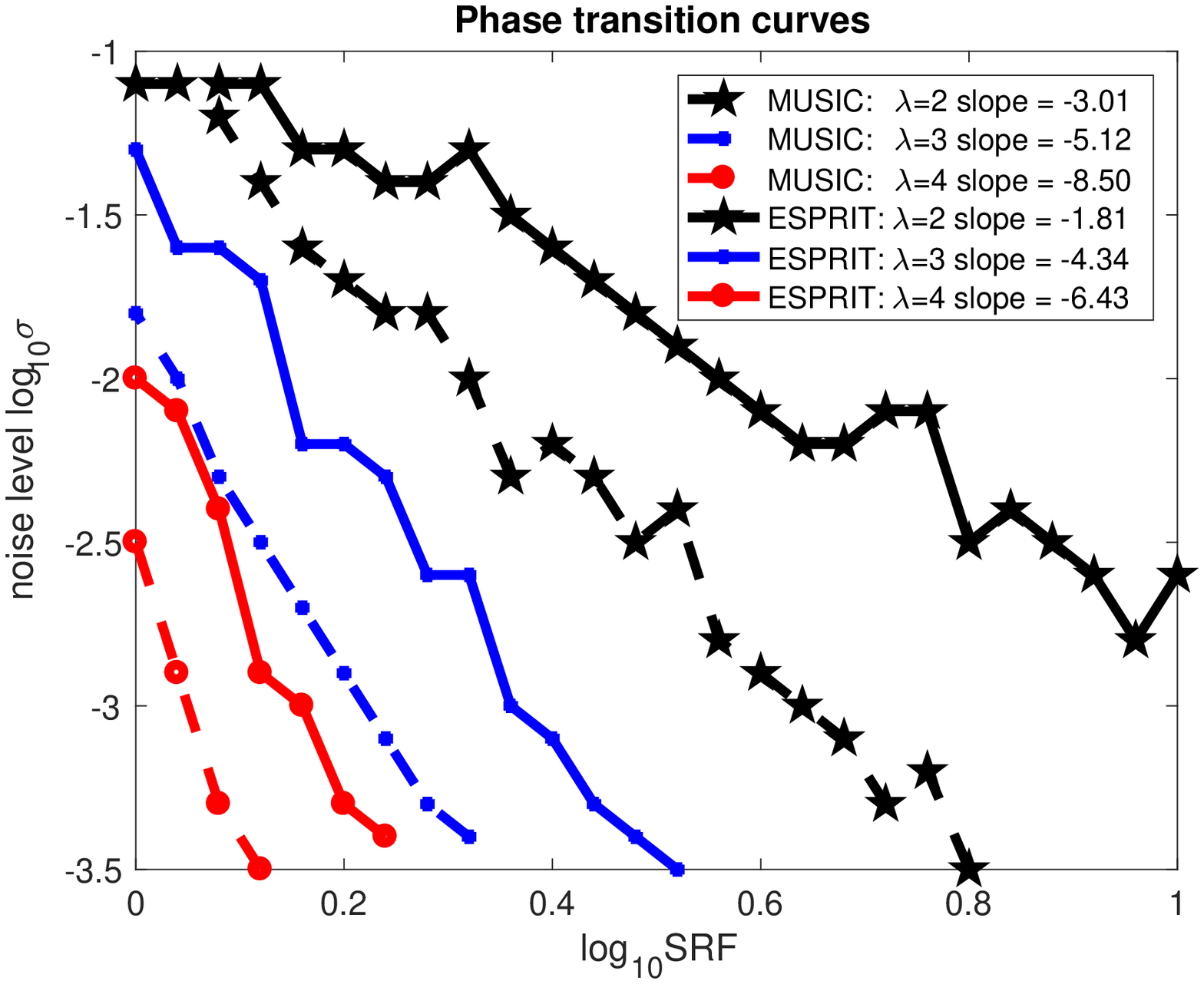}
	}
	\caption{The phase transition curves below which the success probability is at least $95\%$ for $\lambda=2,3,4$ with respect to $\log_{10}\OmegaRF$ (x-axis) and $\log_{10}\sigma$ (y-axis). The slopes are computed by least squares. }
	\label{Fig_PhaseTransition}
\end{figure}

	\begin{table}[h]
		\centering
		\begin{tabular}{ | c | c | c  | c | c|  c|}
			\hline			
			& $\lambda = 2$ & $\lambda = 3$ & $\lambda = 4$  & Numerical $q(\Omega)$ &Theoretical $q(\Omega)$  \\
			\hline
			1-clump: MUSIC  & $2.78$ & $5.50$ & $7.75$ & $2.49\lambda-2.11$ & $2\lambda-2$ \\
			\hline
			2-clump: MUSIC  & $2.89$ & $5.38$ & $7.00$ &  $2.06\lambda-1.08$ & $2\lambda-2$ \\
			\hline
			3-clump: MUSIC  & $2.90$ & $5.25$ & $7.00$ & $2.05\lambda-1.10$ & $2\lambda-2$ \\
			\hline
			4-clump: MUSIC  & $3.01$ & $5.12$ & $8.50$  &$2.75\lambda-2.70$  & $2\lambda-2$ \\
			\hline
			1-clump ESPRIT  & $2.36$ & $4.88$ & $6.70$ &$2.17\lambda-1.86$  & $2\lambda-2$ \\
			\hline
			2-clump ESPRIT  & $2.61$ & $4.62$ & $7.29$ &$2.34\lambda-2.18$  & $2\lambda-2$ \\
			\hline
			3-clump ESPRIT  & $2.03$ & $4.32$ & $6.79$  &$2.38\lambda-2.76$  & $2\lambda-2$ \\
			\hline
			4-clump ESPRIT  & $1.81$ & $4.34$ & $6.43$ &$2.31\lambda-2.74$  & $2\lambda-2$ \\
			\hline  
		\end{tabular}
		\caption{Numerical simulations of $q(\Omega)$ on the phase transition curves of MUSIC and ESPRIT.}
		\label{Table}
	\end{table}

\section{A non-harmonic uncertainty principle}
\label{secuncertainty}

Informally speaking, the uncertainty principle in Fourier analysis states that a function cannot be simultaneously localized in both time and frequency. This intuition was first formalized by the classical Heisenberg uncertainty principle, which showed that a square integrable function on $\R$ cannot simultaneously have small variance in both space and frequency. Many intriguing papers have provided inequalities of similar spirit for Fourier operators and spectral decompositions in a variety of settings. We refer the reader to \cite{folland1997uncertainty,benedetto2003weighted} and references therein for an overview of this subject.

We establish an uncertainty principle, but our main contributions in this area appear to be non-standard in two ways: the statements are not given in terms of norms and we treat discrete non-harmonic Fourier series. More precisely, we are mainly interested in the set $\calM_S$ of all non-zero complex valued discrete measures $\mu$ on the torus $\T$ consisting of at most $S$ atoms. Any $\mu\in\calM_S$ has the representation $\mu=\sum_{j=1}^S u_j \delta_{\omega_j}$ for some $u_j\in\C$ and $\omega_j\in\T$. Recall that $\hat\mu$ denotes the Fourier transform of $\mu$ and it is convenient to define the quantity 
\begin{equation}
\label{def:norm}
\|\hat\mu\|_{\ell^2_N}
=\(\sum_{k=0}^{N-1} |\hat\mu(k)|^2\)^{1/2}. 
\end{equation}
This section is primarily concerned with obtaining an upper bound on the quantity,
\begin{equation}
\label{eq:C}
C_{N,S}
:=\sup_{\mu\in\calM_S} C_N(\mu)
\quad\text{where}\quad
C_N(\mu)
:=\frac{|\hat\mu(0)|}{\|\hat\mu\|_{\ell^2_N}},
\end{equation}
which is well-defined for $N\geq S$ in view of the following result. 

\begin{proposition}
	\label{prop0}
	For any $\mu\in\calM_S$ and $N\geq S$, we have $\|\hat\mu\|_{\ell^2_N}>0$.
\end{proposition}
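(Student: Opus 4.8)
The plan is to reduce the claim to the classical fact that a square Vandermonde matrix with distinct nodes is invertible. First I would normalize the representation of $\mu$: since $\mu\in\calM_S$ is non-zero, we may write $\mu=\sum_{j=1}^{S'}u_j\delta_{\omega_j}$ with $1\le S'\le S$, the nodes $\omega_j\in\T$ pairwise distinct, and every $u_j\neq 0$ (merge any coincident atoms and discard those with zero amplitude). Put $u=(u_1,\dots,u_{S'})^T\in\C^{S'}\setminus\{0\}$.

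Next, suppose toward a contradiction that $\|\hat\mu\|_{\ell^2_N}=0$. By the definition \eqref{def:norm}, this forces $\hat\mu(k)=\sum_{j=1}^{S'}u_j e^{-2\pi i k\omega_j}=0$ for every $k=0,1,\dots,N-1$. Stacking these $N$ scalar equations, this says precisely that $\Phi_{N-1}(\{\omega_j\}_{j=1}^{S'})\,u=0$, where $\Phi_{N-1}$ is the $N\times S'$ Vandermonde matrix of the form \eqref{eq:phi}.

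Because $N\ge S\ge S'$, the first $S'$ rows of $\Phi_{N-1}$ form a square $S'\times S'$ Vandermonde matrix with nodes $z_j:=e^{-2\pi i\omega_j}$. The map $\omega\mapsto e^{-2\pi i\omega}$ is injective on $\T=[0,1)$, so the $z_j$ are pairwise distinct, and hence this submatrix has determinant $\prod_{1\le j<k\le S'}(z_k-z_j)\neq 0$. Therefore $\Phi_{N-1}$ has full column rank $S'$, so the only solution of $\Phi_{N-1}u=0$ is $u=0$, contradicting $u_j\neq 0$ for all $j$. Hence $\|\hat\mu\|_{\ell^2_N}>0$.

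There is essentially no hard step here; the only points requiring care are the initial normalization that lets one assume distinct atoms with nonzero amplitudes, and the observation that distinctness of the $\omega_j$ in $\T$ is equivalent to distinctness of the nodes $e^{-2\pi i\omega_j}$ on the unit circle. (Equivalently, one can phrase the argument recurrence-theoretically: the sequence $k\mapsto\hat\mu(k)$ satisfies a linear recurrence of order at most $S'$ with characteristic roots $z_1,\dots,z_{S'}$, so $S'$ consecutive zeros would force it to vanish identically, which again yields $u=0$.)
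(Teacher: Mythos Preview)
Your argument is correct and is essentially the same as the paper's: both assume $\|\hat\mu\|_{\ell^2_N}=0$, rewrite this as $\Phi_{N-1}u=0$, and invoke the full column rank of the Vandermonde matrix to force $u=0$, contradicting $\mu\neq 0$. Your version is slightly more careful in first normalizing to $S'\le S$ distinct atoms with nonzero amplitudes and in spelling out why the Vandermonde submatrix is nonsingular, but the underlying idea is identical.
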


\begin{proof} 
	If $\mu=\sum_{j=1}^S u_j\delta_{\omega_j}$ and $\Omega=\{\omega_j\}_{j=1}^S$, then the Vandermonde matrix $\Phi_{N-1}(\Omega)$ has full rank when $N\geq S$. If $\|\hat\mu\|_{\ell^2_N}=0$, then 
	\[
	0
	=\hat\mu(k)
	=\sum_{j=1}^S u_j e^{-2\pi ik \omega_j}
	\quad \text{for}\quad 0\leq k\leq N-1.
	\]
	This implies $u=0$, which contradicts the assumption that $\mu$ is non-zero.
\end{proof} 

We note that $C_{N,S}=\infty$ when $N<S$ because it is not hard to show there exists a $\mu\in\calM_S$ such that $\|\hat\mu\|_{\ell^2_N=0}$. We also have that $C_{S,S}=1$. To see this, we can consider a measure $\mu=\sum_{j=0}^{N-1} u_j\delta_{j/N}$, where we shall pick the amplitudes momentarily. Note that $\mu\in\calM_S$ and that the Fourier coefficients $\{\hat\mu(k)\}_{k=0}^{N-1}$ consists of the discrete Fourier transform of the vector $u\in\C^N$. Thus, we can pick $u\in\C^N$ such that its DFT is precisely the canonical basis vector $e_0$. Doing this provides an example of a measure $\mu\in\calM_S$ such that $C_N(\mu)=1$. The case where $S=1$ and $N> 1$ is also trivial, since a direct computation shows that $C_{N,1}=1/\sqrt{N}$.

Thus, the only interesting case is when $N>S>1$. Obviously we have the trivial upper bound $C_{N,S}\leq 1$, and the point of the below results is prove a better estimate by using the assumption that $\mu\in\calM_S$. The quantity $C_N(\mu)$ describes the concentration or localization of $\hat\mu$ in its zero-th Fourier coefficient. However, we suspect that $C_N(\mu)<1$ when $N>S$ because $\mu$ is supported in $S$ points and it is hard to imagine that $\hat\mu$ would be supported in exactly 1 out of $N>S$ frequencies. To provide some support for this claim, we first generalize a result of Donoho-Stark \cite[Theorem 1]{donoho1989uncertainty} which was originally proved for (harmonic) Fourier series.

\begin{proposition}
	\label{prop1}
	Let $\mu\in\calM_S$ and $N\geq S$. For any $N$ consecutive Fourier coefficients of $\mu$, at least $\lfloor N/S \rfloor$ of them are non-zero. Moreover, for any $\mu\in\calM_S$ and $N>S$, we have $C_N(\mu)<1$. 
\end{proposition}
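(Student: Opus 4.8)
The plan is to isolate one elementary fact and read both assertions off of it: \emph{a nonzero $\mu\in\calM_S$ cannot have $S$ consecutive vanishing Fourier coefficients}. To prove this, write $\mu=\sum_{j=1}^{S'}u_j\delta_{\omega_j}$ with $1\le S'\le S$, the $\omega_j\in\T$ distinct and all $u_j\ne 0$, and suppose $\hat\mu(k_0)=\hat\mu(k_0+1)=\cdots=\hat\mu(k_0+S'-1)=0$ for some $k_0\in\Z$. Since $\hat\mu(k_0+m)=\sum_{j=1}^{S'} v_j\, e^{-2\pi i m\omega_j}$ with $v_j:=u_j e^{-2\pi i k_0\omega_j}$, the hypothesis reads $\Phi_{S'-1}(\{\omega_j\})\,v=0$, where $\Phi_{S'-1}(\{\omega_j\})$ is the $S'\times S'$ Vandermonde matrix with the distinct nodes $\{\omega_j\}$. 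This matrix is invertible, so $v=0$, hence $u=0$, contradicting $\mu\neq 0$. In particular no window of $S$ consecutive integers can consist entirely of zeros of $\hat\mu$.

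For the first assertion, fix $N\ge S$ and any block $\{k_0,k_0+1,\ldots,k_0+N-1\}$ of $N$ consecutive integers. Split the first $S\lfloor N/S\rfloor$ of these integers into $\lfloor N/S\rfloor$ disjoint sub-blocks, each a run of $S$ consecutive integers. By the fact above, each sub-block contains at least one integer $k$ with $\hat\mu(k)\ne 0$, and these witnesses are distinct across sub-blocks, so at least $\lfloor N/S\rfloor$ of the $N$ coefficients are nonzero. For the second assertion, let $N>S$. If $\hat\mu(0)=0$ then $C_N(\mu)=0<1$ since $\|\hat\mu\|_{\ell^2_N}>0$ by Proposition \ref{prop0}. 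Otherwise apply the first assertion to the $N-1\ge S$ consecutive coefficients $\hat\mu(1),\ldots,\hat\mu(N-1)$: at least $\lfloor (N-1)/S\rfloor\ge 1$ of them are nonzero, whence
\[
\|\hat\mu\|_{\ell^2_N}^2=|\hat\mu(0)|^2+\sum_{k=1}^{N-1}|\hat\mu(k)|^2>|\hat\mu(0)|^2,
\]
i.e. $C_N(\mu)<1$.

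There is no real obstacle here: the crux is the invertibility of a Vandermonde matrix with distinct nodes, which is immediate and is exactly the ingredient already used for Proposition \ref{prop0}. The only points needing a little care are allowing $\mu$ to have strictly fewer than $S$ atoms (handled by running the argument with the true atom count $S'$, and noting $S'\le S$ so that $S$ consecutive zeros contain $S'$ consecutive zeros), and the boundary case $\hat\mu(0)=0$ in the second part; everything else is bookkeeping with the partition into blocks of length $S$.
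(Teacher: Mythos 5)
Your proof is correct and follows essentially the same route as the paper's: both reduce everything to the invertibility of a square Vandermonde matrix with distinct nodes to show that no $S$ consecutive Fourier coefficients can all vanish, then partition the window into $\lfloor N/S\rfloor$ blocks of length $S$ for the first claim and apply the same fact to the coefficients $\hat\mu(1),\dots,\hat\mu(S)$ for the second. Your only additions — tracking the true atom count $S'\le S$ and the (unnecessary but harmless) case split on $\hat\mu(0)=0$ — are minor points of bookkeeping that the paper glosses over.
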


\begin{proof}
	Fix a non-zero measure $\mu=\sum_{j=1}^S u_j\delta_{\omega_j}$, where $u\in\C^S$ and $\Omega=\{\omega_j\}\subset\T$. For any set containing $N$ consecutive integers, from it we extract $\lfloor N/S\rfloor$ disjoint subsets, where each set contains $S$ consecutive integers. Call one these sets $\{n,n+1,\dots,n+S-1\}$. Suppose for the purpose of contradiction that $\hat\mu(k)=0$ for each $n\leq k\leq n+S-1$. Then $\sum_{j=1}^S u_je^{-2\pi ik\omega_j}=0$ for each $n\leq k\leq n+S-1$. This is a system of equations, and since square Vandermonde matrices are invertible, this implies $u_j=0$ for $1\leq j\leq S$, which is a contradiction. Thus, there is at least one integer in $k\in \{n,n+1,\dots,n+S-1\}$ for which $\hat\mu(k)\not=0$. Repeating this argument for each of the $\lfloor N/S \rfloor$ sets proves the first statement of the proposition.
	
	To see why the second statement of the proposition follows, consider the set $\{1,2,\dots,S\}$. Then for any $\mu\in\calM_S$, there is a $k\in \{1,2,\dots,S\}$ such that $|\hat\mu(k)|>0$. This shows that $\|\hat\mu\|_{\ell^2_N}>|\hat\mu(0)|$ or equivalently, $C_N(\mu)<1$.  
\end{proof} 

Proposition \ref{prop1} and uncertainty principles of Donoho-Stark type, see \cite{matusiak2004donoho} for generalizations, estimate the number of non-zero Fourier coefficients, but do not say how large they must be. In contrast, the following theorem can be seen as a statement about the size of the amplitudes.

\begin{theorem}
	\label{thm:UPcomplex}
	If $N>S>1$, then $C_{N,S}\leq \sqrt{1-4^{-S}}$.
\end{theorem}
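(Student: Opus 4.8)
The plan is to reduce the bound on $C_{N,S}$ to a statement about a single measure and then exploit the rigidity of $S$-atomic measures. Write $\mu = \sum_{j=1}^{S} u_j \delta_{\omega_j}$ with $u \in \C^S$ and $\Omega = \{\omega_j\} \subset \T$, and recall $\hat\mu(k) = \sum_j u_j e^{-2\pi i k \omega_j}$, so that $(\hat\mu(0),\dots,\hat\mu(N-1))^T = \Phi_{N-1}(\Omega)\,u$. Since rescaling $u$ does not change $C_N(\mu)$, we may normalize, and since $N > S$ we have $\|\hat\mu\|_{\ell^2_N}>0$ by Proposition \ref{prop0}. The quantity to control is $|\hat\mu(0)|^2 / \sum_{k=0}^{N-1}|\hat\mu(k)|^2$. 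The idea is to show that if $|\hat\mu(0)|$ is close to $\|\hat\mu\|_{\ell^2_N}$, then $\hat\mu$ is forced to be highly concentrated at frequency $0$ — but an $S$-sparse measure cannot have its Fourier data supported (even approximately) on a set much smaller than $N$, by the annihilating-filter / Prony-type argument already used in Proposition \ref{prop1}. The quantitative version of this heuristic is what gives the explicit constant $\sqrt{1-4^{-S}}$.

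First I would set up the extremal problem: $C_{N,S}^2 = \sup \{ |e_0^* \Phi_{N-1} u|^2 : \|\Phi_{N-1} u\|_2 = 1,\ \Omega \in \T^S,\ u \in \C^S\}$, where $e_0$ is the first canonical vector in $\C^N$. For fixed $\Omega$, the inner supremum over $u$ equals $\|P_{\Omega} e_0\|_2^2$, where $P_\Omega$ is the orthogonal projection onto $\mathrm{Range}(\Phi_{N-1}(\Omega)) \subset \C^N$, an $S$-dimensional subspace. So the theorem is equivalent to: for every $S$-dimensional subspace $V \subset \C^N$ of the form $\mathrm{Range}(\Phi_{N-1}(\Omega))$, one has $\|P_V e_0\|_2^2 \le 1 - 4^{-S}$; equivalently, $\mathrm{dist}(e_0, V)^2 \ge 4^{-S}$. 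So I need a lower bound on the distance from the canonical vector $e_0$ to the column span of the Vandermonde matrix, uniformly over all node configurations.

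The key step — and the main obstacle — is establishing $\mathrm{dist}(e_0, \mathrm{Range}(\Phi_{N-1}(\Omega)))^2 \ge 4^{-S}$ for all $\Omega$. The natural route is duality: $\mathrm{dist}(e_0, \mathrm{Range}(\Phi_{N-1}))= \max\{ |\langle w, e_0\rangle| : w \perp \mathrm{Range}(\Phi_{N-1}),\ \|w\|_2 = 1\}$, i.e.\ I must exhibit a unit vector $w = (w_0,\dots,w_{N-1})$ orthogonal to every column of $\Phi_{N-1}$ — that is, $\sum_{k=0}^{N-1} w_k e^{-2\pi i k \omega_j} = 0$ for all $j$ — with $|w_0|$ bounded below by $2^{-S}$. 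Thinking of $w$ as the coefficient vector of a polynomial $p(z) = \sum_{k=0}^{N-1} w_k z^k$, the orthogonality conditions say $p$ vanishes at the $S$ points $e^{-2\pi i \omega_j}$. The cheapest such polynomial is $p(z) = \prod_{j=1}^{S}(z - e^{-2\pi i\omega_j})$, which has degree $S \le N-1$, so its coefficient vector $w$ lies in $\C^N$; its constant term is $\prod_j(-e^{-2\pi i\omega_j})$, which has modulus exactly $1$, and one checks that each coefficient of this product is bounded in modulus by $\binom{S}{\lfloor S/2 \rfloor} \le 2^S$ (elementary symmetric functions of $S$ unit-modulus numbers), so $\|w\|_2 \le \sqrt{S+1}\cdot 2^S$ — that would only give $4^{-S}/(S+1)$, not quite $4^{-S}$. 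To get the clean $4^{-S}$ I would instead bound $\|w\|_2^2 = \sum_k |w_k|^2 \le \prod_j (1+|e^{-2\pi i\omega_j}|)^2$? — no; the sharp tool is that for $q(z)=\prod_j(z-a_j)$ with $|a_j|=1$, the $\ell^2$ norm of the coefficients satisfies $\|\text{coeffs}(q)\|_2^2 \le \prod_j \|\text{coeffs}(z - a_j)\|_1^2 $ is too lossy; the right statement uses $\|fg\|_2 \le \|f\|_2\|g\|_1$ (Young) repeatedly, giving $\|\text{coeffs}(q)\|_2 \le \prod_j \|(1,-a_j)\|_1 = 2^S$, hence $|w_0|/\|w\|_2 \ge 1/2^S = 2^{-S}$, so $\mathrm{dist}(e_0,V)^2 \ge 4^{-S}$ and therefore $\|P_V e_0\|_2^2 \le 1 - 4^{-S}$, which is exactly the claim. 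So the main work is (i) verifying this dual/polynomial reformulation of the distance, and (ii) pinning down the coefficient bound $\|\text{coeffs}(\prod_j(z-a_j))\|_2 \le 2^S$ via iterated Young's inequality $\|f * g\|_2 \le \|f\|_1 \|g\|_2$; after that the conclusion $C_{N,S} \le \sqrt{1-4^{-S}}$ is immediate. I would also double-check that the degree count $S \le N-1$ (needed so $w \in \C^N$) follows from $N > S$, which it does.
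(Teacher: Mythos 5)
Your proof is correct, and at its core it uses the same certificate as the paper: the degree-$S$ polynomial $\prod_{j}(z-a_j)$ with unimodular roots, whose constant coefficient has modulus $1$ and whose coefficient vector has $\ell^2$ norm at most $2^S$. What differs is the packaging. You work entirely in $\C^N$: you identify $C_{N,S}^2$ with $\sup_\Omega \|P_{V}e_0\|_2^2$ for $V=\mathrm{Range}(\Phi_{N-1}(\Omega))$, pass to $\mathrm{dist}(e_0,V)^2 = 1-\|P_Ve_0\|_2^2$, and exhibit one explicit annihilating vector $w\in V^\perp$ with $|\langle w,e_0\rangle|/\|w\|_2\ge 2^{-S}$. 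The paper instead works on $\T$: it constructs an interpolating function $f_{\Omega,\alpha}=1-\alpha p_\Omega$ equal to $1$ on $\Omega$ with Fourier support in $\{0,\dots,S\}$, optimizes $\alpha$, and invokes its Proposition \ref{prop4} (duality plus Parseval). These are literally dual formulations of the same extremal problem --- the paper's $p_\Omega$ is your $w$ read as a trigonometric polynomial, and optimizing $\alpha$ in $1-\alpha p_\Omega$ reproduces your identity $\mathrm{dist}(e_0,V)\ge |\langle w,e_0\rangle|/\|w\|_2$ --- so nothing essential changes; if anything your route gives the marginally sharper intermediate bound $1-\|p_\Omega\|_{L^2}^{-2}$ where the paper settles for $1-\|p_\Omega\|_{L^\infty}^{-2}$ before both relax to $1-4^{-S}$. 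Your coefficient bound via iterated Young ($\|f*g\|_2\le\|f\|_1\|g\|_2$) is fine, as is the degree count $S\le N-1$; the only cosmetic point is to take the roots $a_j$ conjugated consistently with the Hermitian inner product defining $V^\perp$, which affects nothing since $|a_j|=1$.
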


Before we prove the theorem, we first state the a simple and useful observation regarding the dual relationship between polynomial interpolation and Fourier transforms. 

\begin{proposition}
	\label{prop4}
	For any $\mu\in\calM_S$ supported in $\Omega$, if there exists a continuous function $f$ on $\T$ such that $f=1$ on $\Omega$ and $\hat f$ is supported in a set $\Lambda\subset\Z$, then 
	\[
	|\hat\mu(0)|
	\leq \|f\|_{L^2(\T)}  \(\sum_{k\in\Lambda} |\hat\mu(k)|^2\)^{1/2}. 
	\]
\end{proposition}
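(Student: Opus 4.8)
The statement is a duality identity disguised as an inequality, so the plan is to expand $\hat\mu(0)$ using the interpolation property $f=1$ on $\Omega$ and then apply the Parseval/Plancherel relation on $\T$ together with Cauchy--Schwarz. Write $\mu=\sum_{j=1}^S u_j\delta_{\omega_j}$. Since $f(\omega_j)=1$ for every $j$, we have
\[
\hat\mu(0)=\sum_{j=1}^S u_j=\sum_{j=1}^S u_j f(\omega_j)=\int_\T f\,d\mu.
\]
The next step is to recognize the right-hand side as a pairing between $f$ and $\mu$ that can be moved to the Fourier side. Using the absolutely convergent Fourier expansion $f(\omega)=\sum_{k\in\Lambda}\hat f(k)e^{2\pi i k\omega}$ (the sum runs only over $k\in\Lambda$ by hypothesis on $\supp\hat f$) and interchanging the finite sum over $j$ with the sum over $k$, one gets
\[
\int_\T f\,d\mu
=\sum_{j=1}^S u_j\sum_{k\in\Lambda}\hat f(k)e^{2\pi i k\omega_j}
=\sum_{k\in\Lambda}\hat f(k)\sum_{j=1}^S u_j e^{2\pi i k\omega_j}
=\sum_{k\in\Lambda}\hat f(k)\,\overline{\hat\mu(k)},
\]
where the last equality uses $\hat\mu(k)=\sum_j u_j e^{-2\pi i k\omega_j}$, so $\overline{\hat\mu(k)}=\sum_j \overline{u_j}e^{2\pi i k\omega_j}$ — I should be careful with the conjugation here, and will instead write $\sum_{j}u_j e^{2\pi i k\omega_j}=\hat\mu(-k)$ or simply keep track of the complex conjugate of $\hat f$, whichever keeps the bookkeeping cleanest; either way the resulting expression is a finite inner product of two sequences indexed by $\Lambda$.

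From there the conclusion is immediate: apply the Cauchy--Schwarz inequality to $\sum_{k\in\Lambda}\hat f(k)\,\overline{\hat\mu(k)}$ to obtain
\[
|\hat\mu(0)|
\le\Bigl(\sum_{k\in\Lambda}|\hat f(k)|^2\Bigr)^{1/2}\Bigl(\sum_{k\in\Lambda}|\hat\mu(k)|^2\Bigr)^{1/2},
\]
and then invoke Parseval on $\T$, $\sum_{k\in\Lambda}|\hat f(k)|^2=\sum_{k\in\Z}|\hat f(k)|^2=\|f\|_{L^2(\T)}^2$, since $\hat f$ is supported in $\Lambda$. This yields exactly $|\hat\mu(0)|\le\|f\|_{L^2(\T)}\bigl(\sum_{k\in\Lambda}|\hat\mu(k)|^2\bigr)^{1/2}$.

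The only genuinely delicate point — and the one I would make sure to address rather than gloss over — is the justification of the interchange of summation, i.e. that $f$ really does have a pointwise-convergent Fourier series that can be plugged into $\int_\T f\,d\mu$. If $\Lambda$ is finite this is trivial (finite trigonometric polynomial), and in the intended application ($f$ built as a product of elementary interpolating factors with controlled frequency support) $\Lambda$ is indeed finite, so I would either state the proposition under the implicit finiteness of $\Lambda$ or note that $f$ continuous with $\hat f\in\ell^1(\Lambda)$ suffices for the expansion to hold uniformly. Everything else is a one-line application of Cauchy--Schwarz and Parseval; there is no real obstacle beyond this bookkeeping.
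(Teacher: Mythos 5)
Your proof is correct and follows essentially the same route as the paper's: write $\hat\mu(0)=\sum_j u_j$ as a pairing of $\mu$ with the interpolating function, pass to the Fourier side, and finish with Cauchy--Schwarz plus Parseval. The conjugation wrinkle you flag is resolved exactly as the paper does it: pair $\mu$ against $\overline{f}$ (which also equals $1$ on $\Omega$), so that $\int_{\T}\overline{f}\,d\mu=\sum_{k\in\Lambda}\overline{\hat f(k)}\,\hat\mu(k)$ is indexed by $\Lambda$ with $\hat\mu(k)$ rather than $\hat\mu(-k)$ appearing, and the stated bound follows; resolving it via $\hat\mu(-k)$ instead would put the sum over $-\Lambda$, which is not what the proposition asserts.
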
 

\begin{proof}
	This is a basic consequence of duality and the Parseval theorem. Let $\mu=\sum_{j=1}^S u_j\delta_{\omega_j}$. If $f$ is continuous and satisfies the assumed properties, then we have 
	\[
	|\hat\mu(0)|
	=\Big|\sum_{j=1}^S u_j\Big|
	=\Big|\int_\T \overline{f}\ d\mu\Big| 
	=\Big|\sum_{k\in\Lambda} \hat f(k) \hat\mu(k) \Big|
	\leq \|f\|_{L^2(\T)} \(\sum_{k\in\Lambda} |\hat\mu(k)|^2\)^{1/2}. 
	\]
\end{proof}

\begin{proof}[Proof of Theorem \ref{thm:UPcomplex}]
	The main idea is to construct, for each set $\Omega$ of cardinality $S$, a continuous function $f_\Omega$ with the properties listed in Proposition \ref{prop4} such that  $\|f_\Omega\|_{L^2(\T)}$ is uniformly bounded in $\Omega$. To do this, it is simpler to construct a function that vanishes on $\Omega$ instead. 
	
	For now, we fix a set $\Omega=\{\omega_j\}_{j=1}^S\subset\T$ and consider the trigonometric polynomial,  
	\begin{equation}
	\label{eq:poly}
	p_\Omega(\omega)
	:=(-1)^{S}\ \prod_{j=1}^S\ (e^{2\pi i(\omega-\omega_j)}-1). 
	\end{equation}
	Its zero set is precisely $\{\omega_j\}_{j=1}^S$. To see why $\hat{p_\Omega}$ is supported in $\{0,1,\dots,S\}$, we could expand out the product in the definition of $p_\Omega$, which would provide its Fourier series representation. Doing this also shows that $\hat{p_\Omega}(0)=1$. It is natural to consider this function, since it is the polynomial with the minimum degree that generates the ideal of trigonometric polynomials vanishing on $\Omega$. 
	
	For any $\alpha>0$, we define the family of functions
	\[
	f_{\Omega,\alpha}(\omega)
	:=1-\alpha p_\Omega(\omega). 
	\]
	By construction, we have $f_{\Omega,\alpha}=1$ on $\Omega$, $\hat{f_{\Omega,\alpha}}$ is supported in the set $\{0,1,\dots,S\}$, $\hat{f_{\Omega,\alpha}}(0)=1-\alpha$, and  $\hat{f_{\Omega,\alpha}}(k)=-\alpha \hat{p_\Omega}(k)$ for all $k\not=0$. It follows from these properties, Parseval's theorem, and H\"older's inequality that
	\begin{align*}
	\|f_{\Omega,\alpha}\|_{L^2(\T)}^2
	&=\sum_{k=0}^S |\hat{f_{\Omega,\alpha}}(k)|^2 \\
	&=(1-\alpha)^2+ \alpha^2 \sum_{k=1}^S |\hat{p_\Omega}(k)|^2 \\
	&=(1-\alpha)^2-\alpha^2|\hat{p_\Omega}(0)|^2+\alpha^2 \sum_{k=0}^S |\hat{p_\Omega}(k)|^2 \\
	&\leq (1-\alpha)^2-\alpha^2+\alpha^2 \|p_\Omega\|_{L^2(\T)}^2\\
	&\leq 1-2\alpha+\alpha^2 \|p_\Omega\|_{L^\infty(\T)}^2. 
	\end{align*}
	The optimal $\alpha$ that minimizes the right hand side is $\alpha=\|p_\Omega\|_{L^\infty(\T)}^{-2}$. Let $f_\Omega$ be the function $f_{\Omega,\alpha}$ for this particular value of $\alpha$. This shows that
	\begin{equation}
	\label{eq:f1}
	\|f_\Omega\|_{L^2(\T)}^2
	\leq 1-\|p_\Omega\|_{L^\infty(\T)}^{-2}.
	\end{equation}
	All that remains is to upper bound $\|p_\Omega\|_{L^\infty(\T)}$ uniformly in $\Omega$. Since we do not want to use any information about $\Omega$ except for its cardinality, the only bound available to us is
	\begin{equation}
	\label{eq:f2}
	\|p_\Omega\|_{L^\infty(\T)}
	=\sup_{\omega\in\T}\  \prod_{j=1}^S |e^{2\pi i(\omega-\omega_j)}-1|
	\leq 2^S. 
	\end{equation}
	We are ready to apply Proposition \ref{prop4} with the set $\Lambda=\{0,1,\dots,S\}$. For each $\mu\in\calM_S$ supported in $\Omega$, we note that $f_\Omega$ satisfies the required properties. Together with inequalities \eqref{eq:f1} and \eqref{eq:f2} shows that
	\[
	|\hat\mu(0)|
	\leq \|f_\Omega\|_{L^2(\T)} \|\hat\mu\|_{\ell^2_{S+1}}
	\leq \sqrt{1-4^{-S}}\  \|\hat\mu\|_{\ell^2_{S+1}}. 
	\]
	This inequality is uniform over all sets $\Omega$ of cardinality $S$. Rearranging and taking the supremum over all $\mu\in\calM_S$ shows that
	\begin{equation}
	C_{N,S}
	=\sup_{\mu\in\calM_S} C_N(\mu)
	\leq \sup_{\mu\in\calM_S} C_{S+1}(\mu)
	\leq \sqrt{1-4^S}.
	\end{equation}
\end{proof}

At this point, we have proved the necessary uncertainty principle that is required for the stability analysis of ESPRIT. Since we have already introduced this problem, we make several further observations that might be of independent interest. 

We next show that the case of real measures appears to be different from that of the complex case. Let $\calM_{S,\R}$ denote the set of all non-zero real discrete measures supported on $\T$ consisting of at most $S$ atoms. We define the quantity
\begin{equation}
\label{eq:Creal}
C_{N,S,\R}
:=\sup_{\mu\in\calM_{S,\R}} C_N(\mu). 
\end{equation}
We obtain a bound for $C_{N,S,\R}$ that is significantly smaller than our bound for $C_{N,S}$. 

\begin{theorem}
	\label{thm:UPreal}
	If $N>S>1$, then $C_{N,S,\R}\leq \sqrt{1-(8S-1)^{-1}}$. 
\end{theorem}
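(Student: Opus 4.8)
The plan is to run the argument behind Theorem~\ref{thm:UPcomplex} through Proposition~\ref{prop4}, but to exploit the conjugate symmetry $\hat\mu(-k)=\overline{\hat\mu(k)}$ available for real measures. This symmetry makes it costless for the interpolating function to occupy the $2S+1$ frequencies $\{-S,\dots,S\}$ rather than only $\{0,\dots,S\}$, and that extra room converts the exponentially small constant $4^{-S}$ of the complex case into one of size $1/S$. First I would reduce to $N=S+1$: for any $N>S$ we have $\|\hat\mu\|_{\ell^2_N}\ge\|\hat\mu\|_{\ell^2_{S+1}}$, hence $C_N(\mu)\le C_{S+1}(\mu)$, so it suffices to bound $C_{S+1,S,\R}$. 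Fix a real $\mu=\sum_j u_j\delta_{\omega_j}$, let $\Omega$ be its support (padded by arbitrary extra points, if necessary, so that $|\Omega|=S$), and recall the polynomial $p_\Omega$ from \eqref{eq:poly}: it vanishes on $\Omega$, its Fourier transform is supported in $\{0,1,\dots,S\}$, and $\hat{p_\Omega}(0)=1$.

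The heart of the matter is to build a good interpolant $f$. For each $m\in\{0,1,\dots,S\}$, the shift $e^{-2\pi i m\omega}p_\Omega(\omega)$ still vanishes on $\Omega$ and has Fourier support inside $\{-S,\dots,S\}$. By Parseval, $\sum_{m=0}^{S}|\hat{p_\Omega}(m)|^2=\|p_\Omega\|_{L^2(\T)}^2$, so some index $m^\star$ satisfies $|\hat{p_\Omega}(m^\star)|^2\ge\|p_\Omega\|_{L^2(\T)}^2/(S+1)$. Minimizing the quadratic $\alpha\mapsto\|1-\alpha e^{-2\pi i m^\star\omega}p_\Omega\|_{L^2(\T)}^2$ over $\alpha\in\C$ then yields a trigonometric polynomial $f$ with $f\equiv 1$ on $\Omega$, with $\hat f$ supported in $\{-S,\dots,S\}$, and with $\|f\|_{L^2(\T)}^2=1-|\hat{p_\Omega}(m^\star)|^2/\|p_\Omega\|_{L^2(\T)}^2\le 1-\tfrac{1}{S+1}$. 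This is the only step where realness is anticipated: placing spectral weight on negative frequencies is harmless precisely because of the symmetry invoked below.

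Finally I would apply Proposition~\ref{prop4} to this $f$ with $\Lambda=\{-S,\dots,S\}$ and use $|\hat\mu(-k)|=|\hat\mu(k)|$ to write $\sum_{k\in\Lambda}|\hat\mu(k)|^2=2\|\hat\mu\|_{\ell^2_{S+1}}^2-|\hat\mu(0)|^2$. Setting $a:=\|f\|_{L^2(\T)}^2\le S/(S+1)$, the inequality $|\hat\mu(0)|^2\le a\big(2\|\hat\mu\|_{\ell^2_{S+1}}^2-|\hat\mu(0)|^2\big)$ rearranges to $C_{S+1}(\mu)^2\le 2a/(1+a)\le \tfrac{2S}{2S+1}=1-\tfrac{1}{2S+1}\le 1-\tfrac{1}{8S-1}$, using monotonicity of $t\mapsto 2t/(1+t)$ and $2S+1\le 8S-1$ for $S\ge 1$. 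Taking the supremum over $\mu\in\calM_{S,\R}$ and recalling the reduction to $N=S+1$ completes the proof; in fact this gives the sharper bound $C_{N,S,\R}\le\sqrt{1-(2S+1)^{-1}}$. The one genuinely new ingredient — and the reason the real case is polynomial rather than exponential in $S$ — is the observation in the middle paragraph that the monomial shifts of $p_\Omega$ spread its spectral mass over $S+1$ frequencies, so Parseval forces one shift to carry at least a $1/(S+1)$ fraction of its energy at frequency $0$.
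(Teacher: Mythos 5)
Your argument is correct, and it reaches the stated bound (indeed a slightly stronger one) by a genuinely different route than the paper. Both proofs share the same skeleton: reduce to $\Lambda=\{-S,\dots,S\}$, build an interpolant $f$ with $f\equiv 1$ on $\Omega$ and $\hat f$ supported in $\Lambda$, apply Proposition~\ref{prop4}, and then use the conjugate symmetry $|\hat\mu(-k)|=|\hat\mu(k)|$ of real measures to fold the negative frequencies back, which is exactly your $C_{S+1}(\mu)^2\le 2a/(1+a)$ step with $a=\|f\|_{L^2(\T)}^2$. Where you diverge is in how $\|f\|_{L^2(\T)}^2$ is pushed below $1$ by a polynomially small amount. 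The paper squares $p_\Omega$ to obtain the non-negative function $h_\Omega=|p_\Omega|^2/\|p_\Omega\|_{L^\infty(\T)}^2$ with spectrum in $\{-S,\dots,S\}$ and then invokes Tur\'an's theorem to show that the level set $\{h_\Omega\ge 1/2\}$ contains an interval of length at least $1/(2S)$, giving $\|f_\Omega\|_{L^2(\T)}^2\le 1-1/(4S)$ and hence the constant $\sqrt{1-(8S-1)^{-1}}$. You instead keep $p_\Omega$ itself and observe that, once negative frequencies are admissible, all modulations $e^{-2\pi i m\omega}p_\Omega$ with $0\le m\le S$ are equally valid correctors; Parseval plus pigeonhole forces one of them to carry a $1/(S+1)$ fraction of the $L^2$ mass at frequency zero, and the optimal $L^2$ projection then gives $\|f\|_{L^2(\T)}^2\le S/(S+1)$. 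This avoids Tur\'an's theorem entirely, is more elementary, and yields the sharper constant $\sqrt{1-(2S+1)^{-1}}\le\sqrt{1-(8S-1)^{-1}}$. Your diagnosis of where realness enters is also accurate: it is used only in the folding identity $\sum_{k=-S}^{S}|\hat\mu(k)|^2=2\|\hat\mu\|_{\ell^2_{S+1}}^2-|\hat\mu(0)|^2$, which is precisely what makes the enlarged spectral window $\{-S,\dots,S\}$ costless; the same modulation trick is unavailable in the complex setting of Theorem~\ref{thm:UPcomplex} because there the negative-frequency coefficients are not controlled by $\|\hat\mu\|_{\ell^2_{S+1}}$. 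All the small supporting steps you use (the reduction to $N=S+1$, the padding of the support to cardinality $S$, and the fact that Proposition~\ref{prop4} tolerates a complex-valued $f$) check out.
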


\begin{proof}
	The strategy for the proof is the same as that of Theorem \ref{thm:UPcomplex}, but we have the additional advantage of working with non-negative polynomials instead, and we shall see why. Indeed, fix a real $\mu\in\calM_{S,\R}$. Then $\hat\mu(k)=\overline{\hat\mu(-k)}$ for each $k\in\Z$ and so 
	\begin{equation}
	\label{eq:mureal}
	\frac{|\hat\mu(0)|^2}{\sum_{k=0}^{N-1} |\hat\mu(k)|^2}
	=\frac{2|\hat\mu(0)|^2}{|\hat\mu(0)|^2+\sum_{k=-N+1}^{N-1}|\hat\mu(k)|^2}.
	\end{equation}
	It suffices to lower bound the denominator in terms of $|\hat\mu(0)|$. 
	
	For now, we fix a set $\Omega=\{\omega_j\}_{j=1}^S\subset\T$. Again, we let $p_\Omega$ be the trigonometric polynomial defined by \eqref{eq:poly}, and we define the functions
	\[
	f_\Omega(\omega)
	:=1-h_\Omega(\omega), 
	\quad\text{where}\quad
	h_\Omega(\omega):=\frac{|p_\Omega(\omega)|^2}{\|p_\Omega\|_{L^\infty(\T)}^2}. 
	\]
	Notice that $f_\Omega=1$ on $\Omega$, $\hat{f_\Omega}$ is real valued, has Fourier transform supported in $\{-S,\dots,S\}$, and $0\leq h_\Omega\leq 1$.
	
	By construction, $h_\Omega$ is a positive continuous function on $\T$ with $\|h_\Omega\|_{L^\infty(\T)}=1$. There exists a $\omega_0\in\T$ such that $h(\omega_0)=1$. Since $S>1$, $h_\Omega$ has at least two roots. The intermediate value theorem guarantees the existence of an interval $(\omega_-,\omega_+)$ containing $\omega_0$ such that $h(\omega_{\pm})=1/2$ and $h=|h|\geq 1/2$ on the interval $[\omega_-,\omega_+]$. We argue that $[\omega_-,\omega_+]$ cannot be too small. To see this, we define the axillary polynomial 
	\[
	H_\Omega(\omega)=2h_\Omega(\omega)-1. 
	\]
	Note that $\|H_\Omega\|_{L^\infty(\T)}=1$, $H_\Omega(x_{\pm})=0$ and $H_\Omega(x_0)=1$. We proceed to apply Tur\'an's theorem \cite{turan1946rational}, which we state below.
	
	\begin{theorem}
		Let $P$ be a non-trivial polynomial of degree $n$ such that
		$|P(1)| = \max_{|z|=1} |P(z)|$. Then for any root $w$ of $P$ on the unit circle, $|\arg(w)|\geq \pi/n$. Moreover, if $|\arg(w)|=\pi/n$, then $P(z) = c(1 + z^n)$ for some non-zero $c\in\C$. 
	\end{theorem}
	
	By construction, when extended to the complex plane, $H_\Omega$ is a polynomial of degree $2S$, has zeros at $e^{2\pi i \omega_+}$ and $e^{2\pi i \omega_-}$ on the unit complex circle, and the maximum of $H_\Omega$ on the unit circle is attained at $e^{2\pi i\omega_0}$. Tur\'an's theorem implies that $|\arg(e^{2\pi i (\omega_+-\omega_0)})|\geq \pi/(2S)$ and $|\arg(e^{2\pi i (\omega_0-\omega_-)})|\geq \pi/(2S)$. These inequalities imply 
	\begin{align*}
	\big|[\omega_-,\omega_+]\big|
	&\geq |\omega_+-\omega_0|_\T + |\omega_0-\omega_-|_\T \\
	&=\frac{1}{2\pi} \big(|\arg(e^{2\pi i (\omega_+-\omega_0)})|+|\arg(e^{2\pi i (\omega_0-\omega_-)})|\big) \geq \frac{1}{2S}.
	\end{align*}
	We could have reached a similar conclusion using the Bernstein inequality and the mean value theorem, but that argument would yield an inequality with a slightly worse constant. By construction, $h_\Omega\geq 1/2$ on $[\omega_-,\omega_+]$. This implies
	\[
	\int_{\omega_-}^{\omega_+} h_\Omega(\omega)\ d\omega
	\geq \frac{\big|[\omega_+,\omega_-] \big|}{2}
	\geq \frac{1}{4S}.
	\]
	
	We are ready to upper bound $\|f_\Omega\|_{L^2(\T)}$. Since $0\leq h_\Omega\leq 1$, we have
	\begin{align*}
	\|f_\Omega\|_{L^2(\T)}^2
	&\leq \int_{\T} (1-h_\Omega(\omega))\ d\omega
	\leq 1-\int_{\omega_-}^{\omega_+} h_\Omega(\omega)\ d\omega
	\leq 1-\frac{1}{4S}. 
	\end{align*}
	We apply Proposition \ref{prop4} for the set $\Lambda=\{-S,\dots,S\}$ and we note that $f_\Omega$ satisfies the required properties. Thus, for any $\mu\in\calM_S$ supported in $\Omega$, we have 
	\[
	|\hat\mu(0)|
	\leq \|f_\Omega\|_{L^2(\T)} \(\sum_{k=-S}^S |\hat\mu(k)|^2\)^{1/2}
	\leq \sqrt{\frac{4S-1}{4S}} \(\sum_{k=-S}^S |\hat\mu(k)|^2\)^{1/2}. 
	\]
	This shows that for any $\mu\in\calM_{S}$ supported in $\Omega$,
	\[ 
	\frac{2|\hat\mu(0)|^2}{|\hat\mu(0)|^2+\sum_{k=-L}^{L}|\hat\mu(k)|^2}
	\leq \frac{2}{1+4S/(4S-1)}
	\leq 1-\frac{1}{8S-1}.
	\]
	This bound is uniform over all $\Omega$ of cardinality $S$. This completes the proof of the theorem.
\end{proof}

The upper bounds in Theorems \ref{thm:UPcomplex} and \ref{thm:UPreal} are slightly less than 1, which is to be expected for an estimate that is uniform over all $\mu\in\calM_S$. If we impose additional restrictions on the support of $\mu$, then we obtain a much better bound on $C_N(\mu)$. Recall that the minimum separation $\Delta$ of a set $\Omega=\{\omega_j\}_{j=1}^S$ is defined in equation \eqref{eq:minsep}. 

\begin{proposition}
	\label{prop:UP2}
	If $N>S$, $\mu\in\calM_S$ is supported in $\Omega$, and $\Delta\geq C/(N-1)$ for some $C>1$, then 
	\[
	C_N(\mu)\leq \min\(1,\sqrt{\frac{C}{C-1}}\sqrt{\frac{S}{N-1}}\). 
	\]
\end{proposition}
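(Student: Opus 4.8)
The plan is to reduce everything to the singular value bound of Moitra \eqref{eq:moitra}, since the definition of $C_N(\mu)$ is tailor-made for a Vandermonde estimate. Write $\mu=\sum_{j=1}^S u_j\delta_{\omega_j}$ with $u\in\C^S$ and $\Omega=\{\omega_j\}\subset\T$. Then the vector $(\hat\mu(0),\hat\mu(1),\dots,\hat\mu(N-1))^T$ equals $\Phi_{N-1}(\Omega)\,u$, so that $\|\hat\mu\|_{\ell^2_N}=\|\Phi_{N-1}u\|_2$. First I would record the two elementary inequalities that pinch $C_N(\mu)$: in the numerator, $|\hat\mu(0)|=|\sum_j u_j|\leq \sqrt{S}\,\|u\|_2$ by Cauchy--Schwarz; in the denominator, since $N-1\geq S$ forces $\Phi_{N-1}$ to have full column rank, $\|\Phi_{N-1}u\|_2\geq \sigma_S(\Phi_{N-1})\,\|u\|_2$. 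Dividing, the $\|u\|_2$ cancels and yields the clean bound
\[
C_N(\mu)\leq \frac{\sqrt{S}}{\sigma_S(\Phi_{N-1}(\Omega))}.
\]

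Next I would invoke \eqref{eq:moitra} with the parameter $M$ there taken to be $N-1$: the hypothesis $\Delta\geq C/(N-1)$ is exactly the separation condition needed, so $\sigma_S^2(\Phi_{N-1}(\Omega))\geq \frac{C-1}{C}(N-1)$. Substituting this into the display above gives
\[
C_N(\mu)\leq \sqrt{\frac{C}{C-1}}\,\sqrt{\frac{S}{N-1}}.
\]
Finally, the competing bound $C_N(\mu)\leq 1$ is immediate from the definition \eqref{eq:C}, because the sum defining $\|\hat\mu\|_{\ell^2_N}^2$ already contains the term $|\hat\mu(0)|^2$; alternatively this is the content of Proposition \ref{prop1}. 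Taking the minimum of the two bounds completes the proof.

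There is essentially no serious obstacle here: the proposition is a direct corollary of the Beurling--Selberg-based inequality \eqref{eq:moitra} once one observes the matrix identity $\hat\mu=\Phi_{N-1}u$ and applies Cauchy--Schwarz to the numerator. The only points requiring a line of care are matching indices correctly (the matrix $\Phi_{N-1}$ has $N$ rows indexed $0,\dots,N-1$, so the relevant ``$M$'' in Moitra's bound is $N-1$, which is why the statement features $N-1$ rather than $N$), and noting that $N>S$ guarantees full column rank so that $\sigma_S(\Phi_{N-1})>0$ and the division is legitimate. I would keep the write-up to a single short paragraph.
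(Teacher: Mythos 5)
Your proof is correct and follows exactly the paper's own argument: write $\hat\mu=\Phi_{N-1}(\Omega)u$, apply Cauchy--Schwarz to $|\hat\mu(0)|$, lower-bound $\|\Phi_{N-1}u\|_2$ by $\sigma_S(\Phi_{N-1})\|u\|_2$ and invoke the Moitra bound \eqref{eq:moitra} with $M=N-1$, then take the minimum with the trivial bound $C_N(\mu)\leq 1$. Nothing to add.
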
 

\begin{proof}
	Fix any $\mu=\sum_{j=1}^S u_j\delta_{\omega_j}$ satisfying the assumption that $\Delta \geq C/(N-1)$ and let $\Phi_{N-1}=\Phi_{N-1}(\Omega)$ where $\Omega=\{\omega_j\}_{j=1}^S$. The paper \cite{moitra2015matrixpencil} showed that $\Phi_{N-1}$ is well-conditioned and provided a lower bound on the operator norm of $\Phi_{N-1}$. By this result, we get
	\begin{equation}
	\label{eq:2bound}
	\|\hat\mu\|_{\ell^2_N}
	=\|\Phi_{N-1}u\|_2
	\geq \sigma_{S}(\Phi_{N-1})\|u\|_2
	\geq \sqrt{(1-C^{-1})(N-1)} \|u\|_2.
	\end{equation}
	Using this inequality and Cauchy-Schwarz, we obtain
	\[
	C_N(\mu)
	=\frac{|\hat\mu(0)|}{\|\hat\mu\|_{\ell^2_N}}
	\leq \sqrt{\frac{S}{(1-C^{-1})(N-1)}}. 
	\]
	We also have the trivial inequality $C_N(\mu)\leq 1$.
\end{proof}

The technique in Proposition \ref{prop:UP2} has its limitations. The key step is \eqref{eq:2bound}, which controls $\|\hat\mu\|_{\ell^2_N}$ via  $\sigma_{S}(\Phi_{N-1})$, which is not uniformly bounded in $\Omega$. Hence, this argument is too wasteful and cannot be used to deduce Theorem \ref{thm:UPcomplex} where no assumptions on $\Omega$ are placed. In fact, this shows that it is impossible to obtain the theorem using an inequality that estimates $\|\Phi_{N-1} u\|_2$ in terms of $\|u\|_2$. In particular, large sieve inequalities, see \cite{montgomery1978analytic} for an overview, do not appear to be helpful for this problem.

The upper bounds given in Theorem \ref{thm:UPcomplex} and \ref{thm:UPreal} are independent of $N$, and the reader might wonder if they can be improved in the regime where $N$ is significantly larger than $S$. It is straightforward to see that the most optimistic decay we could expect is a $\sqrt N$ decay. 

\begin{proposition}
	For any $N\geq S$, we have $C_{N,S}\geq 1/\sqrt{N}$ and $C_{N,S,\R}\geq 1/\sqrt{N}$. 
\end{proposition}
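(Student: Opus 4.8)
The plan is simply to exhibit a single explicit measure that attains the claimed ratio. Take $\mu=\delta_0$, the unit Dirac at the origin. Since $\mu$ is non-zero and has exactly one atom, it belongs to $\calM_{S,\R}\subset\calM_S$ for every $S\geq 1$, so it is admissible in both suprema \eqref{eq:C} and \eqref{eq:Creal}. Its Fourier coefficients are $\hat\mu(k)=\int_\T e^{-2\pi i k\omega}\,d\delta_0(\omega)=1$ for every $k\in\Z$. Hence $|\hat\mu(0)|=1$ and $\|\hat\mu\|_{\ell^2_N}^2=\sum_{k=0}^{N-1}|\hat\mu(k)|^2=N$, so that $C_N(\mu)=1/\sqrt N$.

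Taking the supremum over $\mu\in\calM_S$ (respectively over $\mu\in\calM_{S,\R}$) then gives $C_{N,S}\geq C_N(\delta_0)=1/\sqrt N$ and $C_{N,S,\R}\geq C_N(\delta_0)=1/\sqrt N$, as claimed. (One could just as well use any single Dirac $\delta_{\omega}$, since $|\hat{\delta_\omega}(k)|=1$ for all $k$, which shows the extremal behaviour is not tied to the origin.) There is essentially no obstacle here; the only point worth recording is that this matches the elementary computation $C_{N,1}=1/\sqrt N$ noted earlier, and it certifies that the $N$-independent bounds of Theorems \ref{thm:UPcomplex} and \ref{thm:UPreal} cannot be improved beyond a $\sqrt N$ factor in the large-$N$ regime.
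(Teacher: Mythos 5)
Your proof is correct: since $\calM_S$ consists of measures with \emph{at most} $S$ atoms, the single Dirac $\delta_0$ (or any $\delta_\omega$) is admissible, all its Fourier coefficients have modulus one, and $C_N(\delta_0)=1/\sqrt N$ follows immediately; this also sits consistently with the earlier remark that $C_{N,1}=1/\sqrt N$ together with the inclusion $\calM_1\subset\calM_S$. The paper takes a genuinely different route: it fixes an arbitrary non-zero $u\in\C^S$ and an arbitrary $S$-point set $\Omega$, forms the shrinking measures $\mu_\epsilon=\sum_{j=1}^S u_j\delta_{\epsilon\omega_j}$, and shows $C_N(\mu_\epsilon)\to 1/\sqrt N$ as $\epsilon\to 0$, so the supremum is only approached in the limit rather than attained by an explicit extremizer. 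Your argument is shorter and exhibits an actual maximizing measure; the paper's argument buys the additional information that the bound $1/\sqrt N$ is already approached within the class of measures having exactly $S$ \emph{distinct} atoms and arbitrary prescribed amplitudes, i.e.\ the phenomenon is not an artifact of degenerating to a single atom but reflects what happens for any tight clump. For the proposition as stated, either proof is complete.
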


\begin{proof}
	Fix any non-zero $u\in\C^S$ and $\Omega$ of cardinality $S$. For any $\epsilon>0$, we define the measure $\mu_\epsilon=\sum_{j=1}^S u_j\delta_{\epsilon\omega_j}$. We see that $\hat{\mu_\epsilon}(k)\to \sum_{j=1}^S u_j$ for each $k\in\Z$ as $\epsilon\to 0$ and so
	\[
	\lim_{\epsilon\to 0} C_N(\mu_\epsilon) 
	=\lim_{\epsilon\to 0} \frac{|\hat{\mu_\epsilon}(0)|}{\|\hat\mu_\epsilon\|_{\ell^2_N}}
	=\frac{\big|\sum_{j=1}^S u_j\big|}{\sqrt{N} \big|\sum_{j=1}^S u_j\big|}
	=\frac{1}{\sqrt{N}}. 
	\]
	This argument also applies to when $u\in\R^S$.
\end{proof}

Obtaining an upper bound on $C_{N,S}$ that decays in $N$ (if this is even possible) appears to be a difficult problem and any solution should, in principle, address the number theoretic issues that might arise. Indeed, let $\mu\in\calM_S$ and assume that the amplitudes of $\mu$ are identically one. Then we have
\[
C_N(\mu)^2
=\frac{S^2}{\sum_{k=0}^{N-1}\Big|\sum_{j=1}^S e^{2\pi ik\omega_j} \Big|^2}
\]
The exponential sum $\big|\sum_{j=1}^S e^{2\pi ik\omega_j} \big|$ is $O(S)$ when its $S$ phases $\{e^{2\pi ik\omega_j}\}_{j=1}^S$ ``align" or occupy a small portion of the unit complex circle. In principle, there could exist very special $\Omega$ with particular number theoretic properties, such that the phases $\{e^{2\pi ik\omega_j}\}_{j=1}^S$ do not align for all $0\leq k\leq N-1$.

\section*{Acknowledgements} 

Weilin Li gratefully acknowledges support from the James C. Alexander prize and the AMS-Simons Travel Grant. Wenjing Liao is supported in part by the NSF grant DMS 1818751. All authors sincerely thank John J. Benedetto and C. Sinan G\"unt\"urk for their helpful suggestions and their encouragement. We also thank Dmitry Batenkov for his careful reading the first draft of this paper and for giving us insightful feedback.

\appendix

\section{Proof of lemmas and theorems}
\label{seclemmas}

\subsection{Proof of Lemma \ref{lemmapsi}}
\label{secapppsi}

Let $\Theta(U,\hat U) = \{\frac \pi 2 \ge \theta_1 \ge \theta_2 \ge \ldots \ge \theta_S\ge 0\}$ be the canonical angles between the subspaces spanned by the columns of $U$ and $\hat U$. Since ESPRIT is invariant to the choice of orthonormal basis, when we write $U$ and $\hat U$, we refer to the specific choice of bases for which their columns consist of the canonical vectors. In other words, we let $U = [u_1 \ u_2 \ \ldots \ u_S]$ and $\hat U = [\hat u_1 \ \hat u_2 \ \ldots \ \hat u_S]$, and assume
$\cos \theta_k = |u_k^* \hat u_k|, \quad\text{for } k =1 ,\ldots, S.$

We first derive several matrix perturbation bounds. The first one about $\Theta(U,\hat U)$ follows from the Wedin's theorem \cite{wedin1972perturbation,li1998relative}:

\begin{lemma}
	\label{lemmawedin}
	Fix positive integers $L,M,S$ such that \eqref{eq:L} holds. For any $\mu\in\calM_S$ and $\eta\in\C^{M+1}$ such that $2\|\calH(\eta)\|_2 \le  \sigma_S (\calH(y^0))$, we have
	$$\sin \theta_1 \le \frac{2\|\calH(\eta)\|_2}{\sigma_S (\calH(y^0))}.$$
\end{lemma}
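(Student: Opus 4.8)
\textbf{Proof proposal for Lemma \ref{lemmawedin}.} The plan is to run a one-sided Wedin-type $\sin\Theta$ argument, exploiting the fact that $\calH(y^0)$ has \emph{exactly} rank $S$. Recall from \eqref{eq:Vfact} and the rank conditions implied by \eqref{eq:L} that $\calH(y^0)=\Phi_L D_X\Phi_{M-L}^T$ has rank $S$, so the columns of $U$ span $\mathrm{Range}(\calH(y^0))$ and consequently $U_\perp^*\calH(y^0)=0$. Since $\theta_1$ is the largest canonical angle between $\mathrm{span}(U)$ and $\mathrm{span}(\hat U)$, we have $\sin\theta_1=\|U_\perp^*\hat U\|_2$, so it suffices to bound this spectral norm.

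The key steps I would carry out are as follows. First, using the SVD \eqref{svdnoisy}, write $\calH(y)\hat V=\hat U\hat\Sigma_S$ where $\hat\Sigma_S=\mathrm{diag}(\hat\sigma_1,\dots,\hat\sigma_S)$; this matrix is invertible because the hypothesis $2\|\calH(\eta)\|_2\le\sigma_S(\calH(y^0))$ together with Weyl's inequality forces $\hat\sigma_S=\sigma_S(\calH(y))\ge\sigma_S(\calH(y^0))-\|\calH(\eta)\|_2>0$. Hence $\hat U=\calH(y)\hat V\hat\Sigma_S^{-1}$. Second, substitute $\calH(y)=\calH(y^0)+\calH(\eta)$ and use $U_\perp^*\calH(y^0)=0$ to get the identity
\[
U_\perp^*\hat U=U_\perp^*\calH(\eta)\hat V\hat\Sigma_S^{-1},
\]
so that $\|U_\perp^*\hat U\|_2\le\|\calH(\eta)\|_2\,\|\hat\Sigma_S^{-1}\|_2=\|\calH(\eta)\|_2/\hat\sigma_S$. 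Third, invoke Weyl's inequality once more and the noise hypothesis to obtain $\hat\sigma_S\ge\sigma_S(\calH(y^0))-\|\calH(\eta)\|_2\ge\tfrac12\sigma_S(\calH(y^0))$. Combining the last two displays yields $\sin\theta_1=\|U_\perp^*\hat U\|_2\le 2\|\calH(\eta)\|_2/\sigma_S(\calH(y^0))$, which is the claimed bound; this is exactly the conclusion of Wedin's $\sin\Theta$ theorem specialized to an exactly rank-$S$ unperturbed matrix.

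I do not anticipate a substantive obstacle here. The only points needing care are bookkeeping ones: ensuring $\hat\Sigma_S$ is invertible (which is precisely why the hypothesis $2\|\calH(\eta)\|_2\le\sigma_S(\calH(y^0))$ is imposed, and which also explains why $\mathrm{rank}\,\calH(y)\ge S$), and recording that the identity $U_\perp^*\hat U=U_\perp^*\calH(\eta)\hat V\hat\Sigma_S^{-1}$ — hence the clean constant $2$ rather than the generic $2\sqrt 2$ from the two-residual form of Wedin's theorem — relies essentially on $\calH(y^0)$ having rank \emph{exactly} $S$, so that one of the two Wedin residuals vanishes. One should also note the identification $\sin\theta_1=\|U_\perp^*\hat U\|_2$ is consistent with the basis normalization $\cos\theta_k=|u_k^*\hat u_k|$ fixed at the start of this subsection.
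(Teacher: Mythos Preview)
Your argument is correct and is precisely the content of Wedin's $\sin\Theta$ theorem specialized to an exactly rank-$S$ unperturbed matrix, which is exactly what the paper invokes (it simply cites \cite{wedin1972perturbation,li1998relative} without spelling out the details). In other words, you have written out the proof that the paper defers to the reference; there is no substantive difference in approach.
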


Lemma \ref{lemmawedin} shows that when $U$ and $\hat U$ are chosen so that their columns align, the column spaces of $U$ and $\hat U$ are close when the noise is sufficiently small. This leads to the following perturbation bounds for $\|\hat U - U\|_2$ and $\|\hat \Psi -\Psi\|_2$, where the proofs can be found in Appendix \ref{seclemmas}.

\begin{lemma}
	\label{lemauper}
	Fix positive integers $L,M,S$ such that \eqref{eq:L} holds. For any $\mu\in\calM_S$ and $\eta\in\C^{M+1}$, if $2\|\calH(\eta)\|_2 \le  x_{\min} \sigma_{S}(\Phi_L)\sigma_{S}(\Phi_{M-L})$, then
	\[
	\|\hat U - U\|_2 
	\le \frac{2\sqrt{2S}\|\calH(\eta)\|_2}{x_{\min}\sigma_{S}(\Phi_L)\sigma_{S}(\Phi_{M-L})}.
	\]
\end{lemma}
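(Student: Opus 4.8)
The plan is to control $\|\hat U - U\|_2$ by the largest canonical angle $\theta_1 \in \Theta(U,\hat U)$ and then invoke Lemma~\ref{lemmawedin}. Recall from the reduction preceding Lemma~\ref{lemmapsi} that $U = [u_1\ \cdots\ u_S]$ and $\hat U = [\hat u_1\ \cdots\ \hat u_S]$ are taken to be \emph{principal} (canonical) orthonormal bases of $\mathrm{Range}(\calH(y^0))$ and $\mathrm{Range}(\calH(y))$; concretely, $u_j^*\hat u_k = 0$ for $j\neq k$, and after multiplying each $\hat u_k$ by a suitable unimodular scalar (which changes neither $\mathrm{Range}(\hat U)$ nor the ESPRIT output) we may assume $u_k^*\hat u_k = \cos\theta_k \ge 0$, where $\theta_1 \ge \cdots \ge \theta_S \ge 0$.

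First I would estimate the individual column differences. Since $u_k,\hat u_k$ are unit vectors with $u_k^*\hat u_k=\cos\theta_k\in[0,1]$, we have $\|u_k-\hat u_k\|_2^2 = 2-2\cos\theta_k = 2\sin^2\theta_k/(1+\cos\theta_k) \le 2\sin^2\theta_k \le 2\sin^2\theta_1$. Summing over the $S$ columns and using $\|\cdot\|_2\le\|\cdot\|_F$ gives
\[
\|\hat U - U\|_2 \le \|\hat U - U\|_F \le \sqrt{2S}\,\sin\theta_1 .
\]

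Next I would quantify $\sin\theta_1$ via Wedin's theorem. To apply Lemma~\ref{lemmawedin} I must check $2\|\calH(\eta)\|_2 \le \sigma_S(\calH(y^0))$, so I would first lower bound $\sigma_S(\calH(y^0))$ using the Vandermonde factorization \eqref{eq:Vfact}, $\calH(y^0) = \Phi_L D_X \Phi_{M-L}^T$. Since $\Phi_L$ has full column rank, $D_X$ is invertible, and $\Phi_{M-L}^T$ has full row rank, the smallest nonzero singular value is submultiplicative across these factors, giving $\sigma_S(\calH(y^0)) \ge \sigma_S(\Phi_L)\,\sigma_{\min}(D_X)\,\sigma_S(\Phi_{M-L}) = x_{\min}\,\sigma_S(\Phi_L)\,\sigma_S(\Phi_{M-L})$ (using $\sigma_S(\Phi_{M-L}^T)=\sigma_S(\Phi_{M-L})$ and $\sigma_{\min}(D_X)=x_{\min}$). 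Hence the hypothesis $2\|\calH(\eta)\|_2 \le x_{\min}\sigma_S(\Phi_L)\sigma_S(\Phi_{M-L})$ implies $2\|\calH(\eta)\|_2 \le \sigma_S(\calH(y^0))$, so Lemma~\ref{lemmawedin} yields $\sin\theta_1 \le 2\|\calH(\eta)\|_2/\sigma_S(\calH(y^0))$. Combining this with the previous display and the same lower bound on $\sigma_S(\calH(y^0))$ gives
\[
\|\hat U - U\|_2 \le \sqrt{2S}\cdot\frac{2\|\calH(\eta)\|_2}{\sigma_S(\calH(y^0))} \le \frac{2\sqrt{2S}\,\|\calH(\eta)\|_2}{x_{\min}\sigma_S(\Phi_L)\sigma_S(\Phi_{M-L})},
\]
which is the claim.

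The only genuinely delicate point is the column-wise step: $\|\hat U - U\|_2$ is \emph{not} controlled by the subspace angle for an arbitrary pair of orthonormal bases (an in-subspace rotation can keep it $\Theta(1)$ even when the subspaces coincide), so the argument hinges on using the principal-vector bases together with the biorthogonality relations $u_j^*\hat u_k = \delta_{jk}\cos\theta_k$ and the phase normalization $\cos\theta_k \ge 0$. The remaining ingredients — submultiplicativity of the smallest singular value and the invocation of Wedin's theorem — are routine.
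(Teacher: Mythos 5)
Your proposal is correct and follows essentially the same route as the paper: bound each column difference by $\|\hat u_k-u_k\|_2^2=2(1-\cos\theta_k)\le 2\sin^2\theta_1$, pass to the Frobenius norm to get $\|\hat U-U\|_2\le\sqrt{2S}\sin\theta_1$, and then apply Lemma~\ref{lemmawedin} together with $\sigma_S(\calH(y^0))\ge x_{\min}\sigma_S(\Phi_L)\sigma_S(\Phi_{M-L})$ from the factorization \eqref{eq:Vfact}. Your added care about the phase normalization of the principal vectors and the explicit submultiplicativity argument for $\sigma_S(\calH(y^0))$ only make explicit steps the paper leaves implicit.
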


\begin{proof}
	For $k=1,\ldots,S$,
	\begin{align*}
	\|\hat u_k - u_k\|_2^2 
	& =  4 \sin^2 \left( \frac{\theta_k}{2}\right) = 2(1-\cos \theta_k) \le 2(1-\cos^2\theta_k) \le 2\sin^2\theta_k.
	\end{align*}
	Using this inequality, see that 
	\[
	\|\hat U - U\|_2
	\leq \|\hat U- U\|_F
	= \(\sum_{k=1}^S \left \| \hat u_k - u_k\right\|^2_2\)^{1/2}
	\le \(2S \sin^2\theta_1\)^{1/2}
	= \sqrt{2S} \sin \theta_1.
	\]
	The proof is complete once we apply Lemma \ref{lemmawedin} and the inequality
	\[
	\|\calH(y^0)\|_2
	\geq x_{\min} \sigma_{S}(\Phi_L)\sigma_{S}(\Phi_{M-L}). 
	\]	
\end{proof}

Since $\Psi$ is computed from $U$, an estimate of $\|\widehat\Psi - \Psi\|_2 $ can be derived from $\|\hat U-U\|_2$ as follows.
\begin{lemma}
	\label{lemauper2}
	Fix positive integers $L,M,S$ such that \eqref{eq:L} holds. For any $\mu\in\calM_S$ and $\eta\in\C^{M+1}$ such that $\|\hat U-U\|_2\leq \sigma_{S}(U_0)/2$, one has
	\[
	\|\widehat\Psi - \Psi\|_2 
	\le  \frac{7\|\hat U-U\|_2}{\sigma_S^2(U_0)}.
	\]	
\end{lemma}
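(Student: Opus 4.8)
The plan is to express $\widehat\Psi-\Psi$ through the perturbation of the pseudoinverse $U_0^{\dagger}$ and of the row-submatrix $U_1$, thereby reducing the estimate to a standard pseudoinverse perturbation identity. First I would record elementary facts: since $U_0,U_1$ (resp. $\hat U_0,\hat U_1$) are obtained from $U$ (resp. $\hat U$) by deleting rows, and deleting rows does not increase the spectral norm,
$$\|\hat U_0-U_0\|_2\le\|\hat U-U\|_2,\qquad \|\hat U_1-U_1\|_2\le\|\hat U-U\|_2.$$
Moreover $U$ and $\hat U$ have orthonormal columns, so $\|U\|_2=\|\hat U\|_2=1$, and hence $\|U_0\|_2,\|U_1\|_2,\|\hat U_0\|_2,\|\hat U_1\|_2\le1$; in particular $\sigma_S(U_0)\le1$, which will let me replace $1/\sigma_S(U_0)$ by $1/\sigma_S^2(U_0)$ for free at the end.

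Next I would control $\sigma_S(\hat U_0)$. By Weyl's inequality and the hypothesis $\|\hat U-U\|_2\le\sigma_S(U_0)/2$,
$$\sigma_S(\hat U_0)\ge\sigma_S(U_0)-\|\hat U_0-U_0\|_2\ge\sigma_S(U_0)-\|\hat U-U\|_2\ge\tfrac12\sigma_S(U_0)>0,$$
so $\hat U_0$ has full column rank with $\|\hat U_0^{\dagger}\|_2\le 2/\sigma_S(U_0)$, while $\|U_0^{\dagger}\|_2=1/\sigma_S(U_0)$. I would then invoke the Wedin pseudoinverse perturbation identity \cite{wedin1972perturbation}: for matrices $A$ and $B=A+E$ of full column rank (so $B^{\dagger}B=I$), one has $B^{\dagger}-A^{\dagger}=-B^{\dagger}EA^{\dagger}+(B^{*}B)^{-1}E^{*}(I-AA^{\dagger})$; since $\|(B^{*}B)^{-1}\|_2=\|B^{\dagger}\|_2^{2}$ and $I-AA^{\dagger}$ is an orthogonal projector, this gives $\|B^{\dagger}-A^{\dagger}\|_2\le\|B^{\dagger}\|_2\|E\|_2\bigl(\|A^{\dagger}\|_2+\|B^{\dagger}\|_2\bigr)$. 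Applying this with $A=U_0$ and $B=\hat U_0$ yields
$$\|\hat U_0^{\dagger}-U_0^{\dagger}\|_2\le\frac{2}{\sigma_S(U_0)}\cdot\|\hat U-U\|_2\cdot\frac{3}{\sigma_S(U_0)}=\frac{6\,\|\hat U-U\|_2}{\sigma_S^2(U_0)}.$$

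Finally I would decompose $\widehat\Psi-\Psi=\hat U_0^{\dagger}\hat U_1-U_0^{\dagger}U_1=(\hat U_0^{\dagger}-U_0^{\dagger})\hat U_1+U_0^{\dagger}(\hat U_1-U_1)$ and estimate
$$\|\widehat\Psi-\Psi\|_2\le\|\hat U_0^{\dagger}-U_0^{\dagger}\|_2\,\|\hat U_1\|_2+\|U_0^{\dagger}\|_2\,\|\hat U_1-U_1\|_2\le\frac{6\,\|\hat U-U\|_2}{\sigma_S^2(U_0)}+\frac{\|\hat U-U\|_2}{\sigma_S(U_0)}\le\frac{7\,\|\hat U-U\|_2}{\sigma_S^2(U_0)},$$
using $\|\hat U_1\|_2\le1$ and $\sigma_S(U_0)\le1$ in the last two inequalities. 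The proof is otherwise routine; the only delicate point is the bookkeeping of constants. The pseudoinverse estimate alone already costs $6/\sigma_S^2(U_0)$, so I must split $\widehat\Psi-\Psi$ so that the pseudoinverse difference is multiplied by the well-controlled factor $\|\hat U_1\|_2\le1$ and the remaining term carries $\|U_0^{\dagger}\|_2=1/\sigma_S(U_0)$ rather than the inflated bound $2/\sigma_S(U_0)$ available only for $\hat U_0^{\dagger}$; this is what brings the final constant down from $8$ to the stated $7$.
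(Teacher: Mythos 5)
Your proof is correct and follows essentially the same route as the paper: the same splitting $\widehat\Psi-\Psi=(\hat U_0^{\dagger}-U_0^{\dagger})\hat U_1+U_0^{\dagger}(\hat U_1-U_1)$, the same intermediate bound $\|\hat U_0^{\dagger}-U_0^{\dagger}\|_2\le 6\|\hat U-U\|_2/\sigma_S^2(U_0)$, and the same use of $\sigma_S(U_0)\le 1$ to absorb the term $\|U_0^\dagger\|_2\|\hat U_1-U_1\|_2$ and reach the constant $7$. The only cosmetic difference is that you derive the pseudoinverse perturbation bound from Wedin's explicit identity, whereas the paper cites a packaged result (Hansen's Theorem 3.2) that yields the identical constant under the same hypothesis $\|\hat U_0-U_0\|_2\le\sigma_S(U_0)/2$.
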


\begin{proof}
	By triangle inequalities, we obtain
	\begin{align*}
	\|\hat\Psi - \Psi \|_2 
	&=   \| (\hat U_0^{\dagger}  - U_0^{\dagger}) \hat U_1 + U_0^{\dagger}( \hat U_1 -   U_1) \|_2 \\
	&\le  \| \hat U_0^{\dagger}  - U_0^{\dagger}\|_2  \|\hat U_1\|_2 + \|U_0^{\dagger}\|_2 \| \hat U_1 -   U_1 \|_2 \\
	&\leq \| \hat U_0^{\dagger}  - U_0^{\dagger}\|_2  + \|U_0^{\dagger}\|_2 \| \hat U -   U \|_2,
	\end{align*}
	where for the last inequality, we used that $\hat U_1$ is the submatrix containing the last $L$ rows of $\hat U$ and that $\hat U$ has orthonormal columns, so $\|\hat U_1\|_2 \le \|\hat U \|_2 = 1$ and $\|\hat U_1-U_1\|_2\leq \| \hat U -   U \|_2$. Note that by assumption, we have
	\[
	\|\hat U_0-U_0\|_2
	\leq \|\hat U-U\|_2
	\leq \frac{1}{2\sigma_{S}(U_0)}. 
	\]
	This enables us to apply \cite[Theorem 3.2]{hansen1987truncatedsvd}, and so
	\[
	\| \hat U_0^{\dagger}  - U_0^{\dagger}\|_2  \le  \frac{3\|\hat U_0 - U_0\|_2}{\sigma_S(U_0) \big(\sigma_S(U_0) - \|\hat U_0 - U_0\|_2\big)}
	\leq \frac{6\|\hat U - U\|_2}{\sigma_{S}^2(U_0)}.
	\]
	Therefore, we have that
	\begin{align*}
	\|\hat\Psi - \Psi \|_2  
	\le 
	\(\frac{6}{\sigma_S^2(U_0)}+\frac{1}{\sigma_{S}(U_0)}\) \|\hat U - U\|_2 
	\leq \frac{7\|\hat U - U\|_2}{\sigma_{S}^2(U_0)}.
	\end{align*}

\end{proof}

	Combining Lemma \ref{lemauper} and Lemma \ref{lemauper2} yields Lemma \ref{lemmapsi}.

\subsection{Proof of Equation \eqref{lemmamatching3}}
\label{appeq}

\begin{proof}
	If $\md(\Psi,\hat\Psi)\geq 1$, we have the trivial inequality 
	\[
	\md(\Omega,\hat\Omega)\leq \frac{1}{2}
	\leq \frac{1}{2}\md(\Psi,\hat\Psi).
	\]
	If ${\rm md}(\Psi,\hat\Psi)\leq 1$, we have 
	\begin{align}
	\label{eq:sine}
	2 \pi\,  {\rm md}(\Omega,\hat\Omega)
	\leq 2\pi \, \sin^{-1}\big({\rm md}(\Psi,\hat\Psi) \big)\leq \frac{2}{\pi}\, {\rm md}(\Psi,\hat\Psi). 
	\end{align}
	The first inequality is a consequence of the law of sines, see Figure \ref{fig:lawsine}. The second inequality follows by observing that the function $f(t)=\sin(\pi t/2)$ is concave on the domain $0\leq t\leq 1$ and $f(0)=0$ and $f(1)=1$, so $f^{-1}(u)=2\sin^{-1}(u)/\pi\leq u$ for all $0\leq u\leq 1$. Thus, we have proved Equation \eqref{lemmamatching3}.
	
	\begin{SCfigure}
			\centering
		\begin{tikzpicture}[scale=0.75]
		\draw[thick] (5.1,0)  arc[radius = 5cm, start angle= 0, end angle= 60];
		\draw[thick] (0,0) -- (7,5) -- (5,1) -- (0,0);
		\filldraw 
		(0,0) circle (1pt) node[align=right ] {$0$ \hspace{1em} } --
		(5,1) circle (1pt) node[align=right ] {\hspace{3em} $e^{2\pi i\omega_j}$} -- 
		(7,5) circle (1pt) node[align=right]{\hspace{2.5em} $\hat\lambda_{\psi(j)}$};
		\filldraw 
		(4.13,2.95) circle (1pt) node[align=right]{$e^{2\pi i\hat \omega_{\psi(j)}}$ \hspace{4em}};
		\draw (1.4,0.6) node {$\theta_j$};
		\draw (6.2,4) node {$\alpha_j$};
		\draw (3,0.3) node {$1$};
		\draw (6.3,2.8) node {$\epsilon_j$};
		\end{tikzpicture}
		\caption{Geometric figure corresponding to inequality \eqref{eq:sine}. Here, we define $\theta_j=2\pi |\omega_j-\hat \omega_{\psi(j)}|_\T$ and $\epsilon_j=|e^{2\pi i\omega_j}-\hat\lambda_{\psi(j)}|$. Note that $\epsilon_j\leq {\rm md}(\Psi,\hat\Psi)$. By the law of sines, we have $\epsilon_j/\sin(\theta_j)=1/\sin(\alpha_j)$. This implies $\sin(\theta_j)= \epsilon_j\sin(\alpha_j)\leq \epsilon_j$.}
		\label{fig:lawsine}
	\end{SCfigure}
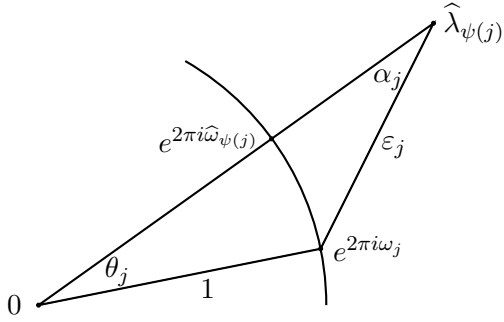
\end{proof}

\subsection{Proof of Lemma \ref{lemmamatching}}
\label{appmd}

\subsubsection{Part (a)}
\begin{proof}
	Recall from Equation \eqref{eq:P2} that $\Psi$ is diagonalizable by an invertible matrix $P^{-1}$. Thanks to \eqref{lemmamatching3} and the Bauer-Fike theorem, see \cite{bauer1960norms} and \cite[Chapter IV, Theorem 3.3]{stewart1990matrix}, we have
	\[
	{\rm md}(\Omega,\hat\Omega) 
	\leq \frac{1}{2}{\rm md}(\Psi,\hat\Psi) 
	\le \frac{1}{2} (2S-1) \kappa(P^{-1}) \|\hat\Psi -\Psi\|_2.
	\]
	To control the conditioning of $P$, we use \eqref{eq:P} to see that $P=\Phi_L^\dagger U$ and so
	\begin{equation}
	\label{eq:condP}
	\kappa(P^{-1})
	=\|P\|_2\|P^{-1}\|_2
	\leq \|\Phi_L^\dagger\|_2\|\Phi_L\|_2
	=\kappa(\Phi_L)
	\leq \frac{\|\Phi_L\|_F}{\sigma_{S}(\Phi_L)}
	\leq \frac{\sqrt{(L+1)S}}{\sigma_{S}(\Phi_L)}. 
	\end{equation}
	Combining the two inequalities together completes the proof of Lemma \ref{lemmamatching} Part (a). 
\end{proof}

\subsubsection{Part (b)}

	The argument is identical to the one in \cite[Chaper IV, Theorem 2.3]{stewart1990matrix} holds for general matrices. For the reader's convenience, we reproduce the argument in the context of our setting.
	
\begin{proof}
	Recall from \eqref{eq:P2} that there is an invertible matrix $P$ such that $P\Psi P^{-1}=D_{\Omega}$. For convenience, let $E:=\hat\Psi-\Psi$ so that $\hat\Psi =\Psi+E$. We consider the matrix 
	\[
	J
	:=P\hat\Psi P^{-1}
	=D_\Omega+Q,
	\quad\text{where}\quad 
	Q:=PEP^{-1}. 
	\]
	Note that $J$ has the same eigenvalues as that of $\hat\Psi$. Define the quantity 
	\[
	\epsilon:=
	\frac{\sqrt{(L+1)S}\|E\|_2}{\sigma_{S}(\Phi_L)}.
	\]
	Note that we have 
	\[
	{\epsilon
	=\frac{\|\Phi_L\|_F \|E\|_2}{\sigma_{S}(\Phi_L)}
	}
	\geq \kappa(\Phi_L)\|E\|_2
	\geq \kappa(P)\|E\|_2
	\geq \|P\|_2\|E\|_2\|P^{-1}\|_2,
	\] 
	where we used the inequality $\kappa(P)\leq \kappa(\Phi_L)$ which was established in \eqref{eq:condP}. Consequently, we have $|Q_{j,j}|\leq \|E\|_2\leq \epsilon$ and $|Q_{j,k}|\leq \epsilon$. 	
	
	From here onwards, fix an index $1\leq k\leq S$. We introduce a parameter $\alpha>0$ that will be chosen later. Let $A$ be the diagonal matrix whose $k$-th diagonal is $\alpha$ and the remaining diagonal entries are $1$. If we define $J_\alpha:=AJA^{-1}$, then the eigenvalues of $\hat\Psi$, $J$, and $J_\alpha$ coincide. A direct calculation shows that this conjugation by $A$ multiplies the $k$-th column of $J$ by $\alpha^{-1}$, and multiplies the $k$-th row of $J$ by $\alpha$. Explicitly, we have
	\[
	J_\alpha
	:= AJA^{-1}
	=D_\Omega +
	\begin{pmatrix}
	Q_{1,1} & & &\alpha^{-1}Q_{1,k} \\
	&\ddots  & &\vdots \\
	& &Q_{k-1,k-1} &\alpha^{-1} Q_{k-1,k} \\
	\alpha Q_{k,1} &\cdots &\alpha Q_{k,k-1}  &Q_{k,k} &\alpha  Q_{k,k+1} &\cdots &\alpha Q_{k,S}\\
	& & &\alpha^{-1} Q_{k,k+1} &Q_{k+1,k+1} \\
	& & &\vdots  & &\ddots \\
	& & &\alpha^{-1} Q_{S,k} & & &Q_{S,S}
	\end{pmatrix}
	\]
	This equation tells us that the $k$-th Gershgorin disk of $J_\alpha$ is centered at $e^{-2\pi i\omega_k}+Q_{k,k}$ with radius bouned above by $(S-1)\alpha \epsilon$, while the $j$-th Gershgorin disk for $j\not =k$ is centered at $e^{-2\pi i\omega_j}+Q_{j,j}$ with radius bounded by $\alpha^{-1}\epsilon+(S-2)\epsilon$. This implies that, the $k$-th disk is contained in the disk centered at $e^{-2\pi i\omega_k}$ and radius $\epsilon+(S-1)\alpha\epsilon$, while the remaining disks are contained in disks centered at $e^{-2\pi i\omega_j}$ with radius $\alpha^{-1}\epsilon+(S-2)\epsilon+\epsilon$ for $j \neq k$. Thus, a sufficient condition for the $k$-th Gershgorin disk to be disjoint from the other disks is that 
	\begin{equation}
	\label{eq:diskcondition}
	\text{for all } j\not=k, \quad |e^{-2 \pi i\omega_k}-e^{-2 \pi i\omega_j}|
	\geq (S-1)\alpha \epsilon +\alpha^{-1}\epsilon+S\epsilon. 
	\end{equation}
	In particular, using the inequality $|1-e^{2\pi it}|\geq 4|t|$ for all $|t|\leq 1/2$, we see that \eqref{eq:diskcondition} holds provided that 
	\begin{equation}
	\label{eq:diskcondition2}
	4\Delta \geq (S-1)\alpha \epsilon+\alpha^{-1}\epsilon+S\epsilon,
	\end{equation}
	where we recall that $\Delta:=\min_{j \neq k}|\omega_j-\omega_k|_\T$. From now onwards, we pick $\alpha =\epsilon/\Delta$ so that the above condition is equivalent to
	\[
	\frac{(S-1)\epsilon^2}{\Delta^2}+\frac{S\epsilon}{\Delta} \leq 3. 
	\]
	We see that this inequality is satisfied because assumption \eqref{eq:smallnoise} implies $\epsilon/\Delta\leq 1/S$. 
	
	There is an eigenvalue $\hat\lambda_k$ of $\hat\Psi$ contained in the $k$-th Gershgorin disk, which has center $e^{2\pi i\omega_k}+Q_{k,k}$ and radius bounded by $(S-1)\epsilon^2/\Delta$. Since $|Q_{k,k}|\leq \|E\|_2$, we have
	\begin{equation}
	\label{eq:diskcondition3}
	|e^{2\pi i\omega_k}-\hat\lambda_k|
	\leq \|\Psi-\hat\Psi\|_2+\frac{(L+1)S^2}{\sigma_{S}^2(\Phi_L)\Delta}\|\Psi-\hat\Psi\|^2_2. 
	\end{equation}
	This shows that for the $k$-th eigenvalue of $\Psi$, there is a $\hat \lambda_k$ that satisfies this inequality. 
	
	Furthermore, we check that the assumption \eqref{eq:smallnoise} together with \eqref{eq:diskcondition3} implies that
	\[
	|e^{2\pi i\omega_k}-\hat\lambda_k|
	\leq 2\|\Psi-\hat\Psi\|_2. 
	\]
	In particular, the assumption \eqref{eq:smallnoise} yields
	\begin{align*}
	|e^{2\pi i\omega_k}-\hat\lambda_k|
	\leq 2\|\Psi-\hat\Psi\|_2
	\leq \frac{2\sigma_{S}^2(\Phi_L)\Delta}{S^2(L+1)} 
	=\frac{2\Delta}{S}\frac{\sigma_{S}^2(\Phi_L)}{\|\Phi\|_F^2}
	\leq \frac{1}{2S} \min_{j\not=k} |e^{-2\pi i\omega_k}-e^{-2\pi i\omega_j}|. 
	\end{align*}
	This shows that the best matching between the eigenvalues of $\Psi$ and $\hat\Psi$ satisfy
	\[
	{\rm md}(\Psi,\hat\Psi)
	\leq 2\|\Psi-\hat\Psi\|_2. 
	\]
	Using \eqref{lemmamatching3} completes the proof.
\end{proof}

\subsection{Proof of Lemma \ref{lemmauncertainty}}
\label{appu0}

\begin{proof}
	We first write
	$$
	U 
	= \begin{bmatrix} w_1^* \\ U_1 \end{bmatrix}
	= \begin{bmatrix} U_0 \\ w_0^* \end{bmatrix}.
	$$
	Since $U^* U = I_S$, we have $U_j^* U_j = I_S - w_j w_j^*$ for $j=0,1$. 
	This implies $w_j$ is an eigenvector of $U_j^* U_j$ associated with eigenvalue $1-\|w_j\|_2^2$, and the other $S-1$ eigenvalues of $U_j^* U_j$ are $1$ with eigenvectors perpendicular to $w_j$. This shows that for $j=0,1$, we have 
	\begin{equation}
	\label{eq:sigS}
	\sigma_1 (U_j)= \sigma_2 (U_j)= \cdots =\sigma_{S-1} (U_j)= 1 \quad\text{and}\quad 
	\sigma_{S} (U_j)= \sqrt{1-\|w_j\|_2^2}. 
	\end{equation}
	It suffices to provide an upper bound for $\|w_j\|_2$ that is better than the trivial one $\|w_j\|_2\leq 1$. This  can be loosely interpreted as the amount of information contained in the first or last row of $U$. 
	
	The column space of $\Phi_L$ is a $S$-dimensional complex subspace of $\C^{L+1}$, which we denote by $Z$. Let $e_j$ be the $j$-th canonical basis vector in $\C^{L+1}$, for $0\leq j\leq L$. By definition, the columns of $U$ is an orthonormal basis for the range of $\Phi_L$. The norm of $w_0$ (respectively $w_1$) is the length of projection of $e_{L}$ (respectively $e_0$) onto $Z$. Hence,
	\[
	\|w_1\|_2
	=\max_{z\in Z,\ z\not=0} \frac{|e_0^* z|}{\|z\|_2}, 
	\quad\text{and}\quad  
	\|w_0\|_2
	=\max_{z\in Z,\ z\not=0} \frac{|e_L^* z|}{\|z\|_2}.  
	\]
	For each $z\in Z$, there are coefficients $c_j=c_j(z)$ such that $z =\sum_{j=1}^S c_j \phi_L(\omega_j)$ where $\phi_L(\omega_j)$ is the $j$-th column of $\Phi_L$. Another way of looking at this is to define the measure $\mu=\sum_{j=1}^S c_j\delta_{\omega_j}$, and we readily see that $z_k=\hat\mu(k)$ for $k=0,1,\dots,L$. Using this observation, we see that 
	\begin{equation}
	\label{eq:w1}
	\|w_1\|^2_2
	=\sup_{\supp(\mu)\subset\Omega} \frac{|\hat\mu(0)|^2}{\sum_{k=0}^{L} |\hat\mu(k)|^2}.
	\end{equation}
	We also have that,
	\begin{equation}
	\label{eq:w0} 
	\|w_0\|^2_2
	=\sup_{\supp(\mu)\subset\Omega} \frac{|\hat\mu(L)|^2}{\sum_{k=0}^{L} |\hat\mu(k)|^2}
	=\sup_{\supp(\mu)\subset-\Omega}\frac{|\hat\mu(0)|^2}{\sum_{k=0}^{L} |\hat\mu(k)|^2},
	\end{equation}
	where second equality is a consequence of the following observation: if $\mu=\sum_{j=1}^S u_j\delta_{\omega_j}$, then we define $\nu=\sum_{j=1}^S u_j\delta_{-\omega_j}e^{2\pi iL\omega_j}$ and check that $\hat\mu(k)=\hat\nu(L-k)$ for all $k\in\Z$.
	
	The above identities relate $\sigma_S(U_0)$ and $\sigma_S(U_1)$ to the quantities defined on the right hand side of equations \eqref{eq:w0} and \eqref{eq:w1}, which are interpreted as the concentration of $\hat\mu$ in its zero-th Fourier coefficient relative to the total energy contained in its first $L+1$ Fourier coefficients. We estimate these concentrations using two different methods. 
	
	One approach is to control these in terms of $\Phi_L(\Omega)$. By application of Cauchy-Schwarz, 
	\[
	\sup_{\supp(\mu)\subset \Omega}\frac{|\hat\mu(0)|^2}{\sum_{k=0}^{L} |\hat\mu(k)|^2}
	\leq \frac{S}{\sigma_S^2(\Phi_L(\Omega))}.
	\]
	We obtain a similar inequality  for when $\mu$ is supported in $-\Omega$ because $\sigma_{S}(\Phi_L(\Omega))=\sigma_{S}(\Phi_L(-\Omega))$. Using this with the above inequalities \eqref{eq:sigS}, \eqref{eq:w1} and \eqref{eq:w0} yields, 
	\[
	\min(\sigma_{S}^2(U_0),\sigma_{S}^2(U_1))
	\geq 1-\frac{S}{\sigma_S^2(\Phi_L(\Omega))}.
	\]
	
	Another approach is to interpret the right hand sides of \eqref{eq:w1} and \eqref{eq:w0} in terms of an uncertainty principle. It immediately follows from Theorem \ref{thm:UPcomplex} and the previous inequalities \eqref{eq:sigS}, \eqref{eq:w1} and \eqref{eq:w0} that
	\[
	\min(\sigma_{S}^2(U_0),\sigma_{S}^2(U_1))
	\geq 4^{-S}.
	\]
\end{proof}

\subsection{Proof of Theorem \ref{thm1}}
\label{secproofthm1}

\begin{proof}
	\begin{enumerate}[(a)]
		\item 
		Note that the noise assumption \eqref{thm1con} guarantees that $\|U -\hat U\|_2 \le \sigma_S(U_0)/2$ so the assumption in Lemma \ref{lemauper2} holds. Thus, we combine Lemma \ref{lemmamatching} Part (a), Lemma \ref{lemauper}, and Lemma \ref{lemauper2} to conclude that
		\begin{align*}
		\md(\Omega,\hat\Omega)
		&\leq \frac{20 S^2\sqrt{L+1}}{x_{\min}\sigma_{S}^2(\Phi_L)\sigma_{S}(\Phi_{M-L})\sigma_{S}^2(U_0)}\|\calH(\eta)\|_2. 
		\end{align*}
		\item 
		We readily check that the noise assumption \eqref{thm1con2} implies the conditions of Lemma \ref{lemmamatching} Part (b), Lemma \ref{lemauper}, and \ref{lemauper2}, are satisfied. Hence, we combine the aforementioned lemmas with Lemma \ref{lemmamatching} to complete the proof.
	\end{enumerate}	
	
\end{proof}

\subsection{Proof of Theorem \ref{thmesprit}}
\label{secproofthmesprit}

\begin{proof}
According to Theorem \ref{thmsin}, if $\Omega$ satisfies Assumption \ref{def:clumps} with parameters $(M/2,S,\alpha,\beta)$  for some $\alpha>0$ and $\beta$ satisfying \eqref{eq:sep2}, then there exist explicit constants $c_a:=C_a(\lambda_a,M/2)$ such that 
$$
\hspace{-.5em} \sigma_{\min}(\Phi_{M/2})
\geq \sqrt{\frac M 2}\(\sum_{a=1}^A \big( c_a \alpha^{-\lambda_a+1} \big)^2 \)^{-\frac{1}{2}}. 
$$
Combining the inequality above, Lemma \ref{lemmauncertainty} and Theorem \ref{thm1} gives rise to Theorem \ref{thmesprit}.

\end{proof}

\subsection{Proof of Theorem \ref{thm:sep}}

\begin{proof}
	The proof amounts to checking that the assumptions of the first statement in Theorem \ref{thmesprit} hold, and then using known results for $\sigma_{S}(\Phi_L)$. First, the assumptions on $M,L,S$ imply that $S\leq L\leq M-L+1$. Second, the assumption $\Delta\geq C/L$ together with inequality \eqref{eq:moitra} yields
	\begin{equation}
	\label{eq:phiL}
	\sigma_{S}^2(\Phi_{M-L})
	=\sigma_{S}^2(\Phi_{L})
	\geq \frac{C-1}{C} \, L.
	\end{equation}
	Combining Lemma \ref{lemmauncertainty} and \eqref{eq:phiL}, we have that 
	\begin{equation}
	\label{eq:U}
	\min\big(\sigma_{S}^2(U_0),\sigma_{S}^2(U_1)\big)
	\geq 1-\frac{S}{\sigma_{S}^2(\Phi_L)}
	\geq 1-\frac{C}{C-1}\, \frac{S}{L}. 
	\end{equation}
	Note that the term on the right hand side is strictly positive due to the assumptions that $M=2L\geq 4S$ and $C>2$. The noise assumption \eqref{eq:noise} together with inequalities \eqref{eq:phiL} and \eqref{eq:U} imply that the assumptions of the first statement of Theorem \ref{thm1} are satisfied, which gives us the inequality,
	\[
	\text{md}(\Omega,\hat\Omega)
	\leq \frac{20\, S^2 \sqrt{L+1}\,\|\calH(\eta)\|_2}{x_{\min}\sigma_S^2(U_0) \sigma_S^3(\Phi_L)}.
	\]
	Inserting \eqref{eq:phiL}, and \eqref{eq:U} into this inequality completes the proof of the theorem. 
\end{proof}

\bibliographystyle{plain}
\bibliography{SRlimitFourierbib}

\end{document}